\documentclass[a4paper,UKenglish,cleveref, autoref, thm-restate, numberwithinsect]{lipics-v2021}
\usepackage{float}
\usepackage{iftex}
\usepackage{xcolor}

\ifpdf
  \usepackage{underscore}         
  \usepackage[T1]{fontenc}        
\else
  \usepackage{breakurl}           
\fi

\DeclareMathAlphabet{\mathbbm}{U}{bbm}{m}{n}          
\newcommand{\IN}{\mathbbm{N}}     
\definecolor{mycolor}{RGB}{200,0,120}
\usepackage{amsthm} 
\usepackage{diagbox}
\usepackage{amssymb}
\usepackage{amsfonts}
\usepackage{graphicx}
\usepackage{multicol}
\usepackage[matrix,arrow,cmtip,curve]{xy}
\usepackage{amsmath}
\usepackage{tikz,tikz-3dplot,tikz-cd}
\usepackage{caption}
\usetikzlibrary{arrows.meta, positioning} 
\usetikzlibrary{fit, arrows, automata, shapes, calc, fadings,shapes.geometric}
\tikzstyle{startstop} = [rectangle, rounded corners, 
minimum width=3cm, 
minimum height=1cm,
text centered, 
draw=black, 
fill=red!30]

\tikzstyle{io} = [trapezium, 
trapezium stretches=true, 
trapezium left angle=70, 
trapezium right angle=110, 
minimum width=3cm, 
minimum height=1cm, text centered, 
draw=black, fill=blue!30]

\tikzstyle{process} = [rectangle, 
minimum width=3cm, 
minimum height=1cm, 
text centered, 
text width=3cm, 
draw=black, 
fill=orange!30]

\tikzstyle{decision} = [diamond, 
minimum width=3cm, 
minimum height=1cm, 
text centered, 
draw=black, 
fill=green!30]
\tikzstyle{arrow} = [thick,->,>=stealth]
\tikzset{->, auto, >=stealth', font=\small}
\tikzset{state/.style={shape=circle, draw, fill=white, initial text=,
    inner sep=.5mm, minimum size=1.5mm}}

\tikzset{accepting/.style=accepting by arrow}
\tikzset{state with output/.style={shape=rectangle split, rectangle
    split parts=2, draw, fill=white,
    initial text=, inner sep=1mm}}
    
\usepackage{tikz} 
\newcommand*{\Path}{\mathrm{Path}}
\newcommand*{\cell}{\mathrm{Cell}}
\newcommand*\exec{%
  \raisebox{1pt}{%
    \begin{tikzpicture}[x=.8ex,y=1ex,-]
      \draw (0,0) -- (1,0) -- (1,1) -- (2,1);
    \end{tikzpicture}}}
\newcommand*\subsu{\sqsubseteq}
\newcommand*\unsubsu{\sqsupseteq}
\newcommand*\lan[1]{\mathcal{L}_{#1}}
\newcommand*\bis[1]{\approx_{#1}}
\renewcommand{\bis}[1]{\mathrel{\,		      
	\raisebox{.3ex}{$\underline{\makebox[.7em]{$\leftrightarrow$}}$}
                  \,_{#1}}}
\newcommand{\ininc}{\hookrightarrow }
\newcommand{\congruence}{similarity} 
\newcommand{\Congruence}{Similarity} 
\newcommand{\congruent}{similar} 
\renewcommand{\simeq}{\leftrightarrows}
\renewcommand{\triangle}{\Lambda}
\newcommand{\plat}[1]{\raisebox{0pt}[0pt][0pt]{#1}} 

\title{Bisimulations and Modal Logics\texorpdfstring{\newline}{} for Higher Dimensional Automata}
\titlerunning{Bisimulations and Modal Logics for Higher Dimensional Automata}

\author{Safa Zouari}{Norwegian University of Science and Technology Gjøvik, Norway \and \url{https://www.safazouari.com}}{safa.zouari@ntnu.no}{https://orcid.org/0009-0005-4882-5524}{Supported by a PhD fellowship at the Department of Information Security and Communication Technology (IIK), Norwegian University of Science and Technology (NTNU).}
\author{Rob {van Glabbeek}}{School of Informatics, University of Edinburgh, Scotland \and School of Computer Science and Engineering, University of New South Wales, Sydney, Australia \and \url{https://theory.stanford.edu/~rvg/}}{rvg@cs.stanford.edu}{https://orcid.org/0000-0003-4712-7423}{Supported by Royal Society Wolfson Fellowship RSWF\textbackslash R1\textbackslash 221008}
\author{Krzysztof Ziemiański}{Institute of Mathematics, University of Warsaw, Poland}{ziemians@mimuw.edu.pl}{https://orcid.org/0000-0001-7695-4028}{}

\authorrunning{S. Zouari, R.J. van Glabbeek \&  K. Ziemiański}
\Copyright{Safa Zouari, Rob van Glabbeek, and Krzysztof Ziemiański}

\ccsdesc[500]{Theory of computation~Concurrency}
\ccsdesc[500]{Theory of computation~Modal and temporal logics}

\keywords{higher-dimensional automata, bisimilarity, history-preserving bisimulation, modal logic, concurrency theory}

\nolinenumbers

\EventEditors{Ana Sokolova and Patrick Totzke}
\EventNoEds{2}
\EventLongTitle{37th International Conference on Concurrency Theory (CONCUR 2026)}
\EventShortTitle{CONCUR 2026}
\ArticleNo{49}

\begin{document}

\maketitle

\begin{abstract}
    Higher-Dimensional Automata (HDAs) provide a geometric model of true concurrency. While hereditary history-preserving (hhp) bisimilarity is the finest behavioural equivalence in van Glabbeek’s spectrum, no modal logic has previously characterised it on HDAs. We introduce several new intermediate equivalences that sit strictly between ST- and hhp-bisimilarity.
We show how separating {\congruence} and subsumption of paths leads to a clean formulation of these equivalences, and we present a modal logic that characterises hhp-bisimilarity. Natural fragments characterise ST-bisimilarity and the intermediate notions.
\end{abstract}

\section{Introduction}
Higher-dimensional automata (HDAs) \cite{PrattCG,vG91} provide a geometric model of true concurrency:
$n$-dimensional cells represent the concurrent execution of $n$ independent events,
without reducing concurrency to interleaving.
HDAs occupy a central position in the spectrum of concurrency models:
Petri nets  \cite{Reisig85}, event structures \cite{winskel1986event}, configuration structures \cite{van1995configuration}, and asynchronous systems
\cite{bednarczyk1989categories} embed into HDAs up to hereditary history-preserving (hhp) bisimilarity~\cite{VANGLABBEEK2006265}.
Thus HDAs form a unifying semantic framework for non-interleaving concurrency.

Hereditary history-preserving bisimilarity (hhp-bisimilarity) is one of the finest
behavioural equivalences, respecting both branching time and causality.
For several true-concurrency models, modal logics have been shown to characterise
hhp-bisimilarity (e.g.\ on  event structures~\cite{BC14},
stable configuration structures~\cite{phillips2014event},
prime event structures and history-dependent automata~\cite{baldan2014hereditary}).
For HDAs, however, despite being introduced 35 years ago,
no modal logic has provided an exact characterisation
of hhp-bisimilarity.
Prisacariu's Higher-Dimensional Modal Logic~\cite{P10} did not characterise even ST-bisimilarity \cite{GV87},
reaching only split bisimulation; a follow-up~\cite{P13} still fell short of hp-bisimilarity.
More recently, Zouari, Ziemia\'{n}ski and Fahrenberg~\cite{ZZF25}, inspired by the open maps technique,
succeeded in characterising ST-bisimilarity, but only in the absence of autoconcurrency.
Our work is the first to provide exact modal characterisations of the full spectrum
from ST- to hhp-bisimilarity on HDAs, including in the presence of autoconcurrency.

\medskip
\noindent
\textbf{Decomposing path adjacency.}
A key technical contribution of this paper is a decomposition of the classical
\emph{$\ell$-adjacency} relation on paths~\cite{VANGLABBEEK2006265}
into two conceptually distinct components.
\emph{Similarity} at a position $\ell$ in a path (written $\simeq^\ell$) permutes two adjacent
independent start phases of events, or two adjacent end phases, and is inherently symmetric.
\emph{Subsumption} (written $\sqsubseteq^\ell$) replaces a sequential traversal
around a higher-dimensional cell by a path that passes through it directly,
increasing the degree of observed concurrency; it is directional.
In the original formulation~\cite{VANGLABBEEK2006265} these two operations
are bundled into a single symmetric adjacency step. This forces any  adjacency-based notion of bisimulation to require either both the forward and backward directions of subsumption, or neither of them.
Making the decomposition explicit allows one to require subsumption in only one
direction, yielding a strictly finer control over what behavioural information
is preserved.

\medskip
\noindent
\textbf{A spectrum of new bisimilarities.}
The decomposition of path adjacency into similarity and subsumption
naturally gives rise to a family of bisimulation equivalences that fills the gap between
ST-bisimilarity and hhp-bisimilarity in the spectrum of~\cite{van2001refinement,vG90}.
This family is parameterised by which subset of the operations
subsumption, unsubsumption, similarity and path restriction
is required of the bisimulation relation.
These intermediate notions are of independent interest:
they isolate distinct behavioural properties of concurrent systems,
allowing one to distinguish systems that differ in their capacity
for concurrency refinement from those that differ in the symmetry
of their concurrent structure.

We establish strict inclusions between these notions using explicit separating examples.
Between two of the equivalences we establish inclusion without a separating
example; hereby we pose the open question whether these equivalences coincide.

\medskip
\noindent
\textbf{A general modal characterisation framework.}
To obtain modal characterisations uniformly across the entire spectrum, we implicitly apply
transformations of HDAs into a kind of labelled transition system called \emph{$(I,M)$-system}.
These have state-labels as well as transition labels. For ST-bisimilarity, hhp-bisimilarity, and each of
the new intermediate equivalences on HDAs, we apply a transformation that translates this equivalence on
HDAs to strong bisimilarity on the associated $(I,M)$-system. 
The states in our $(I,M)$-system will be the paths in the HDA; they are labelled by their associated
ST-trace, which describes the observable content of a path through a sequence of action-phases
(starts and ends of actions), with a matching of each end phase with it corresponding start.
The labelled transitions in the $(I,M)$-system will be the relations in $I \subseteq \{$path
extension, subsumption, similarity, unsubsumption, path restriction$\}$.
The upshot of this is that a form of the classical Hennessy-Milner logic on labelled transition
systems can be used to reason about these path-based semantic equivalences on HDAs.
Each bisimilarity in our spectrum is characterised by an appropriate
choice of the index set $I$, yielding the corresponding modal logic as a
natural fragment of a common language.
This framework also explains, in a uniform way, why the logics characterising
coarser equivalences are syntactic fragments of those for finer ones.

As a consequence of the above method, our logics on HDA are \emph{path-based}, whereas previous
proposals \cite{P10,P13} could be classified as \emph{cell-based}.
\vspace{-1ex}

\paragraph*{Contributions}
\vspace{-1.5ex}
\begin{itemize}
    \item We decompose $\ell$-adjacency into $\ell$-similarity and $\ell$-subsumption,
    identifying these as independent behavioural conditions of conceptually
    different character.

    \item Using this decomposition, we introduce a family of intermediate
    bisimulation equivalences between ST- and hhp-bisimilarity,
    including in particular \emph{semi-history-preserving} (shp)
    and \emph{quasi-history-preserving} (qhp) bisimilarity,
    and establish strict inclusions between them via explicit separating examples.

    \item We introduce $(I,M)$-systems and prove a general Hennessy-Milner
    theorem: bisimilarity and logical equivalence coincide for any choice
    of index set $I$ and label set $M$.

    \item Instantiating this framework to HDAs, we obtain exact modal
    characterisations of each bisimilarity in our spectrum as a natural
    fragment of a single modal language, providing in particular
    the first modal characterisation of hhp-bisimilarity for HDAs.
\end{itemize}

\medskip
\noindent
\textbf{Why a finer spectrum matters.}
Bisimulation-based techniques have a strong track record in the verification of security protocols (e.g.\ for information-flow properties~\cite{FG01}), and can even uncover real-world privacy vulnerabilities~\cite{BHMY23,HMY23,HM21}.
More directly relevant to our setting, Aubert, Horne and Johansen~\cite{AHJ22} show that moving to finer non-interleaving bisimilarity notions expands the class of detectable attacks on privacy protocols, with the underlying transition systems coinciding exactly with HDAs; in~\cite{AHJM26} this line of work explicitly takes advantage of history-preserving bisimilarity.
Our intermediate equivalences, each accompanied by a characterising modal logic, provide a principled way to calibrate the attacker model between ST- and hhp-bisimilarity, potentially uncovering attacks beyond ST without requiring the full strength of hhp.
Developing these connections into concrete verification tools is left as future work.

A \emph{testing scenario} is a method to explain the non-equivalence of two systems
  in terms of an observation that could be made for one but not the other. One of the few testing
  scenarios that distinguishes ST-bisimilar system is ``Teams can see pomsets'' of Plotkin \& Pratt
  \cite{PP96}. It might be that a suitable branching time version of this testing scenario
  corresponds with one of the semantic equivalences explored in our paper.
\vspace{2ex}

\noindent
\textbf{Organisation of the paper.}
Section~\ref{sec:hda} 
recalls the ordered precubical presentation of HDAs. Section~\ref{sec: path and their labels} introduces paths and decomposes $\ell$-adjacency
into $\ell$-similarity and $\ell$-subsumption. Section~\ref{sec: ST-trace} recalls the ST-trace of a path.
Section~\ref{sec: Bis for HDA} introduces the bisimulation equivalences arising
from the decomposition and establishes the
inclusion hierarchy. Section~\ref{sec: bis logic} develops the $(I,M)$-system framework and proves
the general Hennessy-Milner theorem. Section~\ref{sec: Modal char} instantiates that framework to HDAs and derives modal characterisation for each bisimilarity in our spectrum.

\section{Higher-dimensional automata}\label{sec:hda}
Higher-dimensional automata (HDAs) model concurrent computation
geometrically: an $n$-dimensional cell represents the simultaneous
execution of $n$ independent events.
Formally, HDAs are defined as presheaves over a precubical base
category encoding how events start and terminate. In this section we recall the classical ordered precubical
presentation of HDAs, following~\cite{VANGLABBEEK2006265, LanguageofHDA}.
Fix a set of actions $\Sigma$.

\begin{definition}[The ordered precube category $\square$]\rm
Objects are \emph{concurrency lists} (\emph{conclists}) $((n),\lambda)$,
where $(n)=\{1<\dots<n\}$ is a finite totally ordered set
and $\lambda\colon(n)\to\Sigma$ is a labelling. We write
$(n,\lambda)$ instead of $((n),\lambda)$ for simplicity.

A morphism $(f,\varepsilon)\colon(m,\lambda)\to(n,\mu)$ consists of:
\begin{itemize}
    \item an injective order- and label-preserving map
          $f\colon(m)\hookrightarrow(n)$;
    \item a status map $\varepsilon\colon(n)\to\{0,\exec,1\}$
          satisfying $\varepsilon^{-1}(\exec)=f((m))$.
\end{itemize}
\end{definition}
Intuitively, an object $U=(n,\lambda)$ represents a configuration in which $n$ concurrent events are being performed; they are linearly ordered solely in order to distinguish them when relating events from different configurations.
Given $(f,\varepsilon)\colon U\to V$, the status map $\varepsilon$ classifies each
event of $V$ as \emph{not started} ($0$), \emph{executing} ($\exec$),
or \emph{terminated} ($1$), while $f$ embeds the active events of $U$
into those of $V$. Since $f$ is injective and order-preserving,
given $U=(m,\lambda)$ and $V=(n,\mu)$,
it is uniquely determined by the active set
$\varepsilon^{-1}(\exec)\subseteq (n)$.
Additionally, the event labelling $\lambda$ is completely determined by $\mu$ and $f$.

Composition of $(f,\varepsilon)\colon S\to T$
and $(g,\zeta)\colon T\to U$
is defined by
$$
(g,\zeta)\circ(f,\varepsilon)
=(g\circ f,\eta),
$$
where
$$
\eta(u)=
\begin{cases}
\varepsilon(t) & \text{if } u=g(t)\text{ for some } t\in T,\\
\zeta(u) & \text{otherwise}.
\end{cases}
$$

\medskip

For $U=(n\!-\!1,\lambda)$, $V=(n,\lambda')$
and $k\in\{0,1\}$,
the \emph{coface map} $d^k_{i,n}\colon U\to V$
$(1\le i\le n)$ is the unique morphism
$(\iota_i,\varepsilon)$
where $\iota_i:(n{-}1)\rightarrow(n)$ omits position $i$ from $(n)$,
$\varepsilon(i)=k$,
and $\varepsilon(j)=\exec$ for all $j\neq i$.
When $n$ is clear we write $d^k_i$.
Intuitively, when $V$ is seen as an $n$-dimensional cube, representing the concurrent execution of $n$ events, $U$ represents its $n{-}1$-dimensional face in which the $i^{\rm th}$ event of $V$ has not yet started (if $k{=}0$) or is already finished (if $k{=}1$).
The coface maps satisfy the cubical identities
$$
d_{j,n}^{\ell} \circ d_{i,n-1}^{k} = d_{i,n}^{k} \circ d_{j-1,n-1}^{\ell}
\qquad (i<j),\quad k,\ell \in \{0,1\},
$$
and generate the category $\square$, in the sense that any morphism can be obtained by composing coface maps.
A \emph{precubical set} is a \emph{presheaf over $\square$}, that is, a functor
$$
X\colon \square^{\it op}\to \mathbf{Set}.
$$
If \( X \) is a presheaf over  \(\square\) and \( U=(n,\lambda) \) is an object, $X[U]$ represents the set of those configurations in $X$ in which exactly $n$ events are active, with labels $(\lambda(i))_{i=1}^n$.
Elements of \( X[U] \), for all objects $U$ of $\square$, form the set of cells \(\cell_X\) of \( X \). Specifically, we have:
$$
\cell_X = \bigsqcup_{U \in \text{obj}(\square)} X[U].
$$
Let $\square\left(U, V \right)$ denote the set of morphisms from $U$ to $V$. If the dimension of $U$ is one lower than that of $V$, these are coface maps.
For a coface map $d_i^k$ we write $\delta_i^k := X[d_i^k]$; it is a function that for each $n$-dimensional cell yields its starting or end face (depending on $k$) in dimension $i$.

Let $0$ denote the unique object $(n,\lambda)$ in $\square$ with $n=0$.

\begin{definition}[Higher-dimensional automaton]\label{def:hda}\rm
An \emph{HDA} is a pair $\mathcal X=(X,i_{\mathcal X})$
where
$X$ is a precubical set
and $i_{\mathcal X}\in X(0)$
is a distinguished initial $0$-cell.
\end{definition}

\section{Paths in HDAs}\label{sec: path and their labels}
We recall the notion of paths in HDAs, fixing notation for later use.
\begin{definition}\label{def:path}\rm
 A \emph{path} of length $n$ in a precubical set $X$ is a sequence\\ $
\alpha=\left(x_{0}, \varphi_{1}, x_{1}, \varphi_{2}, \ldots, \varphi_{n}, x_{n}\right),$
where $x_{j} \in X\left[U_{j}\right]$ are cells, and for all $j\in\{1,\dots,n\}$, either
\begin{itemize}
  \item $\varphi_{j}=d_{i_{j}}^{0} \in \square\left(U_{j-1}, U_{j}\right)$, a coface map, and $x_{j-1}=\delta_{i_{j}}^{0}\left(x_{j}\right)$ (up-step), or
  \item $\varphi_{j}=d_{i_{j}}^{1} \in \square\left(U_{j}, U_{j-1}\right), \delta_{i_{j}}^{1}\left(x_{j-1}\right)=x_{j}$ (down-step).\vspace{1ex}
\end{itemize}
A \emph{path in an HDA} $\mathcal{X}=(X,i_{\mathcal X})$ is a path in the
underlying precubical set $X$ whose first cell is the initial cell $i_{\mathcal X}$.
We write $\Path_X$ for the sets of all paths in $X$, and $\Path_{\mathcal{X}}\subseteq\Path_X$ for the sets of paths starting at the initial cell.
\end{definition}
\noindent
If $\alpha=(x_{0},\varphi_{1},\ldots,\varphi_{n},x_{n})$ and $\beta=(y_{0},\psi_{1},\ldots,\psi_{m},y_{m})$ are paths in $X$ with $x_n=y_0$, their
\emph{concatenation} is the path
$$
\alpha*\beta=
\bigl(x_{0},\varphi_{1},\ldots,\varphi_{n},x_{n},
\psi_{1},y_{1},\ldots,\psi_{m},y_{m}\bigr).
$$
In this situation, we say that $\alpha * \beta$ is an
\emph{extension} of $\alpha$ and that $\alpha$ is a \emph{restriction} or
\emph{prefix} of $\alpha * \beta$; we write $\alpha \ininc \alpha*\beta$.

\begin{definition}[{\Congruence} of paths]\label{def:path-congruence}\rm
We say that paths $\alpha$ and $\beta$ are $\ell$-\emph{\congruent}, written $\alpha\simeq^{\ell}\beta$, if one is obtained from the other by replacing
\[
    \text{(1)}\quad (d_i^0,\;x_\ell,\;d_j^0)\quad \text{by} \quad(d^0_{j-1},\;x'_\ell,\;d_i^0),
    \qquad \text{or} \qquad
    \text{(2)}\quad (d_j^1,\;x_\ell,\;d_i^1)\quad  \text{by} \quad(d_i^1,\;x'_\ell,\;d_{j-1}^1)
\]
for indices $i<j$.
\emph{\Congruence} $\simeq$ is the equivalence relation generated by $\simeq^\ell$ for all $\ell\geq 1$.
\end{definition}
\begin{definition}[Subsumption of paths]\rm
\label{def:ell-subsumption}
We say that path \emph{$\beta$ subsumes $\alpha$ at position $\ell$}, written $\alpha \subsu^{\ell} \beta$, if $\alpha$ is obtained from $\beta$ by replacing
$$
\begin{aligned}
\text{(3)}\quad
&(d_i^0,\,x_\ell,\,d_j^1)\;~\mbox{by}~\;(d^1_{j-1},\,x'_\ell,\,d_i^0), \quad \text{ or } \quad 
\text{(4)}\quad(d_j^0,\,x_\ell,\,d_i^1)\;~\mbox{by}~\;(d_i^1,\,x'_\ell,\,d_{j-1}^0)
\end{aligned}
$$ 
for indices $i<j$.
The \emph{global subsumption} relation $\subsu$ is the reflexive and transitive
closure of all $\ell$-subsumptions:
$$
\alpha\;\subsu\;\beta
\quad\Longleftrightarrow\quad
\exists\,\ell_1,\dots,\ell_k\text{ such that }
\alpha=\alpha^{(0)}\subsu^{\ell_1}\alpha^{(1)}\subsu^{\ell_2}\cdots
\subsu^{\ell_k}\alpha^{(k)}=\beta.
$$
\end{definition}
Geometrically, if $\alpha \subsu^\ell \beta$, then $\beta$ passes
through the $n$-dimensional cell $x_\ell$,
while $\alpha$ bypasses it by following the corresponding $(n-2)$-dimensional face $x'_{\ell}$ instead.
In particular, $\beta$ explores a region of the automaton with more concurrency; see Fig.~\ref{fig:subsumption-geometry} for the $n=2$ case.

\begin{figure}[t]
  \centering
  \begin{tikzpicture}[x=1.6cm, y=1.6cm, >=stealth]

    \begin{scope}[name prefix=L-, xshift=0cm]
      \path[fill=black!15] (0,0) -- (1,0) -- (1,1) -- (0,1) -- cycle;

      \node[state, initial] (00) at (0,0) {};
      \node[state] (10) at (1,0) {};
      \node[state] (01) at (0,1) {};
      \node[state] (11) at (1,1) {};

      \path (00) edge node[below] {$\vphantom{d}a$} (10);
      \path (01) edge node[above] {$a$} (11);
      \path (00) edge node[left] {$b$} (01);
      \path (10) edge node[right] {$b$} (11);

      \node at (0.3,0.5) {$x_\ell$};
      \node[below right] at (1,0) {$x'_\ell$};

      \draw[-, very thick, blue] (0,0.04) -- (0.5,0.04) -- (0.5,0.5) -- (0.96,0.5) -- (0.96,1);
      \node[blue] at (0.65,0.35) {$\beta$};

      \draw[-, very thick, orange!70!black] (0,0) -- (1,0) -- (1,1);
      \node[orange!70!black] at (1.22,0.2) {$\alpha$};
    \end{scope}

    \begin{scope}[name prefix=R-, xshift=4.2cm]
      \path[fill=black!15] (0,0) -- (1,0) -- (1,1) -- (0,1) -- cycle;

      \node[state, initial] (00) at (0,0) {};
      \node[state] (10) at (1,0) {};
      \node[state] (01) at (0,1) {};
      \node[state] (11) at (1,1) {};

      \path (00) edge node[below] {$\vphantom{d}a$} (10);
      \path (01) edge node[above] {$a$} (11);
      \path (00) edge node[left] {$b$} (01);
      \path (10) edge node[right] {$b$} (11);

      \node at (0.3,0.39) {$x_\ell$};
      \node[below right] at (1,0) {$x'_\ell$};

    
      \node[blue] at (0.65,0.35) {$\beta$};
\node[green!50!black] at (0.5-.15,0.65) {$\gamma$};
 \draw[-, very thick, blue] (0,0) -- (0.5,0) -- (0.5,0.5) -- (0.96,0.5) -- (0.96,1);
      \node[blue] at (0.65,0.35) {$\beta$};
\draw[-, very thick, green!50!black] (0,0.04) -- (0,0.54) -- (0.46,0.54) -- (0.96,0.54) -- (0.96,1);
    \end{scope}

  \end{tikzpicture}
\vspace{-4pt}
\caption{Subsumption step as a bypass of a higher-dimensional cell (illustrated for $n=2$).
  Here \textcolor{blue}{$\beta$} could be a path
  \textcolor{blue}{$(x_0,d^0_{1,1},x_1,d^0_{2,2},x_2,d^1_{1,2},x_3,d^1_{1,1})$} in which $x_0,x_1,x_2,x_3,x_4$ are
  cells of dimensions $0,1,2,1,0$, respectively, $d^0_{i,n}$ indicates the start and $d^1_{i,n}$ the
  end of an action, numbered $i$ out of $n$. Here $\ell=2$. Now \textcolor{orange!70!black}{$\alpha$}
  becomes \textcolor{orange!70!black}{$(x_0,d^0_{1,1},x_1,d^1_{1,1},x'_2,d^0_{1,1},x_3,d^1_{1,1})$}, where
  $x'_2$ is a cell of dimension 0. In this case $\beta \unsubsu^2 \alpha$, so $\beta$ subsumes $\alpha$;
  in $\beta$ the two actions overlap, whereas in $\alpha$ they don't. Moreover, \textcolor{green!50!black}{$\gamma$} could be a path
  \textcolor{green!50!black}{$(x_0,d^0_{1,1},x'_1,d^0_{1,2},x_2,d^1_{1,2},x_3,d^1_{1,1})$}, where
  $x'_1$ is a cell of dimension 1. Here $\gamma \simeq^{r} \beta$ with $r=1$, so that $\gamma$ and $\beta$ are similar paths.}
  \label{fig:subsumption-geometry}
\vspace{-4pt}
\end{figure}

\begin{definition}[Adjacency of paths \cite{VANGLABBEEK2006265}]\rm
\label{def:adjacency} 
Paths $\alpha$ and $\beta$ are $\ell$-adjacent, written $\alpha\overset{\ell}{\leftrightsquigarrow} \beta$, if they are either $\ell$-similar or $\ell$-subsumed. 
\emph{Homotopy} is the equivalence relation generated by similarity and subsumption. This is, two paths $\alpha$ and $\beta$ are \emph{homotopic} if one can be obtained from the other by a sequence of $\ell$-similarities and $\ell$-subsumptions. 
\end{definition}

Let $\alpha=(x_0,\varphi_1,x_1,\varphi_2,\dotsc,\varphi_m,x_m)$ be a path of length $m$ and $1\leq \ell<m$. If $\varphi_{\ell}$ and $\varphi_{\ell+1}$ are both up-steps or both down-steps, then there exists a unique path $\alpha^{(\ell)}$ that is $\ell$-similar to $\alpha$. If $\varphi_{\ell}=d^0_i$ is an up-step and $\varphi_{\ell+1}=d^1_j$ is a down-step, then there exists a unique $\alpha^{(\ell)}$ such that $\alpha^{(\ell)}\subsu^\ell\alpha$ if $i\neq j$, and no $\ell$-adjacent paths otherwise.
In all these cases we have \plat{$\alpha \overset{\ell}{\leftrightsquigarrow} \alpha^{(\ell)}$}.
Finally, if $\varphi_\ell$ is a down-step and $\varphi_{\ell+1}$ is an up-step, there may be any number (including none)  of paths $\beta$ such that $\alpha\subsu^\ell\beta$.

In the original work \cite{VANGLABBEEK2006265}, $\ell$-{\congruence} and $\ell$-subsumption
are bundled into the single symmetric $\ell$-adjacency relation.
In this paper, we decompose $\ell$-adjacency into two conceptually distinct components:
  \emph{$\ell$-\congruence} and $\ell$-\emph{subsumption}.
This separation is not merely cosmetic.
While {\congruence} is inherently symmetric, subsumption is directional and cannot be treated symmetrically without loss of structure.
Making this distinction explicit is a key technical step of the present work and directly enables our novel formulation of bisimilarity.
In particular, it allows us to define several variants of hereditary history-preserving bisimilarity (in Section \ref{sec: Bis for HDA}) in which subsumption is required to hold only in a single direction.

\section{ST-trace}\label{sec: ST-trace}
Fix a down-step $(x_\ell,d^1_{i_{\ell+1}},x_{\ell+1})$ occurring in a path 
$$\alpha=(x_0,\varphi_1,x_1,\ldots,\varphi_\ell,x_\ell,d^1_{i_{\ell+1}},x_{\ell+1},
\varphi_{\ell+2},\ldots,x_m)$$
in an HDA\@.
By the text above Definition 21 in \cite{VANGLABBEEK2006265}, there exists a unique index
$k\le \ell$ such that $$\alpha \overset{\ell}{\leftrightsquigarrow} \alpha^{(\ell)} 
\overset{\ell-1}{\leftrightsquigarrow} \alpha^{(\ell-1)} 
\overset{\ell-2}{\leftrightsquigarrow} \cdots 
\overset{k+1}{\leftrightsquigarrow} \alpha^{(k+1)} 
\overset{k}{\not\leftrightsquigarrow} \alpha^{(k)},$$
where each step is obtained by applying the unique adjacency replacement
at the indicated position.

This means that the down-step can be moved successively towards the
beginning of the path by adjacency replacements in the sense of
Definition~\ref{def:adjacency}, until it sits immediately after
its matching up-step and cannot be moved further.

More explicitly, the down-step $d^1_{i_{\ell+1}}$ can be commuted one
position to the left precisely when the adjacent segment
$(d^k_{i_\ell},x_\ell,d^1_{i_{\ell+1}})$ satisfies $k=1$ or
matches one of the subsumption
patterns of Definition~\ref{def:ell-subsumption}, that is, when $i_\ell \neq i_{\ell+1}$.

The process stops exactly at position $k$ because this condition fails:
in $\alpha^{(k+1)}$ the adjacent segment has the form
$(d^0_{i_k},x_{k+1},d^1_{i_k})$, where the indices coincide and no
adjacency replacement is applicable.
We then write $\mathsf{start}(\ell{+}1)=k$.
It says that the down-step $d^1_{i_{\ell+1}}$ at position $\ell{+}1$ marks the end of an event that started with the up-step $d^0_{i_k}$ at position $k$.

\def\ST{\mathsf{ST}}
\def\iP{\mathsf{iPom}}
\def\split{\mathsf{split}}

\begin{definition}[ST-trace \cite{VANGLABBEEK2006265}]\label{def:ST-trace}\rm
Let 
$
\alpha=(x_{0},
d^{\,k_{1}}_{i_{1}},x_{1},
\dots,$ $
d^{\,k_{n}}_{i_{n}},x_{n})\in \Path_{\mathcal{X}}$, and let $\lambda(i_j)$ denote the label of the
event whose start or termination is represented by the face map
$d^{k_j}_{i_j}$.
For each step define
$$\sigma_j^{ST} :=
\begin{cases}
  + & \text{if } k_j = 0,\\
  start(j) & \text{if } k_j = 1.
\end{cases}$$
Then $\ST(\alpha)
  = (\lambda(i_1)^{\sigma^{ST}_1},\ldots,\lambda(i_n)^{\sigma^{ST}_n}).$
\end{definition}
The \emph{ST-trace} $\ST(\alpha)$ records not only when each action starts ($+$) and
terminates, but also links each termination to the position of its
corresponding start via the index $start(j)$.  In this way, the
ST-trace encodes the causal pairing between the beginning and ending of
individual actions, capturing their overlap and nesting
(see~\cite{VANGLABBEEK2006265} for the original formulation).

\begin{figure}
    \centering
    \begin{tikzpicture}[scale=2]
    \def\eps{0.04}
        \path[fill=black!15] (0,0) -- (3,0) -- (3,1) -- (2,1) -- (2,2) -- (1,2) -- (1,1) -- (0,1);
        \node[state] (00) at (0,0) {};
        \node[state] (10) at (1,0) {};
        \node[state] (20) at (2,0) {};
        \node[state] (30) at (3,0) {};
        \node[state] (01) at (0,1) {};
        \node[state] (11) at (1,1) {};
        \node[state] (21) at (2,1) {};
        \node[state] (31) at (3,1) {};
        \node[state] (12) at (1,2) {};
        \node[state] (22) at (2,2) {};
        \draw[->] (00) edge node[below] {$b$} (10);
        \draw[->] (10) edge node[below] {$a$} (20);
        \draw[->] (20) edge node[below] {$b$} (30);
        \draw[->] (00) edge node[left] {$a$} (01);
        \draw[->] (10) -- (11);
        \draw[->] (20) -- (21);
        \draw[->] (30) -- (31);
        \draw[->] (01) -- (11);
        \draw[->] (11) -- (21);
        \draw[->] (21) -- (31);
        \draw[->] (11) edge node[left] {$b$} (12);
        \draw[->] (12) -- (22);
        \draw[->] (21) -- (22);
        \draw[very thick, orange!70!black] (0.03, \eps) -- (0.5,\eps) -- (0.5,0.5) -- (1.5,0.5) -- (1.5,1.5) -- (2,1.5);
        \draw[very thick, blue] (00) -- (0,0.5+\eps) -- (1.5-\eps,0.5+\eps) -- (1.5-\eps,1.5+\eps) -- (2,1.5+\eps);
        \draw[very thick, green!50!black] (0.03, 0) -- (0.5+\eps, 0) -- (0.5+\eps, 0.5-\eps) -- (2.5, 0.5-\eps) -- (2.5,1);
        \node[green!50!black] at (2.2, 0.35) {$\alpha$};
        \node[blue] at (0.3, 0.65) {$\gamma$};
        \node[orange!70!black] at (1.6, 0.8) {$\beta$};
    \end{tikzpicture}
    \caption{Examples of paths in an HDA.}
    \label{fig:path-example}
\end{figure}

\begin{example}
The paths depicted in Fig.\@ \ref{fig:path-example} have all different ST-traces:
\[
\ST(\alpha)=b^+a^+b^1a^+a^4b^+a^2,
\quad
\ST(\beta)=b^+a^+b^1a^+a^2b^+a^4,
\quad
\ST(\gamma)=a^+b^+b^2a^+a^1b^+a^4.
\]
\end{example}

\section{Bisimulations for Higher Dimensional Automata}\label{sec: Bis for HDA}
In this section, we review the main notions of bisimilarity defined for HDAs. Additionally, we introduce several new bisimilarities that lie between history-preserving bisimilarity and ST-bisimilarity in the spectrum established in \cite{van2001refinement}.

Hereditary history preserving bisimilarity (hhp-bisimilarity) and history preserving bisimilarity (hp-bisimilarity) are equivalence notions that were originally introduced for other concurrency models \cite{van2001refinement,rabinovich1988behavior,bednarczyk1991hereditary}, and were later adapted to higher-dimensional automata (HDA) \cite{VANGLABBEEK2006265}. 
Within the settings of this work, (h)hp-bisimilarity is defined as follows. 
\begin{definition}\rm \label{def: hp bisimilarity}\rm
 A \emph{history-preserving bisimulation} (\emph{hp-bisimulation})
between HDAs $\mathcal Y$ and $\mathcal Z$
is a symmetric relation $\mathfrak R$
between paths of $\mathcal Y$ and $\mathcal Z$ such that:
\begin{enumerate}
\item \label{en: h bis initial path} \emph{Initial condition.} $(i_{\mathcal Y},i_{\mathcal Z})\in\mathfrak R$. Here $i_{\mathcal Y}$ is the unique path in $\mathcal Y$ of length $0$.
\item \label{en: h bis pomset iso} \emph{Label equality}. If $(\alpha,\beta)\in\mathfrak R$, then
$\ST(\alpha) =  \ST(\beta)$.
\item \label{en: h bis concatenation} \emph{Path extension.}
If $(\alpha,\beta)\in\mathfrak R$ and $\alpha\ininc \alpha'$ (i.e.\ $\alpha$ is an initial subpath of $\alpha'$),
then there exists $\beta'$ such that $\beta\ininc \beta'$ and $(\alpha',\beta')\in\mathfrak R.$
\item \label{en: h bis subsu} \emph{Subsumption}.\!
If $(\alpha,\beta)\mathbin\in\mathfrak R$
and $\alpha \mathbin{\subsu^{\ell}} \alpha'\!$,
then there exists $\beta'$ such that $\beta \mathbin{\subsu^{\ell}} \beta'$
and $(\alpha'\!,\beta')\mathbin\in\mathfrak R$.

\item \label{en: h bis unsubsu} \emph{Unsubsumption.}
If $(\alpha,\beta)\mathbin\in\mathfrak R$
and $\alpha' \mathbin{\subsu^{\ell}} \alpha$,
then there exists $\beta'$ with $\beta' \mathbin{\subsu^{\ell}} \beta$ and $(\alpha',\beta')\mathbin\in\mathfrak R$.

\item \label{en: h bis cong} \emph{\Congruence.}
If $(\alpha,\beta)\mathbin\in\mathfrak R$
and $\alpha' \mathbin{\simeq^{\ell}} \alpha$,
then there exists $\beta'$ such that $\beta' \mathbin{\simeq^{\ell}} \beta$ and $(\alpha',\beta')\mathbin\in\mathfrak R$.
\end{enumerate}
The relation $\mathfrak R$ is called
a \emph{hereditary history-preserving bisimulation}
(\emph{hhp-bisimulation})
if, in addition, it satisfies:
\begin{enumerate}
\setcounter{enumi}{6}
\item \label{en: h bis hereditary} \emph{Path restriction.}
If $(\alpha,\beta)\mathbin\in\mathfrak R$
and $\alpha' \mathbin{\ininc} \alpha$,
then there exists $\beta' \mathbin{\ininc} \beta$
such that $(\alpha'\!,\beta')\mathbin\in\mathfrak R$.\vspace{1ex}
\end{enumerate}
We say that $\mathcal X $ and $\mathcal Y$ are \emph{(hereditary) history preserving bisimilar} (\emph{(h)hp-bisimilar}), denoted $\mathcal X \bis{(h)hp} \mathcal Y$, if there exists a (hereditary) history-preserving bisimulation ((h)hp-bisimulation) between them. This clearly is an equivalence relation.
\end{definition}
We will address the clauses 3--7 of Definition~\ref{def: hp bisimilarity} by the symbols
  $\hookrightarrow$, $\subsu$, $\unsubsu$, $\simeq$ and $\hookleftarrow$, respectively.
An \emph{ST-bisimulation} between HDAs $\mathcal Y$ and $\mathcal Z$ is a symmetric relation $\mathfrak R$ between paths of $\mathcal Y$ and $\mathcal Z$ satisfying Clauses~\ref{en: h bis initial path}--\ref{en: h bis concatenation} of Definition~\ref{def: hp bisimilarity}. 
We say that $\mathcal X $ and $\mathcal Y$ are \emph{ST-bisimilar}, denoted $\mathcal X \bis{ST} \mathcal Y$, if an ST-bisimulation between them exists. So by definition ${\bis{hhp}} \subseteq {\bis{hp}} \subseteq {\bis{ST}}$.

From Definition~\ref{def: hp bisimilarity} we derive $2^4$
bisimulation equivalences that are included between $\bis{ST}$ and $\bis{hhp}$.
Namely for any subset $I \subseteq \{\subsu, \unsubsu, \simeq, \hookleftarrow\}$ we define an $I$-bisimulation as a relation between the paths of two HDAs that satisfies Clauses 1, 2 and 3 of Def.~\ref{def: hp bisimilarity}, as well as the clauses from $I$;
two HDAs are $I$-bisimilar iff an $I$-bisimulation between them exists.
It is trivial to check that all these notions are in fact equivalence relations.

Thus $\emptyset$-bisimilarity is ST-bisimilarity, $\{\subsu, \unsubsu, \simeq,
\hookleftarrow\}$-bisimilarity is hhp-bisimilarity and $\{\subsu, \unsubsu, \simeq\}$ is
hp-bisimilarity. We give names to six more of these equivalence notions that may yield new and different equivalences on HDAs: \emph{semi-\!-history-preserving (shp)
bisimilarity} ($\bis{shp}$) has $I=\{\subsu\}$ and \emph{quasi-\!-history-preserving (qhp)
bisimilarity} ($\bis{qhp}$) has $I=\{\subsu, \simeq\}$; in both cases we get \emph{hereditary}
versions ($\bis{hshp}$ and $\bis{hqhp}$) by adding clause $\hookleftarrow$.
We denote $I=\{\simeq, \hookleftarrow\}$-bisimilarity as $\bis{reST}$ and
$I=\{\simeq,\hookleftarrow\}$-bisimilarity as $\bis{ureST}$; here the letters $r$, $e$ and $u$ stand
for ``path \emph restriction'', ``similarity \emph equivalence'' and ``\emph unsubsumption''.

The following table summarises which conditions are satisfied by each of these bisimilarities.
\begin{center}
\renewcommand{\arraystretch}{1.2}
\begin{tabular}{|>{\raggedleft\arraybackslash}p{3.28cm}|c|@{~}c@{~}|@{~}c@{~}|c|@{~}c@{~}|c|@{~}c@{~}|c|c|}

  \hline
  \textbf{Condition} 
  & $\bis{ST}$ 
  & $\bis{\it reST}$
  & $\bis{\it ureST}$
  & $\bis{\it shp}$ 
  & $\bis{\it hshp}$ 
  & $\bis{\it qhp}$ 
  & $\bis{\it hqhp}$ 
  & $\bis{\it hp}$ 
  & $\bis{\it hhp}$\hspace{-.4pt} \\ 
  \hline
  Initial paths are related 
  & $\checkmark$ & $\checkmark$ & $\checkmark$& $\checkmark$ & $\checkmark$ & $\checkmark$ & $\checkmark$ & $\checkmark$ & $\checkmark$ \\ 
  \hline
  Labels of related paths
  & $\ST $ & $\ST $ & $\ST $ & $\ST $ & $\ST $ & $\ST $ & $\ST $ & $\ST $ & $\ST $ \\ 
  \hline
  Extension 
  & $\checkmark$ & $\checkmark$ & $\checkmark$ & $\checkmark$ & $\checkmark$ & $\checkmark$ & $\checkmark$ & $\checkmark$ & $\checkmark$ \\ 
  \hline
  Subsumption  
  &  &  & & $\checkmark$ & $\checkmark$ & $\checkmark$ & $\checkmark$ & $\checkmark$ & $\checkmark$ \\ 
  \hline
  {\Congruence}
  &  & $\checkmark$ & $\checkmark$ & & & $\checkmark$ & $\checkmark$ & $\checkmark$ & $\checkmark$  \\ 
  \hline
  Unsubsumption  
  &  & & $\checkmark$ &  &  & & & $\checkmark$ &  $\checkmark$ \\ 
  \hline
  Path restriction
  &  & $\checkmark$ & $\checkmark$ &  & $\checkmark$ &  & $\checkmark$ &  & $\checkmark$ \\ 
  \hline

\end{tabular}
  \label{tab:conditions}
\end{center}

\noindent
 Now we establish the relationships between the various notions of equivalence that were defined above.
Our findings are summarised in Fig.\ref{fig:lattice}.

\begin{figure}
    \centering
    \begin{tikzpicture}[scale=3]
    \def\rad{5pt}
    \def\sx{1}
    \def\sy{-0.1}
    \def\ux{0.4}
    \def\uy{0.6}
    \def\rx{0}
    \def\ry{1}
    \def\ex{1.9}
    \def\ey{0.4}
    \def\ndst{\node[fill,circle,minimum size=\rad,inner sep=0pt]}
    \def\drdk{\draw[dashed,green!50!black,->]}
    \def\drin{\draw[thick,blue!50!black,->]}
    \def\drnin{\draw[thick,green!50!black,->]}
    \def\dreq{\draw[double,double distance=1.5pt,-]}
    \def\drxx{\draw[thick,red,->]}
    \ndst (ST) at (0,0) {};
    \ndst (s) at (\sx,\sy) {};
    \ndst (u) at (\ux,\uy) {};
    \ndst (su) at (\sx+\ux,\sy+\uy) {};
    \ndst (r) at (\rx,\ry) {};
    \ndst (sr) at (\sx+\rx,\sy+\ry) {};
    \ndst (ur) at (\ux+\rx,\uy+\ry) {};
    \ndst (sur) at (\sx+\ux+\rx,\sy+\uy+\ry) {};
    \ndst (e) at (\ex,\ey) {};
    \ndst (se) at (\sx+\ex,\sy+\ey) {};
    \ndst (ue) at (\ux+\ex,\uy+\ey) {};
    \ndst (sue) at (\sx+\ux+\ex,\sy+\uy+\ey) {};
    \ndst (re) at (\rx+\ex,\ry+\ey) {};
    \ndst (sre) at (\sx+\rx+\ex,\sy+\ry+\ey) {};
    \ndst (ure) at (\ux+\rx+\ex,\uy+\ry+\ey) {};
    \ndst (sure) at (\sx+\ux+\rx+\ex,\sy+\uy+\ry+\ey) {};
    \node at (\sx*0.5, \sy*0.5-0.08) {$\subsu$};
    \node at (\sx*0.5+\ux+\rx+\ex, \sy*0.5+\uy+\ry+\ey+0.08) {$\subsu$};
    \node at (\ux*0.5+\rx-0.1, \uy*0.5+\ry) {$\sqsupseteq$};
    \node at (\ux*0.5+\sx+\ex+0.1, \uy*0.5+\sy+\ey) {$\sqsupseteq$};
    \node at (\rx*0.5-0.1, \ry*0.5) {$\hookleftarrow$};
    \node at (\sx+\ux+\ex+\rx*0.5+0.1, \sy+\uy+\ey+\ry*0.5) {$\hookleftarrow$};
    \node at (\sx+\ex*0.5, \sy+\ey*0.5-0.1) {$\simeq$}; 
    \node at (\ux+\rx+\ex*0.5, \uy+\ry+\ey*0.5+0.1) {$\simeq$}; 
    \drin (ST) -- (s);
    \drin (u)--(su);
    \drin (r)--(sr);
    \drin (ur)--(sur);
    \drin (e)--(se);
    \drin (ue)--(sue);
    \drin (re)--(sre);
    \drin (ure) -- (sure);
    \dreq (ST)--(u);
    \dreq (s)--(su);
    \dreq (r)--(ur);
    \dreq (sr)--(sur);
    \dreq (e)--(ue);
    \drin (se)--(sue);
    \drnin (re)--(ure);
    \drnin (sre)--(sure);
    \dreq (ST)--(r);
    \drdk (s)--(sr);
    \dreq (u)--(ur);
    \drdk (su)--(sur);
    \drin (e)--(re); 
    \drin (se)--(sre);
    \drin (ue)--(ure);
    \drin (sue)--(sure);
    \dreq (ST)--(e);
    \drin (s)--(se);
    \dreq (u)--(ue);
    \drin (su)--(sue); 
    \drin (r)--(re); 
    \drin (sr)--(sre); 
    \drin (ur)--(ure);
    \drin (sur)--(sure);
    \node[below left] at (ST) {\textsf{ST}};
    \node[above left] at (re) {\textsf{reST}};
    \node[above] at (ure) {\textsf{ureST}};
    \node[below] at (s) {\textsf{shp}};
    \node[below] at (se) {\textsf{qhp}};
    \node[right] at (sue) {\textsf{hp}};
    \node[above right] at (sure) {\textsf{hhp}};
    \node[above left] at (sr) {\textsf{hshp}};
    \node[above left] at (sre) {\textsf{hqhp}};
    \end{tikzpicture}
    \caption{Hierarchy of bisimulations on HDAs. Blue arrows point to finer equivalences, double lines denote coinciding equivalences, dashed green lines are conjectured coincidences. Green arrows point to finer equivalences that are not proven in this paper.}
    \label{fig:lattice}
\end{figure}
 
\begin{theorem}\rm If $\mathcal X \bis{ST} \mathcal{Y}$ then $\mathcal X$ and $\mathcal Y$ are related by a $\{\hookleftarrow\}$-bisimulation, that is, an ST-bisimulation that also satisfies path restriction.
\end{theorem}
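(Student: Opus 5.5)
Starting from an ST-bisimulation $\mathfrak R$ between $\mathcal X$ and $\mathcal Y$, I would construct a new relation $\mathfrak R'$ that keeps only pairs reachable from the initial pair by simultaneous extension. Concretely, let $\mathfrak R'$ be the set of pairs $(\alpha,\beta)\in\mathfrak R$ for which there are decompositions
\[
\alpha=\alpha_0\ininc\alpha_1\ininc\cdots\ininc\alpha_k=\alpha,\qquad
\beta=\beta_0\ininc\beta_1\ininc\cdots\ininc\beta_k=\beta
\]
with $\alpha_0=i_{\mathcal X}$, $\beta_0=i_{\mathcal Y}$, and $(\alpha_j,\beta_j)\in\mathfrak R$ for all $j$. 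Here each $\alpha_{j+1}$ extends $\alpha_j$ by exactly one step, and likewise for $\beta_{j+1}$ and $\beta_j$. Thus $\mathfrak R'$ consists of pairs connected to the root by a chain of $\mathfrak R$-related prefixes of equal lengths.

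The first step is to note that equal lengths are automatic. If $(\alpha,\beta)\in\mathfrak R$, then $\ST(\alpha)=\ST(\beta)$, and the ST-trace has one entry per step, so related paths have the same length. Because of this, whenever $\alpha\ininc\alpha'$ with $\alpha'$ one step longer, the extension clause yields some $\beta'$ with $\beta\ininc\beta'$ and $(\alpha',\beta')\in\mathfrak R$, and label equality forces $\beta'$ to be exactly one step longer. For longer extensions $\alpha\ininc\alpha'$, I would iterate one step at a time rather than apply the clause once. Applying the clause once to the whole extension would give a pair in $\mathfrak R$, but without the intermediate chain.

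Next I would verify the clauses for $\mathfrak R'$. The initial pair lies in $\mathfrak R'$ with $k=0$. Label equality is inherited from $\mathfrak R$. For extension, given $(\alpha,\beta)\in\mathfrak R'$ and $\alpha\ininc\alpha'$, I extend the witnessing chain step by step along $\alpha'$ as just described, which produces $\beta'$ with $(\alpha',\beta')\in\mathfrak R'$.

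Path restriction is the crux. Given $(\alpha,\beta)\in\mathfrak R'$ with its chain and a prefix $\alpha'\ininc\alpha$, the prefix $\alpha'$ equals $\alpha_j$ for the index $j$ equal to its length, since prefixes of a path are determined by their length. Then $\beta_j\ininc\beta$, and the truncated chain shows $(\alpha_j,\beta_j)\in\mathfrak R'$.

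Symmetry needs one check. I would symmetrise by taking $\mathfrak R'\cup\mathfrak R'^{-1}$, or observe that the definition is already symmetric because $\mathfrak R$ is and the chain condition is symmetric. The only subtle point, which I expect to be the main obstacle, is making sure the one-step extension argument relies on the fact that related paths have equal length, so that chains align index by index.
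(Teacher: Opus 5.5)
Your proof is correct and is essentially the paper's argument: the paper also keeps exactly those pairs whose equal-length prefix pairs all lie in the given ST-bisimulation, reduces the extension clause to one-step extensions, and gets path restriction by construction. Your explicit requirement that $(\alpha,\beta)$ itself be in the original relation (so $|\alpha|=|\beta|$) is needed. The paper's defining condition, as literally stated, would also admit pairs such as $(i_{\mathcal X},\beta)$ for arbitrary $\beta$.
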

\begin{proof}
Assume that there exists an ST-bisimulation $\mathfrak K$ between $\mathcal{X}$ and
$\mathcal{Y}$. Let $\mathfrak{R}$ be the relation defined by $$(\alpha,\beta)\in \mathfrak{R} \Longleftrightarrow (\alpha',\beta')\in \mathfrak K \text{ for all $ \alpha ',\beta'$ such that } \alpha ' \hookrightarrow \alpha,~ \beta ' \hookrightarrow \beta \text{ and } |\alpha'|=|\beta'|.$$
Note that in Clause 3 (path extension) of Def.~\ref{def: hp bisimilarity} one can without limitation of generality restrict to the case where $|\alpha'| = |\alpha|+1$, that is, $\alpha'$ extends $\alpha$ by exactly one pair $(\varphi,x)$ of a coface map and a cell. Using this, all Clauses 1--3 of Def.~\ref{def: hp bisimilarity} hold for $\mathfrak R$ because they hold for $\mathfrak{K}$. Moreover, Clause 7 (path restriction) holds by construction.
\end{proof}

\begin{theorem}\rm\label{STun}
If $\mathcal X \bis{ST} \mathcal{Y}$ then $\mathcal X$ and $\mathcal Y$ are related by a $\{\unsubsu,\simeq\}$-bisimulation, that is, an ST-bisimulation that also satisfies unsubsumption and similarity.
\end{theorem}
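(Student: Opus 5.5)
The plan is to start from an ST-bisimulation $\mathfrak K$ between $\mathcal X$ and $\mathcal Y$ and to close it under simultaneous local rewriting moves. A \emph{move} is a pair $(\mathsf m,\ell)$ with $\mathsf m\in\{\simeq,\unsubsu\}$. It sends a path $\alpha$ to the unique path $\alpha^{(\ell)}$ with $\alpha^{(\ell)}\simeq^\ell\alpha$, respectively $\alpha^{(\ell)}\subsu^\ell\alpha$, whenever such a path exists. Uniqueness is the observation made after Definition~\ref{def:adjacency}: from $\alpha$ there is at most one $\ell$-similar path, and at most one path that $\alpha$ subsumes at position $\ell$. I define $(\alpha_1,\beta_1)\in\mathfrak R$ iff there are $(\alpha,\beta)\in\mathfrak K$ and a finite sequence of moves that is applicable to both $\alpha$ and $\beta$, turning $\alpha$ into $\alpha_1$ and $\beta$ into $\beta_1$. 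Symmetry of $\mathfrak R$ follows from symmetry of $\mathfrak K$. Taking the empty sequence gives $\mathfrak K\subseteq\mathfrak R$, so Clause~\ref{en: h bis initial path} holds.

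The key lemma is that applicability of a move, and its effect on the ST-trace, are determined by the ST-trace alone. The $\ell$-similarity move applies exactly when steps $\ell$ and $\ell{+}1$ are both up-steps or both down-steps, and the ST-trace shows this through the $+$ versus $\mathsf{start}$ superscripts. The unsubsumption move applies exactly when step $\ell$ is an up-step $d^0_i$ and step $\ell{+}1$ is a down-step $d^1_j$ with $i\neq j$. This happens precisely when $\mathsf{start}(\ell{+}1)\neq\ell$, which again is read off the trace. For the effect on the trace, I will show that
\[
\ST(\alpha^{(\ell)})=\pi_\ell(\ST(\alpha)),
\]
where $\pi_\ell$ swaps the letters at positions $\ell$ and $\ell{+}1$ and renumbers every $\mathsf{start}$ pointer under the transposition $(\ell\;\ell{+}1)$. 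The labels travel with their events, since the coface indices are shifted exactly as in (1)--(4). The matching of each end with its start is preserved because $\mathsf{start}$ is computed by the same chain of adjacency replacements, which only move the two swapped phases past each other. I expect this lemma to be the main obstacle. It needs a careful case analysis of the four patterns (1)--(4) together with the convention for $\mathsf{start}$. Once it is available, a move applicable to $\alpha$ is also applicable to any $\beta$ with $\ST(\beta)=\ST(\alpha)$, and Clause~\ref{en: h bis pomset iso} propagates along move sequences.

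With the lemma, Clauses~\ref{en: h bis unsubsu} and~\ref{en: h bis cong} for $\mathfrak R$ are immediate. If $(\alpha_1,\beta_1)\in\mathfrak R$ and $\alpha'\simeq^\ell\alpha_1$ or $\alpha'\subsu^\ell\alpha_1$, then $\alpha'$ is the unique result of a move applied to $\alpha_1$. Because $\ST(\alpha_1)=\ST(\beta_1)$, the same move applies to $\beta_1$ and yields the required $\beta'$. Appending this move to the sequence shows $(\alpha',\beta')\in\mathfrak R$.

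For Clause~\ref{en: h bis concatenation} (path extension), suppose $(\alpha_1,\beta_1)\in\mathfrak R$ arises from $(\alpha,\beta)\in\mathfrak K$ via moves $\mu_1,\dots,\mu_k$, and let $\alpha_1\ininc\alpha_1*\gamma$. Each move only rewrites positions strictly inside the prefix, so the inverse moves undo it. The inverse of a similarity move is again similarity, and the inverse of an unsubsumption is the specific subsumption at the same position back to the cell that was bypassed. Applied to $\alpha_1*\gamma$, these inverses give $\alpha*\gamma$. Since $\mathfrak K$ is an ST-bisimulation, there is $\beta'$ with $\beta\ininc\beta'$ and $(\alpha*\gamma,\beta')\in\mathfrak K$. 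Write $\beta'=\beta*\delta$. The moves $\mu_1,\dots,\mu_k$ act only on the prefix, so they apply to $\beta*\delta$ and produce $\beta_1*\delta$. They also carry $\alpha*\gamma$ to $\alpha_1*\gamma$. Hence $(\alpha_1*\gamma,\beta_1*\delta)\in\mathfrak R$ and $\beta_1\ininc\beta_1*\delta$, so $\mathfrak R$ is a $\{\unsubsu,\simeq\}$-bisimulation.
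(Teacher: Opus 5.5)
Your proposal is correct and follows essentially the paper's proof. Your move-sequence relation is exactly the paper's least relation containing $\mathfrak K$ closed under simultaneous $\ell$-similarity and $\ell$-unsubsumption, and your key lemma (that the ST-trace alone determines whether a move applies and what it does to the trace) is the observation the paper uses for Clauses 2, 5 and 6; your direct treatment of path extension unrolls the paper's induction on derivations, and your description of $\pi_\ell$ spells out detail the paper only asserts.
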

\begin{proof}
Let $\mathfrak K\subseteq \Path_{\mathcal X}\times\Path_{\mathcal Y}$ be an ST-bisimulation between $\mathcal X$ and $\mathcal Y$. Define a relation $\mathfrak R\subseteq \Path_{\mathcal X}\times \Path_{\mathcal Y}$ as the least relation with
\(
  {\mathfrak K}\subseteq \mathfrak R
\)
that satisfies the following closure conditions:
\begin{align*}
(\text{Similarity})\quad & (\alpha,\beta)\in {\mathfrak R},\ \alpha\simeq^{\ell} \alpha',\ \beta\simeq^\ell \beta'\ \Longrightarrow\ (\alpha',\beta')\in \mathfrak R,\\
  (\text{Unsubsumption})\quad & (\alpha,\beta)\in {\mathfrak R},\ \alpha'\subsu^{\ell}\alpha,\ \beta'\subsu^{\ell}\beta\ \Longrightarrow\ (\alpha',\beta')\in \mathfrak R.
\end{align*}
It suffices to show that $\mathfrak R$ is an ST-bisimulation. Clause~1.\ (the initial paths are related) is valid by construction since
${\mathfrak K}\subseteq \mathfrak R$. The remaining conditions need to be shown for each pair $(\alpha,\beta)\in\mathfrak R$; we will do so by induction on the derivation of
$(\alpha,\beta)\in\mathfrak R$ from the closure conditions above. The induction base is that 
$(\alpha,\beta)\in\mathfrak K$, and in this case the conditions hold because they hold for
$\mathfrak K$. For the induction step, we may assume that the required conditions have been shown already to hold for the pair $(\alpha,\beta)\in\mathfrak R$, and now consider a pair 
$(\alpha',\beta')\in\mathfrak R$ obtained by one of the above closure conditions.

For Clause~2.\ (preservation of ST-traces), note that if 
$\alpha\simeq^{\ell} \alpha'$, then $\ST(\alpha')$ is completely determined by
$\ell$ and $\ST(\alpha)$, and the same applies when $\alpha'\subsu^{\ell}\alpha$.
Hence from the induction hypothesis that $\ST(\alpha)=\ST(\beta)$ one immediately derives that $\ST(\alpha')=\ST(\beta')$.

For Clause 3 (path extension), let $\alpha' \hookrightarrow \alpha' * \gamma$. Then 
$\alpha \hookrightarrow \alpha * \gamma$, so by Clause 3 for the pair $(\alpha,\beta)\in \mathfrak R$, there exists a path $\delta$ such that $\beta \hookrightarrow \beta * \delta$ and  $(\alpha * \gamma, \beta*\delta)\in \mathfrak R$. Now $\beta' * \delta$ is an extension of $\beta'$ and $(\alpha '* \gamma, \beta' *\delta)\in \mathfrak R$ by the closure properties of $\mathfrak R$.

It remains to show that $\mathfrak R$ satisfies Clauses 5 and 6 (unsubsumption and similarity). So suppose that $(\alpha,\beta)\in\mathfrak R$ and $\alpha' \unsubsu^\ell \alpha$. This means that right before position $\ell$ in the path $\alpha$ one event started and right after $\ell$ another event ended, while in $\alpha'$ these two action phases are reversed. Now since $\ST(\beta)=\ST(\alpha)$, in $\beta$ the situation is similar, so there must be a $\beta'$ such that $\beta' \unsubsu^\ell \beta$. The closure property of $\mathfrak R$ for unsubsumption yields that $(\alpha',\beta')\in\mathfrak R$.
The case for similarity goes likewise.
\end{proof}
The above two theorems imply that $\{\unsubsu\}$-, $\{\simeq\}$-, $\{\unsubsu,\simeq\}$- and $\{\hookleftarrow\}$-bisimilarity coincide with ST-bisimilarity.
Interestingly, although adding to ST-bisimilarity one of the clauses $\simeq$ or 
$\hookleftarrow$ makes no difference, adding both of them yields a strictly finer equivalence ($\bis{reST}$). This will be shown by Thm.~\ref{xST}.

\begin{theorem}\rm\label{shp-uns} If $\mathcal X \bis{shp} \mathcal{Y}$ then $\mathcal X$ and $\mathcal Y$ are related by a $\{\subsu,\unsubsu\}$-bisimulation,
  that is, an shp-bisimulation that also satisfies unsubsumption.
\end{theorem}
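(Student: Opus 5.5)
I will mimic the proof of Theorem~\ref{STun}: starting from an shp-bisimulation $\mathfrak K$ between $\mathcal X$ and $\mathcal Y$, I take $\mathfrak R$ to be the least relation containing $\mathfrak K$ and closed under the unsubsumption rule
\[
(\alpha,\beta)\in\mathfrak R,\ \alpha'\subsu^{\ell}\alpha,\ \beta'\subsu^{\ell}\beta\ \Longrightarrow\ (\alpha',\beta')\in\mathfrak R .
\]
I then verify Clauses 1--5 of Definition~\ref{def: hp bisimilarity} by induction on the derivation of $(\alpha,\beta)\in\mathfrak R$. The base case $(\alpha,\beta)\in\mathfrak K$ holds by assumption. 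Symmetry of $\mathfrak R$ is inherited from $\mathfrak K$ because the closure rule is symmetric.

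Clauses 1, 2, 3 and 5 go exactly as in the proof of Theorem~\ref{STun}. The initial pair lies in $\mathfrak K\subseteq\mathfrak R$. For Clause~2, $\ST(\alpha')$ is determined by $\ST(\alpha)$ and $\ell$ when $\alpha'\subsu^\ell\alpha$. For Clause~3, an unsubsumption step at position $\ell$ commutes with appending a suffix $\gamma$: from $\alpha'\subsu^\ell\alpha$ we get $\alpha'*\gamma\subsu^\ell\alpha*\gamma$, so the induction hypothesis gives a matching extension $\beta*\delta$, and closure gives $(\alpha'*\gamma,\beta'*\delta)\in\mathfrak R$. For Clause~5, equality of ST-traces guarantees that the pattern $(d^0_i,x_\ell,d^1_j)$ or $(d^0_j,x_\ell,d^1_i)$ with distinct events occurs at position $\ell$ in $\beta$ as well. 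Hence $\beta$ has a (unique) $\beta'\subsu^\ell\beta$, and the pair is added by closure.

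The real work, and the main obstacle, is Clause~4 (subsumption) for a pair $(\alpha',\beta')$ obtained by closure from $(\alpha,\beta)$ with $\alpha'\subsu^{\ell}\alpha$ and $\beta'\subsu^\ell\beta$. Given $\alpha'\subsu^{m}\alpha''$, I must find $\beta''$ with $\beta'\subsu^m\beta''$ and $(\alpha'',\beta'')\in\mathfrak R$. I split into cases on $m$.

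\emph{Case $m=\ell$.} Unsubsumption at $\ell$ is unique when it exists, so $\alpha''$ may be $\alpha$ itself, and we take $\beta''=\beta$. If the cell at position $\ell$ admits several fillings, then $\alpha''$ is another subsumer $\hat\alpha$ of $\alpha'$ at $\ell$.

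\emph{Case $|m-\ell|\ge2$.} The two local moves act on disjoint windows and commute. There is $\hat\alpha$ with $\alpha\subsu^m\hat\alpha$ and $\alpha''\subsu^\ell\hat\alpha$. I apply Clause~4 for $(\alpha,\beta)$ to get $\hat\beta$ with $\beta\subsu^m\hat\beta$ and $(\hat\alpha,\hat\beta)\in\mathfrak R$. Then I unsubsume $\hat\beta$ at $\ell$ to obtain $\beta''$, and closure gives $(\alpha'',\beta'')\in\mathfrak R$. Checking that $\beta'\subsu^m\beta''$ again uses disjointness of the windows.

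\emph{Case $|m-\ell|=1$.} This is the genuinely delicate one. The subsumption at $m$ overlaps the window of the unsubsumption at $\ell$, so the moves do not simply commute. Here I would try to use a cubical (Yang--Baxter-like) identity: go up to a cell of dimension one higher, express $\alpha''$ as obtained from some $\hat\alpha$ with $\alpha\subsu\hat\alpha$ (possibly via two subsumption steps) followed by unsubsumptions, and then transport that route along $\beta$ using Clause~4 for $(\alpha,\beta)$.

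For the same-position case with several fillings, and if the overlap case resists this, the fallback is to strengthen the closure. I would take $\mathfrak R$ to consist of pairs $(\alpha',\beta')$ such that there exist $(\alpha,\beta)\in\mathfrak K$ with $\alpha'\subsu\alpha$ and $\beta'\subsu\beta$ via the \emph{same} sequence of positions. The subsumption clause would then be discharged by pushing the new subsumption step through the sequence of positions, with the overlap cases handled using the same cubical identities.
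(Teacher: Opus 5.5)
Your plan is the paper's: take $\mathfrak R$ to be the closure of $\mathfrak K$ under simultaneous unsubsumption, inherit Clauses 1--3 and 5 from the proof of Thm.~\ref{STun}, and split Clause~4 by position. Your disjoint case $|m-\ell|\ge 2$ is exactly the paper's argument: commute the two steps, use Clause~4 for $(\alpha,\beta)$, close under unsubsumption, and identify the intermediate path with $\beta'$ by uniqueness of unsubsumption.

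Your case $|m-\ell|=1$ cannot occur. $\alpha'\subsu^\ell\alpha$ forces $\alpha'$ to have a down-step at $\ell$ and an up-step at $\ell+1$, while $\alpha'\subsu^m\alpha''$ needs a down-step at $m$ and an up-step at $m+1$. These clash for $m=\ell\pm1$. This is how the paper gets $|k-\ell|\ge2$, and no cubical identity is needed.

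The genuine gap is the case $m=\ell$ with $\alpha''\neq\alpha$, which you flag but leave open. The paper disposes of $k=\ell$ by asserting $\alpha''=\alpha$. That is unjustified: as Section~\ref{sec: path and their labels} itself remarks, a down--up window may have any number of fillings. Your fallback does not help, because the pairs obtained from $\mathfrak K$ by unsubsumptions at a common sequence of positions are precisely the inductive closure you started with.

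That closure really can fail. Let $\mathcal X=\mathcal Y$ consist of an $a$-edge $e$ into $x_1$, two $b$-edges $e_1,e_2$ out of $x_1$, squares $q_1,q_2$ filling the corner $(e,e_1)$, and a square $q_3$ filling $(e,e_2)$, with all other faces fresh. Take the identity together with the pairs matching each path that enters $q_1$ from $e$ with its copy through $q_3$ (in both directions). This is an shp-bisimulation, since such paths admit no subsumption. Its closure relates the sequential paths through $e_1$ and through $e_2$. The former subsumes at position~$2$ into the path through $q_2$, which $\mathfrak R$ relates only to itself. The only subsumer of the latter runs through $q_3$.

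In fact the statement itself seems to fail, so no repair of this step can work. Here is the construction:
\begin{itemize}
\item Take edges $x_0\to x_1\to x_2\to x_3$ labelled $a,b,c$.
\item Add `row' squares $A_i$ filling the $(a,b)$-corner at $x_1$ and `column' squares $C_j$ filling the $(b,c)$-corner at $x_2$.
\item For each good pair $(i,j)$, add a cube $w_{ij}$ whose face with $c$ not started is $A_i$ and whose face with $a$ finished is $C_j$. All other faces are fresh.
\item Let $\mathcal X$ have $i,j\in\{1,2\}$ with good set $\{(1,1)\}$, and $\mathcal Y$ have $i,j\in\{1,2,3\}$ with good set $\{(1,1),(2,2)\}$. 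In both, every row or column containing a good entry also contains a bad one.
\end{itemize}

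Without unsubsumption, the row of $A_i$ can be probed only while no column is fixed, and the column of $C_j$ only while no row is fixed. Once both are fixed, only goodness (subsumption at position~$3$ into a cube) is observable. So relate same-shaped paths that agree on whether the committed row (resp.\ column) contains a good entry, and on goodness once both are committed. This yields an shp-bisimulation.

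Now consider the bad path of $\mathcal Y$ that starts $a$, starts $b$ into $A_1$, ends $a$, starts $c$ into $C_2$ and ends $b$. It has a good entry in both its row and its column. Each candidate $A_iC_j$ in $\mathcal X$ is either good or lacks a good entry in its row or in its column. The attacker unsubsumes at position $2$ (resp.\ $4$), subsumes into another row (resp.\ column), and then subsumes at position $3$; this separates them. Hence $\mathcal X\bis{shp}\mathcal Y$, but they are not $\{\subsu,\unsubsu\}$-bisimilar.
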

\begin{proof}
Let $\mathfrak K\subseteq \Path_{\mathcal X}\times\Path_{\mathcal Y}$ be an shp-bisimulation between $\mathcal X$ and $\mathcal Y$. Define $\mathfrak R$ by closing $\mathfrak K$ under unsubsumption, just as in the proof of Thm.~\ref{STun}.
That $\mathfrak R$ satisfies Clauses 1--3 and 5 (unsubsumption) of Def.~\ref{def: hp bisimilarity} follows exactly as in the proof of Thm.~\ref{STun}.

For Clause 4 (subsumption), let $\alpha' \subsu^k \alpha''$. Recall that by induction the pair $(\alpha,\beta)\in\mathfrak R$ is supposed to satisfy subsumption, and we now want to show this for the pair $(\alpha',\beta')\in\mathfrak R$ obtained using that
$\alpha' \subsu^\ell \alpha$ and $\beta' \subsu^\ell \beta$. The special case that $k=\ell$ yields $\alpha''=\alpha$, so picking $\beta'':=\beta$ finishes the argument.
Now consider the case $k\neq \ell$.
Given that $\unsubsu$ moves a start-phase of an event past the termination of another event, while $\subsu$ does the opposite, the modifications $\alpha \sqsupseteq^\ell \alpha' \subsu^k \alpha''$ do not interfere with each other, i.e., $|k-\ell|\geq 2$, so there must be a path $\alpha'''$ such that $\alpha \subsu^k \alpha''' \sqsupseteq^\ell \alpha''$.
Since $(\alpha,\beta)\in\mathfrak R$ satisfies subsumption, there exists a path $\beta'''$ such that $\beta \subsu^k \beta'''$ and $(\alpha''',\beta''')\in\mathfrak R$.
Since $\mathfrak R$ is closed under unsubsumption, there is a $\beta''$ such that
$\beta''' \unsubsu^\ell \beta''$ and  $(\alpha'',\beta'')\in\mathfrak R$.
Using that $|k-\ell|\geq 2$, there must be a $\beta^\dagger$ such that
$\beta\unsubsu^\ell \beta^\dagger \subsu^k \beta''$. When $\beta\unsubsu^\ell \beta^\dagger$,
there is in fact no choice for $\beta^\dagger$; it is completely determined by $\beta$ and $\ell$. Hence $\beta^\dagger=\beta$, which finishes the proof.
\end{proof}

\begin{theorem}\rm If $\mathcal X \bis{\{\subsu,\hookleftarrow\}} \mathcal{Y}$ then $\mathcal X \bis{\{\subsu,\unsubsu,\hookleftarrow\}} \mathcal{Y}$.
\end{theorem}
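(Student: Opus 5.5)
We start from a $\{\subsu,\hookleftarrow\}$-bisimulation $\mathfrak K$ between $\mathcal X$ and $\mathcal Y$. As in the proof of Thm.~\ref{shp-uns}, we let $\mathfrak R$ be the least relation containing $\mathfrak K$ and closed under
\[
(\alpha,\beta)\in\mathfrak R,\ \alpha'\subsu^\ell\alpha,\ \beta'\subsu^\ell\beta \;\Longrightarrow\; (\alpha',\beta')\in\mathfrak R .
\]
We then verify, by induction on the derivation of $(\alpha,\beta)\in\mathfrak R$, that $\mathfrak R$ satisfies Clauses 1--5 and 7. Clauses 1, 2, 3 and 5 (unsubsumption) go exactly as in the proof of Thm.~\ref{STun}. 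The existence of the required $\beta'$ with $\beta'\subsu^\ell\beta$ again follows from $\ST(\alpha)=\ST(\beta)$, and that $\beta'$ is unique. Clause 4 (subsumption) is handled word for word as in the proof of Thm.~\ref{shp-uns}.

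The only new ingredient is Clause 7 (path restriction). Suppose $(\alpha,\beta)\in\mathfrak R$ satisfies Clause 7, and $(\alpha',\beta')$ arises from it via $\alpha'\subsu^\ell\alpha$ and $\beta'\subsu^\ell\beta$. Let $\gamma'\hookrightarrow\alpha'$ be a prefix of length $m$. The argument splits into two cases.
\begin{itemize}
\item If $m\le \ell-1$, then $\gamma'$ lies before the modified segment, so $\gamma'$ is also a prefix of $\alpha$. By the induction hypothesis there is $\delta\hookrightarrow\beta$ with $(\gamma',\delta)\in\mathfrak R$. Since $\ST(\gamma')=\ST(\delta)$, the prefix $\delta$ has the same length $m$, so $\delta$ is also a prefix of $\beta'$, and we are done.
\item If $m\ge \ell+1$, the prefix contains the whole replaced window $(\varphi_\ell,x'_\ell,\varphi_{\ell+1})$. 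Hence $\gamma'\subsu^\ell\gamma$, where $\gamma$ is the length-$m$ prefix of $\alpha$. The induction hypothesis gives $\delta\hookrightarrow\beta$ with $(\gamma,\delta)\in\mathfrak R$ and $|\delta|=m$. Because $\ST(\delta)=\ST(\gamma)$, the pattern at position $\ell$ in $\delta$ is again an up-step followed by a down-step of distinct events. So there is a unique $\delta'$ with $\delta'\subsu^\ell\delta$, and $(\gamma',\delta')\in\mathfrak R$ by closure. It remains to check that $\delta'\hookrightarrow\beta'$. This holds because $\beta'$ is the unique path with $\beta'\subsu^\ell\beta$, and $\delta'$ is obtained from the prefix $\delta$ of $\beta$ by the same local replacement.
\end{itemize}

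The main obstacle is the boundary case $m=\ell$, where the prefix $\gamma'$ ends exactly at the new middle cell $x'_\ell$. Here $\gamma'$ is not a prefix of $\alpha$, nor of any path subsumed by a prefix of $\alpha$. Instead, $\gamma'$ equals $\gamma*(\varphi'_\ell,x'_\ell)$, where $\gamma$ is the length-$(\ell-1)$ prefix of $\alpha$ and $\varphi'_\ell$ is the down-step of the unsubsumed path. To handle it, I would take the length-$(\ell+1)$ prefix $\hat\gamma$ of $\alpha$ and note that $\hat\gamma'\subsu^\ell\hat\gamma$, where $\hat\gamma'$ is the length-$(\ell+1)$ prefix of $\alpha'$. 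The previous case then yields a matching $\hat\delta'\hookrightarrow\beta'$ with $(\hat\gamma',\hat\delta')\in\mathfrak R$.

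Now apply Clause 7 to the pair $(\hat\gamma',\hat\delta')$. This pair was produced by a closure step from a pair satisfying Clause 7, so the induction has to be organised on derivation depth, with Clause 7 proved for all closure-derived pairs at the same time. Clause 7 gives some prefix $\delta''\hookrightarrow\hat\delta'$ with $(\gamma',\delta'')\in\mathfrak R$. That $\delta''$ has length $\ell$ follows from ST-trace equality. Then $\delta''\hookrightarrow\beta'$, as required.

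To make this well-founded, I would prove Clause 7 only for the prefixes of length $\ne\ell$ first, and then derive the length-$\ell$ case. Alternatively, I would avoid circularity by directly applying Clause 7 of $\mathfrak K$-derived pairs to the prefix $\hat\gamma$ of $\alpha$, and then unsubsuming within the resulting pair.
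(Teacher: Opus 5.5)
Your closure construction, your handling of Clauses 1--5, and your prefix cases $m\le\ell-1$ and $m\ge\ell+1$ agree with the paper's proof. The boundary case $m=\ell$, however, has a genuine gap.

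The pair $(\hat\gamma',\hat\delta')$ you reduce to has exactly the same shape as $(\alpha',\beta')$. It arises by one unsubsumption step at position $\ell$ from $(\hat\gamma,\hat\delta)$, and the prefix you need again has length exactly $\ell$. So ``applying Clause 7 to $(\hat\gamma',\hat\delta')$'' is the very statement under proof; when $|\alpha'|=\ell+1$ it is literally the same pair.
\begin{itemize}
\item Organising the induction by derivation depth does not help. $(\hat\gamma,\hat\delta)$ is just some element of $\mathfrak R$ handed to you by Clause 7 for $(\alpha,\beta)$, and nothing bounds its derivation depth.
\item Proving the cases $m\neq\ell$ first does not help either. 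What you need for $(\hat\gamma',\hat\delta')$ is precisely its own case $m=\ell$.
\item Your alternative (restrict $(\alpha,\beta)$ to $\hat\gamma$, then unsubsume) only reproduces $(\hat\gamma',\hat\delta')$ and leaves the same question open.
\end{itemize}
As you observed yourself, $\gamma'$ is neither a prefix of $\alpha$ nor $\subsu^\ell$-below any prefix of $\alpha$. So approaching it from above cannot work.

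The missing idea is to reach $\gamma'$ from below, via path extension.
\begin{itemize}
\item Let $\gamma$ and $\delta$ be the length-$(\ell-1)$ prefixes of $\alpha'$ and $\beta'$; they coincide with those of $\alpha$ and $\beta$. Your case $m\le\ell-1$ gives $(\gamma,\delta)\in\mathfrak R$.
\item Clause 3 holds for every pair in $\mathfrak R$, and is established independently of Clause 7. Applying it to $\gamma\hookrightarrow\gamma'$ yields $\delta^\ddagger$ with $\delta\hookrightarrow\delta^\ddagger$ and $(\gamma',\delta^\ddagger)\in\mathfrak R$. Hence $|\delta^\ddagger|=\ell$ and $\mathsf{ST}(\delta^\ddagger)=\mathsf{ST}(\gamma')$.
\item Since $\gamma'$ extends $\gamma$ by a single down-step, $\delta^\ddagger$ extends $\delta$ by a single down-step. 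Its last ST-entry fixes, through $\mathsf{start}(\ell)$, which event active at the end of $\delta$ terminates. This determines the coface index and hence the resulting cell, so a one-step down-extension of $\delta$ with this ST-trace is unique.
\item The length-$\ell$ prefix of $\beta'$ is such an extension: its ST-trace equals that of $\gamma'$ because $\mathsf{ST}(\alpha')=\mathsf{ST}(\beta')$. It therefore equals $\delta^\ddagger$, which closes the case.
\end{itemize}
This is exactly how the paper handles the case $|\alpha''|=\ell$.
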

\begin{proof}
Let $\mathfrak K\subseteq \Path_{\mathcal X}\times\Path_{\mathcal Y}$ be an  $\{\subsu,\hookleftarrow\}$-bisimulation between $\mathcal X$ and $\mathcal Y$. Define $\mathfrak R$ by closing $\mathfrak K$ under unsubsumption, just as in the proof of Thm.~\ref{shp-uns}.
That $\mathfrak R$ satisfies Clauses 1--5 of Def.~\ref{def: hp bisimilarity} follows exactly as in the proof of Thm.~\ref{shp-uns}.

For Clause 7 (path restriction), let $\alpha''\hookrightarrow \alpha'$. Here $\alpha'\subsu^{\ell}\alpha$ and $\beta'\subsu^{\ell}\beta$, and by induction the pair $(\alpha,\beta)\in\mathfrak R$ is supposed to satisfy path restriction.
Let $\beta''$ be the unique path such that $|\alpha''|=|\beta''|$ and $\beta''\hookrightarrow\beta$. We need to show that $(\alpha'',\beta'')\in\mathfrak R$.

In case $|\alpha''| <  \ell$, meaning that we restrict $\alpha'$ to below the place where it differs from $\alpha$, we have $\alpha''\hookrightarrow \alpha$. Hence there exists a 
$\beta^\dagger\hookrightarrow \beta$ such that $(\alpha'',\beta^\dagger)\in \mathfrak R$.
Since $|\beta''| = |\beta^\dagger| <  \ell$ we have $\beta\dagger = \beta''$, so we are done.

In case $|\alpha''| >  \ell$, there exists an $\alpha^\dagger\hookrightarrow \alpha$ with $\alpha''\subsu^{\ell}\alpha^\dagger$. By induction there is a $\beta^\dagger\hookrightarrow \beta$ with $(\alpha^\dagger,\beta^\dagger)\in\mathfrak R$, and since $\mathfrak R$ is closed under unsubsumption there is a $\beta^\ddagger$ such that $\beta^\ddagger\subsu^{\ell}\beta^\dagger$ and $(\alpha'',\beta^\ddagger)\in\mathfrak R$. It must be that $\beta^\dagger = \beta''$, so again we are done.

The remaining case that $|\alpha''| = \ell$ is the most interesting.
Let $\alpha'''\hookrightarrow \alpha'$ and $\beta'''\hookrightarrow \beta'$ be such that $|\alpha'''| = |\beta'''| = \ell-1$. By the above, $(\alpha''',\beta''')\in \mathfrak R$.
Since $\alpha''' \hookrightarrow \alpha''$ there exists a $\beta''' \hookrightarrow \beta^\dagger$ such that $(\alpha'',\beta^\ddagger)\in\mathfrak R$, using Clause 4. We also have $\beta''' \hookrightarrow \beta''$ and $|\beta^\ddagger| = |\beta''|$. Since the extension of $\beta^\ddagger$ w.r.t.\ $\beta'''$ is the end-phase of a well-defined event occurring in $\beta'''$, it is uniquely determined by $\beta'''$. Hence $\beta^\dagger = \beta''$.
\end{proof}

\begin{theorem}\rm If $\mathcal X \bis{\{\hookleftarrow\}} \mathcal{Y}$ then
   $\mathcal X \bis{\{\unsubsu,\hookleftarrow\}} \mathcal{Y}$.
\end{theorem}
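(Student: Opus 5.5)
We take a $\{\hookleftarrow\}$-bisimulation $\mathfrak K$ between $\mathcal X$ and $\mathcal Y$ and let $\mathfrak R$ be the least relation containing $\mathfrak K$ and closed under unsubsumption:
\[
(\alpha,\beta)\in\mathfrak R,\ \alpha'\subsu^\ell\alpha,\ \beta'\subsu^\ell\beta \;\Longrightarrow\; (\alpha',\beta')\in\mathfrak R .
\]
This is exactly the construction in the proofs of Thm.~\ref{STun} and Thm.~\ref{shp-uns}, minus the similarity closure. We then show that every pair in $\mathfrak R$ satisfies the required clauses, by induction on the derivation of the pair.

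Clauses 1--3 and 5 go through verbatim as in the proof of Thm.~\ref{STun}, since that argument never used subsumption or path restriction of $\mathfrak K$. We only recall one point for Clause 2: an $\ell$-unsubsumption step changes the ST-trace in a way determined solely by $\ell$ and the original trace. Consequently, whenever $\alpha'\subsu^\ell\alpha$ exists, the matching $\beta'\subsu^\ell\beta$ exists and is unique.

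The substantive part is Clause 7 (path restriction). We follow the previous theorem, with one change at the only place where subsumption was used. Consider $(\alpha',\beta')$ obtained from $(\alpha,\beta)$ with $\alpha'\subsu^\ell\alpha$ and $\beta'\subsu^\ell\beta$, and let $\alpha''\hookrightarrow\alpha'$. Let $\beta''\hookrightarrow\beta'$ be the prefix of the same length. We split into three cases.
\begin{itemize}
\item $|\alpha''|<\ell$. Here $\alpha''$ is also a prefix of $\alpha$, so the hypothesis on $(\alpha,\beta)$ gives a related prefix of $\beta$. This prefix has length $|\alpha''|$ because ST-trace lengths agree, and it is therefore $\beta''$.
\item $|\alpha''|>\ell$. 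Take the prefixes $\alpha^\dagger\hookrightarrow\alpha$ and $\beta^\dagger\hookrightarrow\beta$ of the same length. These are related by the induction hypothesis, and satisfy $\alpha''\subsu^\ell\alpha^\dagger$. Closure under unsubsumption, together with uniqueness, gives $(\alpha'',\beta'')\in\mathfrak R$.
\item $|\alpha''|=\ell$. In this case $\alpha''$ is not comparable to a prefix of $\alpha$.
\end{itemize}

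The case $|\alpha''|=\ell$ is the main obstacle. There, $\alpha''$ ends with the down-step that was moved before the up-step, so its last step is the termination of an event already running in the length-$(\ell-1)$ prefix $\alpha'''$. The previous proof produced a related extension of $\beta'''$ via Clause 4; here we cannot. Instead we use path extension (Clause 3), which already holds for the pair $(\alpha''',\beta''')$; that pair is related by the first case. This yields some $\beta^\ddagger$ with $\beta'''\hookrightarrow\beta^\ddagger$ and $(\alpha'',\beta^\ddagger)\in\mathfrak R$, hence $\ST(\beta^\ddagger)=\ST(\alpha'')$. The last ST-label of $\alpha''$ is an end phase $a^{k}$ pointing to a specific start position $k$. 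The corresponding step in $\beta^\ddagger$ must therefore terminate the event of $\beta'''$ started at position $k$. This event is a specific coordinate $i$ of the last cell of $\beta'''$, so the step is $d^1_i$ and the resulting cell is $\delta^1_i$ of that cell. Hence $\beta^\ddagger$ is uniquely determined by $\beta'''$ and the ST-label. The same reasoning applied to $\beta''$, which also extends $\beta'''$ and has the same ST-trace as $\alpha''$, gives $\beta^\ddagger=\beta''$. The care needed is to check that the matching of the end phase with its start determines the coordinate $i$, including under autoconcurrency, where several active events may share a label. This should follow from the uniqueness of $\mathsf{start}$ in Section~\ref{sec: ST-trace}.

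Finally, we check that $\mathfrak R$ is symmetric, which holds because the closure rule is symmetric and $\mathfrak K$ is symmetric.
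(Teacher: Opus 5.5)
Your proposal is correct and is essentially the paper's proof. The paper obtains this theorem by simplifying the proof of the preceding $\{\subsu,\hookleftarrow\}$ result: it closes $\mathfrak K$ under unsubsumption and runs the same three-case analysis for path restriction, and the ``Clause 4'' cited there in the case $|\alpha''|=\ell$ is a slip for the path-extension step (Clause 3) you use, so nothing actually needs replacing.

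One point you leave informal is that distinct active coordinates of the last cell of $\beta'''$ have distinct start positions. This is an injectivity property rather than the uniqueness of $\mathsf{start}$ that you cite, and the paper likewise only asserts it.
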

\begin{proof}
This follows by a straightforward simplification of the proof above.
\end{proof}

\begin{theorem}\label{th: shp st inclusion}\rm
  ${\bis{shp}} \subsetneq {\bis{ST}}$. In fact, even
  ${\bis{shp}}\not\supseteq {\bis{\it ureST}}$.
\end{theorem}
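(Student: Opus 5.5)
The inclusion ${\bis{shp}} \subseteq {\bis{ST}}$ is immediate from the definitions, since an shp-bisimulation is an ST-bisimulation that additionally satisfies Clause~\ref{en: h bis subsu}. It therefore remains to show that the inclusion is strict, and in fact to exhibit two HDAs $\mathcal X,\mathcal Y$ with $\mathcal X \bis{\it ureST} \mathcal Y$ but $\mathcal X \not\bis{shp} \mathcal Y$. Since $\bis{\it ureST}\subseteq\bis{ST}$, this one pair settles both claims.

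\textbf{Construction.} I would use the classical ``interleaving versus true concurrency'' example:
\begin{itemize}
\item $\mathcal X$ is the filled square of Fig.~\ref{fig:subsumption-geometry}, with actions $a$ and $b$ executed concurrently through a 2-cell.
\item $\mathcal Y$ is the empty square: the same 1-skeleton, but with no 2-cell, so $a$ and $b$ only interleave.
\end{itemize}

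\textbf{Failure of shp-bisimilarity.} Suppose $\mathfrak R$ is an shp-bisimulation. By path extension it must relate the sequential path $\alpha=(x_0,d^0_1,x_1,d^1_1,x'_2,d^0_1,x_3,d^1_1,\dots)$ in $\mathcal X$ (``$a$ then $b$'') to some $\beta$ in $\mathcal Y$ with $\ST(\beta)=\ST(\alpha)=a^+a^1b^+b^3$. The path $\beta$ is unique, and it is the corresponding boundary path of the empty square. In $\mathcal X$ we have $\alpha\subsu^2\alpha'$, where $\alpha'$ passes through the 2-cell. Clause~\ref{en: h bis subsu} then demands a $\beta'$ with $\beta\subsu^2\beta'$. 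Such a $\beta'$ would have to contain a 2-cell of $\mathcal Y$ at position $2$, and there is none. So no shp-bisimulation exists.

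\textbf{Existence of a ureST-bisimulation.} The real work is the converse direction: $\mathcal X$ has more paths than $\mathcal Y$, namely those through the 2-cell, with traces such as $a^+b^+a^1b^2$. Path extension from $\mathcal X$ to $\mathcal Y$ would then fail. To fix this, I would modify $\mathcal Y$ so that it contains the concurrent behaviour but lacks a subsumption witness. Concretely, $\mathcal Y$ is the disjoint union, glued at the initial cell, of a filled square and a separate empty square whose branches start with fresh copies of the initial $a$- and $b$-edges. Then:
\begin{itemize}
\item $\mathcal X\not\bis{shp}\mathcal Y$ is witnessed from the $\mathcal Y$ side: the sequential path in the empty component has ST-trace $a^+a^1b^+b^3$, so it must be matched by the sequential path in $\mathcal X$. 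That path can be subsumed into the 2-cell, whereas the $\mathcal Y$ path cannot.
\item For the ureST-bisimulation, I would relate each path of $\mathcal Y$ to the unique path of $\mathcal X$ with the same geometric shape, i.e.\ the same sequence of coface maps. Each path of $\mathcal X$ is related to the path(s) of $\mathcal Y$ with the same shape, in either component. One then checks ST-label equality, extension and restriction (both follow because shapes are preserved prefix-wise), similarity (in both squares, swapping two starts or two ends is possible only inside the 2-cell, which is present in both, or not at all), and unsubsumption. Unsubsumption is automatic, because the reverse of a subsumption $\alpha'\subsu^\ell\alpha$ always exists in any HDA once $\alpha$ passes through the 2-cell, and it is determined by shape.
\end{itemize}

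\textbf{Main obstacle.} I expect the delicate step to be verifying similarity and unsubsumption for pairs in which the $\mathcal X$-path passes through the 2-cell while the partner comes from the empty component. To avoid this, I would arrange that such pairs never arise: a path through the 2-cell is only ever related to a path in the filled component of $\mathcal Y$. Because the ST-trace already distinguishes overlapping from sequential execution, the relation is well-defined, and the clause checks reduce to componentwise isomorphism.
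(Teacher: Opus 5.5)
\textbf{There is a genuine gap.} Your final pair $\mathcal X$, $\mathcal Y$ is not even ST-bisimilar, so it witnesses neither $\bis{shp}\subsetneq\bis{ST}$ nor $\bis{shp}\not\supseteq\bis{\it ureST}$. The problem is where the empty square branches off. In $\mathcal Y$, the path $(i_{\mathcal Y},d^0_1,a')$ along the fresh $a$-edge has ST-trace $a^+$. Applying Clause~3 to the initial pair from the $\mathcal Y$ side forces this path to be related to the unique path of $\mathcal X$ with trace $a^+$, namely the start of the bottom edge of the filled square. That $\mathcal X$-path extends by an up-step into the 2-cell, giving trace $a^+b^+$. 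But $a'$ is a face of no 2-cell, so $(i_{\mathcal Y},d^0_1,a')$ has no extension with trace $a^+b^+$, and Clause~3 fails. This is the same defect you noticed for the bare empty square: ST-traces already record that $b$ cannot start while $a$ is running. Your remedy of ``arranging that such pairs never arise'' cannot work, because this pair is forced by the bisimulation clauses before any path reaches the 2-cell. Relating paths by equal shape produces exactly this pair and breaks extension.

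The fix is to let the purely sequential alternative branch off only after $a$ has terminated, so that nothing visible in ST-traces changes. The paper takes $\mathcal Y_2$ to be the filled square plus one extra $b$-edge leaving the endpoint of the initial $a$-edge (Fig.~\ref{fi: ST bisimilar HDA}). The relation ``$\ST(\alpha)=\ST(\beta)$'' is then an ST-bisimulation, and in fact a $\{\unsubsu,\simeq,\hookleftarrow\}$-bisimulation, for three reasons:
\begin{itemize}
\item whether a similarity or unsubsumption step at position $\ell$ exists is determined by the ST-trace;
\item the ST-trace of the result of such a step is determined by $\ell$ and the original ST-trace;
\item ST-traces of prefixes are prefixes.
\end{itemize}
Meanwhile the path ``$a$, then the fresh $b$'' in $\mathcal Y_2$, with trace $a^+a^1b^+b^3$, can only be related to the sequential boundary path of $\mathcal Y_1$. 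That boundary path has a $\subsu^2$-successor through the 2-cell, whereas the fresh $b$-edge lies on no 2-cell. So your subsumption argument goes through verbatim for this pair.
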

\begin{proof}
The inclusion follows directly from the definitions:
every shp-bisimulation satisfies all clauses required for ST-bisimilarity,
since ST-bisimilarity can be obtained from shp-bisimilarity by dropping
Clause~\ref{en: h bis subsu} (subsumption).

Its strictness is established by the example of the HDAs $\mathcal{Y}_1$ and $\mathcal{Y}_2$ illustrated in Figure~\ref{fi: ST bisimilar HDA}. For ST-bisimilarity,
the relation $\mathfrak R$ given by
$$
(\alpha,\beta)\in\mathfrak R\quad\Longleftrightarrow\quad \ST(\alpha)=\ST(\beta).
  $$
is easily seen to be an ST-bisimulation, in that it satisfies the first three conditions of Def.~\ref{def: hp bisimilarity}.
In fact, it is even a $\{\unsubsu,\simeq,\hookleftarrow\}$-bisimulation.
Note that it relates the highlighted paths $\alpha_1$ and $\alpha_2$.
Thus $\mathcal{Y}_1$ and $\mathcal{Y}_2$ are ST-bisimilar.

By definition, if an shp-bisimulation $\mathfrak R$ exists between $\mathcal{Y}_1$ and $\mathcal{Y}_2$, then $(\alpha_1, \alpha_2) \in \mathfrak{R}$. This is because the initial paths of $\mathcal{Y}_1$ and $\mathcal{Y}_2$ must be related, and if we extend the initial path of $\mathcal{Y}_2$ to $\alpha_2$, the only path in $\mathcal{Y}_1$ with the same ST-trace is $\alpha_1$.
However, there exists a path $\beta \in \Path_{Y_1}$, indicated in blue, such that $\alpha_1 \subsu \beta$, while no path $\beta' \in \Path_{Y_2}$ satisfies $\alpha_2 \subsu \beta'$, which violates condition~\ref{en: h bis subsu} of an shp-bisimulation.
\end{proof}
   \begin{figure}[t]
    \centering
    \begin{tikzpicture}[x=1.5cm, y=1.5cm]
     \begin{scope}
       \path[fill=black!15] (0,0) to (1,0) to (1,1) to (0,1);
      \node[state, initial] (00) at (0,0) {};
      \node[state] (10) at (1,0) {};
      \node[state] (01) at (0,1) {};
      \node[state] (11) at (1,1) {};
      \path (00) edge node[below] {$\vphantom{d}a$} (10);
      \path (01) edge node[above] {$a$} (11);
      \path (00) edge node[left] {$b$} (01);
      \path (10) edge node[right] {$b$} (11);
        \draw[-, very thick, blue] (0,0.04) -- (0.5,0.04) -- (0.5,0.5) --  (0.96,0.5) -- (0.96,1);
      \node[blue] at (0.65,0.65) {$\beta$};
         \draw[-, very thick, orange] (0,0) -- (1,0) -- (1,1) ;
           \node[] at (1.2,0.24) {\textcolor{orange}{$\alpha_1$}};
    \end{scope}
      \begin{scope}[shift={(3.5,0)}]
       \path[fill=black!15] (0,0) to (1,0) to (1,1) to (0,1);
      \node[state, initial] (00) at (0,0) {};
      \node[state] (10) at (1,0) {};
      \node[state] (01) at (0,1) {};
      \node[state] (11) at (1,1) {};
      \node[state] (20) at (2,0) {};
      \path (00) edge node[below] {$\vphantom{d}a$} (10);
      \path (10) edge node[below] {$\vphantom{d}b$} (20);
      \path (01) edge node[above] {$a$} (11);
      \path (00) edge node[left] {$b$} (01);
      \path (10) edge node[right] {$b$} (11);
       \draw[-,  very thick, orange] (0,0) --  (2,0) ;
       \node[] at (1.7,0.14) {\textcolor{orange}{$\alpha_2$}};
      \end{scope}
    \end{tikzpicture}
     \caption{Two HDA $\mathcal{Y}_1$ on the left and $\mathcal{Y}_2$ on the right that are ST-bisimilar.}
  \label{fi: ST bisimilar HDA}
  \end{figure}

\begin{theorem}\rm
${\bis{qhp}} \subsetneq {\bis{shp}}$. In fact, even ${\bis{qhp}} \not\supseteq {\bis{\{\subsu,\unsubsu,\hookleftarrow\}}}$.
\end{theorem}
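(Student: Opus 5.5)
The inclusion ${\bis{qhp}}\subseteq{\bis{shp}}$ is immediate from the definitions. A qhp-bisimulation is a $\{\subsu,\simeq\}$-bisimulation, so it satisfies Clauses~1--4 of Def.~\ref{def: hp bisimilarity} and is in particular an shp-bisimulation. By Thm.~\ref{shp-uns}, shp-bisimilarity coincides with $\{\subsu,\unsubsu\}$-bisimilarity, and $\{\subsu,\unsubsu,\hookleftarrow\}$-bisimilarity is included in it. Hence the content of the theorem is the non-inclusion ${\bis{qhp}}\not\supseteq{\bis{\{\subsu,\unsubsu,\hookleftarrow\}}}$, which also gives strictness of the first inclusion. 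I will prove it with a separating example in the style of Thm.~\ref{th: shp st inclusion}: two HDAs $\mathcal Z_1,\mathcal Z_2$, an explicit $\{\subsu,\unsubsu,\hookleftarrow\}$-bisimulation between them, and an argument that no relation between them satisfies Clauses~1--3 together with Clause~6 (similarity).

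The example must exploit what similarity alone detects: in a 2-cell, ``start $a$ then start $b$'' and ``start $b$ then start $a$'' must be matched by paths ending in the \emph{same} square, through the matching partners. I would therefore build both HDAs from two $a$-edges $a_1,a_2$ and two $b$-edges $b_1,b_2$ leaving the initial state, together with some squares filling pairs $(a_p,b_q)$. I would choose the fillings and cheap distinguishing continuations (an extra action $c$ after certain edges end) so that three things hold.
\begin{itemize}
\item In $\mathcal Z_1$, the choice of $a$-edge made along ``$a^+b^+$'' forces, via similarity, a specific $b$-edge along ``$b^+a^+$''.
\item In $\mathcal Z_2$, that specific pairing is unavailable. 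Only a ``crossed'' pairing exists, and it is incompatible with the continuations after the first event ends.
\item Subsumption and unsubsumption never separate the two systems, because every sequential run $a^+a^1b^+\cdots$ that can be thickened into a square in one HDA can also be thickened in the other.
\end{itemize}

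The negative half then follows the pattern of Thm.~\ref{th: shp st inclusion}. Initial paths are related. Extending by $a^+$ and then $b^+$, matching ST-traces and continuations pin down $\beta$ for a chosen $\alpha$ ending in a square. Applying Clause~6 at $\ell=1$ gives $\alpha'\simeq^1\alpha$, and the unique $\beta'\simeq^1\beta$ ends in a square whose $b$-start face differs from that of $\alpha'$. Extending with the end of $a$ and the continuation $c$ then breaks Clause~2.

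The positive half defines $\mathfrak R$ as pairs of paths with equal ST-trace that additionally agree on the ``which edge was taken'' data that the $c$-continuations depend on. I would then check Clauses~3, 4, 5 and 7, using that restriction, subsumption and unsubsumption only ever change one step or shorten the path, so the agreement data can be maintained.

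I expect the main obstacle to be tuning the example itself: it must be rich enough that similarity fails, yet symmetric enough that forward \emph{and} backward subsumption plus path restriction all still hold. The verification of subsumption into squares (Clause~4) is the delicate case. My first attempt would be a small example with two squares per system and no autoconcurrency. If that fails, I would fall back on autoconcurrent squares $(a,a)$, where similarity is most restrictive because it must preserve which of two equally labelled events is which.
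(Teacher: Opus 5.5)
\textbf{Gap: the negative half cannot succeed.} Your inclusion argument is fine, and so is the reduction of the theorem to the non-inclusion. The problem is the negative half. You plan to show that no relation between $\mathcal Z_1,\mathcal Z_2$ satisfies Clauses~1--3 together with Clause~6, i.e.\ that they are not $\{\simeq\}$-bisimilar. That can never happen for HDAs that are $\{\subsu,\unsubsu,\hookleftarrow\}$-bisimilar. Such HDAs are ST-bisimilar, and by Thm.~\ref{STun} ST-bisimilarity coincides with $\{\simeq\}$-bisimilarity: close an ST-bisimulation under simultaneous $\simeq^\ell$-steps on both sides. So the example you are trying to tune does not exist.

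Your concrete step shows why. If $\alpha'\simeq^1\alpha$, then the two paths differ only in the intermediate cell and end in the same square; likewise the unique $\beta'\simeq^1\beta$ ends where $\beta$ does. Any extension $\alpha'*\gamma$ is matched via $\alpha*\gamma$ by some $\beta*\delta$. Since the ST-trace of an $\ell$-similar path is determined by $\ell$ and the original ST-trace, $\ST(\alpha'*\gamma)=\ST(\beta'*\delta)$. Hence ``extending with the end of $a$ and the continuation $c$'' can never break Clause~2: the face on which $\alpha'$ and $\beta'$ differ is invisible to extension. Autoconcurrent squares do not help, because the $\ell$-similar path is unique and its trace determined in every case.

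\textbf{Missing idea: combine similarity with subsumption.} Similarity only becomes observable through an operation that inspects intermediate cells. In qhp the only such operation is subsumption, since restriction is not available. What separates qhp from shp is therefore the composition of $\simeq$ with $\subsu$: similarity steps relocate intermediate cells, so that a later $\subsu^\ell$ probes a different higher-dimensional cell.

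This is exactly the paper's argument on Fig.~\ref{fig:projected HDA}. The path $\beta$ with trace $a^+c^+a^1c^2b^+d^+b^5d^6$ has only two candidate partners, $\alpha_1$ and $\alpha_2$.
\begin{itemize}
\item The subsumption $\beta\subsu^4\beta'$ rules out $\alpha_2$.
\item The chain $\beta\simeq^3\beta''\simeq^5\beta'''\subsu^4\beta''''$ rules out $\alpha_1$: the steps $\alpha_1\simeq^3\alpha''\simeq^5\alpha'''$ exist, but $\alpha'''$ has no $\subsu^4$-successor.
\end{itemize}
A $\{\subsu,\unsubsu,\hookleftarrow\}$-bisimulation still exists, because without the similarity clause it is never forced to pair $\beta'''$ with $\alpha'''$.

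So your design constraint must be sharper than ``subsumption and unsubsumption never separate the systems''. Direct $\subsu$/$\unsubsu$ queries must agree, while subsumptions reached after similarity steps must not. Your negative argument must invoke Clause~4 together with Clause~6.
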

\begin{proof}
The inclusions follow directly from the semantics of the relations in question.
\pagebreak[3]

To prove strictness, consider the HDAs $\mathcal{X}$ and $\mathcal{Y}$ in
Fig.~\ref{fig:projected HDA}.
Suppose, towards a contradiction, that there exists an qhp-bisimulation $\mathfrak R$
between $\mathcal{X}$ and $\mathcal{Y}$.
By Clause~\ref{en: h bis initial path}, $(i_{\mathcal X},i_{\mathcal Y})\in\mathfrak R$.
  Now $i_{\mathcal Y}$ can be extended to the path $\beta$, indicated in orange in Fig.~\ref{fig:projected HDA}. So by Clauses~\ref{en: h bis concatenation} and~\ref{en: h bis pomset iso},
there must be a path $\alpha$ in $\mathcal{X}$ such that $(\alpha,\beta)\in\mathfrak R$
and $\ST(\alpha)=a^+c^+a^1c^2b^+d^+b^5d^6$. The only two candidates, $\alpha_1$ and $\alpha_2$, are indicated in orange and purple respectively.

In $\mathcal Y$ we have a subsumption $\beta\subsu^4\beta'$, indicated in blue.
Clause~\ref{en: h bis subsu} therefore forces the existence of a path $\alpha'$ in $\mathcal X$ such that $\alpha\subsu^4\alpha'$ and $(\alpha',\beta')\in\mathfrak R$.
However, in $\mathcal X$ the path $\alpha_2$ has no subsumption successor of this form (again by inspection of Fig.~\ref{fig:projected HDA}), and hence is disqualified as candidate for $\alpha$.

In $\mathcal Y$ we have a similarity $\beta\simeq^3\beta''\simeq^5\beta'''$, obtained by swapping
$a^-$ and $c^-$, as well as $b^+$ and $d^+$, and a subsumption $\beta'''\subsu^4\beta''''$, where
$\beta''''$ is indicated in green in Fig.~\ref{fig:projected HDA}. Clauses~\ref{en: h bis cong}
and~\ref{en: h bis subsu} therefore force the existence of a paths $\alpha'',\alpha''',\alpha''''$ in $\mathcal X$ such that $\alpha_1\simeq^3\alpha''\simeq^5\alpha'''\subsu^4\alpha''''$ and $(\alpha'''',\beta'''')\in\mathfrak R$.
However, whereas $\alpha''$ and $\alpha'''$ can be found, there is no such $\alpha''''$.  This contradiction shows
that $\mathcal X \not\bis{qhp} \mathcal Y$.

 \begin{figure}[ht]
    \centering
    \begin{tikzpicture}
        \begin{scope}[shift={(-4,0)}] 
         \tikzset{
        point/.style={circle, fill, inner sep=1.5pt},
        dot/.style={circle, fill, inner sep=0.8pt},
        arrow/.style={->, >=stealth, line width=0.8pt},
        label/.style={font=\small},
    }
\coordinate (M3) at (0, 2);
    \coordinate (I) at (0, 0);
    \coordinate (E) at (-1.5, 0);
    \coordinate (F) at (1.5, 0); 

    \coordinate (A1) at (-2.5, 1);
    \coordinate (A2) at (-0.5-.45, 1);
    \coordinate (B1) at (-2.5, -1);
    \coordinate (B2) at (-0.5-.45, -1);

    \coordinate (C1) at (2.5, 1);
    \coordinate (C2) at (0.5+.45, 1);
    \coordinate (D1) at (2.5, -1);
    \coordinate (D2) at (0.5+.45, -1);

    \foreach \p in {(I), (E), (F), (A1), (A2), (B1), (B2), (C1), (C2), (D1), (D2),(M3)}{
        \node[point] at \p {};
    }
    \draw[arrow] (0,2.6) -- (M3);
  \path[fill=black!15]  (E) to (A1) to (A2) to (M3) to (C2) to (C1) to (F) to (D1) to (D2) to (I) to  (B2) to (B1) ;

    \draw[arrow] (I) -- (E) node[midway, above] {$b$};
    \draw[arrow] (A2) -- (I) node[midway, above] {$c$};
    \draw[arrow] (I) -- (B2) node[midway, below] {$d$};
\draw[arrow] (M3) -- (C2) node[midway, above] {$c$};
\draw[arrow] (M3) -- (A2) node[midway, above] {$a$};
    \draw[arrow] (I) -- (F) node[midway, above] {$d$};
    \draw[arrow] (C2) -- (I) node[midway, above] {$a$};
    \draw[arrow] (I) -- (D2) node[midway, below] {$b$};

    \draw[arrow] (E) -- (A1) node[midway,  left] {$c$};
    \draw[arrow] (A2) -- (A1) node[midway, above] {$b$};
    \draw[arrow] (E) -- (B1) node[midway, left] {$d$};
    \draw[arrow] (B2) -- (B1) node[midway, below] {$b$};

    \draw[arrow] (C1) -- (F) node[midway,  right] {$a$};
    \draw[arrow] (C2) -- (C1) node[midway, above] {$d$};
    \draw[arrow] (F) -- (D1) node[midway, right] {$b$};
    \draw[arrow] (D2) -- (D1) node[midway, below] {$d$};
\path let \p1=(F), \p2=(C1), \p3=(C2), \p4=(I) in
  coordinate (urTrapCenter) at ({(\x1+\x2+\x3+\x4)/4},{(\y1+\y2+\y3+\y4)/4});

 \coordinate (NEW) at (0, -1.1);
    \coordinate (midMAtwo) at ($(M3)!0.5!(A2)$); 
    \coordinate (upperCenter) at ($(M3)!0.5!(I)$); 
    \coordinate (midA2I) at ($(A2)!0.5!(I)$); 
    \coordinate (midIE) at ($(I)!0.5!(E)$); 
    \coordinate (trapCenter) at ($0.25*(NEW)+0.25*(F)+0.25*(I)+0.25*(E)$); 
    \coordinate (midNE) at ($(NEW)!0.5!(E)$); 
\path let \p1=(I), \p2=(F), \p3=(D1), \p4=(D2) in
  coordinate (drTrapCenter) at ({(\x1+\x2+\x3+\x4)/4},{(\y1+\y2+\y3+\y4)/4});

    \coordinate (midC) at ($(M3)!0.5!(C2)$);  

    \path let \p1=(E), \p2=(A1), \p3=(A2), \p4=(I) in
      coordinate (ulTrapCenter) at ({(\x1+\x2+\x3+\x4)/4},{(\y1+\y2+\y3+\y4)/4});

    \path let \p1=(F), \p2=(C1), \p3=(C2), \p4=(I) in
      coordinate (urTrapCenter) at ({(\x1+\x2+\x3+\x4)/4},{(\y1+\y2+\y3+\y4)/4});

    \path let \p1=(I), \p2=(E), \p3=(B1), \p4=(B2) in
      coordinate (dlTrapCenter) at ({(\x1+\x2+\x3+\x4)/4},{(\y1+\y2+\y3+\y4)/4});

    \path let \p1=(I), \p2=(F), \p3=(D1), \p4=(D2) in
      coordinate (drTrapCenter) at ({(\x1+\x2+\x3+\x4)/4},{(\y1+\y2+\y3+\y4)/4});

\coordinate (midEBone) at ($(E)!0.5!(B1)$);

  \draw[- >, very thick, orange!80!black]  (M3) -- (midMAtwo) -- (upperCenter) -- (midA2I)  -- (I) -- (midIE) -- (dlTrapCenter) -- (midEBone) -- (B1);

  \begin{scope}[transform canvas={xshift=-2pt}]
    \draw[-, very thick, blue] (M3) -- (midMAtwo) -- (upperCenter) -- (midA2I) -- (ulTrapCenter) -- (midIE) -- ([yshift=2pt]dlTrapCenter);
  \end{scope}
   \begin{scope}[transform canvas={yshift=2pt}]
    \draw[->, very thick, blue](dlTrapCenter) -- (midEBone) -- (B1);
  \end{scope}
  
   \begin{scope}[transform canvas={xshift=2pt}]
    \draw[->, very thick, mycolor] (M3) -- (midMAtwo) -- (upperCenter) -- (midA2I) -- (I) -- ($(I)!0.5!(D2)$) -- (drTrapCenter) -- ($(F)!0.5!(D1)$) -- (D1);
  \end{scope}

     \node[font=\footnotesize] at (2.7,-1.5) {\textcolor{red}{$\alpha_2$}};
  \node[font=\footnotesize] at (-2.7,-1.5) {\textcolor{orange!80!black}{$\alpha_1$}};
    
      \end{scope}

        \begin{scope}[shift={(4,0)}]
       \tikzset{
        point/.style={circle, fill, inner sep=1.5pt},
        dot/.style={circle, fill, inner sep=0.8pt},
        arrow/.style={->, >=stealth, line width=0.8pt},
        label/.style={font=\small},
    }
\coordinate (M3) at (0, 2);
    \coordinate (I)   at (0, 0);
    \coordinate (NEW) at (0, -1.1);
    \coordinate (E)   at (-1.5, 0-.6);
    \coordinate (F)   at (1.5, 0-.6); 

    \coordinate (A1) at (-2.5, 1-.6);
    \coordinate (A2) at (-0.5-.45, 1);
    \coordinate (B1) at (-2.8, -1-1);
    \coordinate (B2) at (-0.5-.45-.4, -1-.7);

    \coordinate (C1) at (2.5, 1-.6);
    \coordinate (C2) at (0.5+.45, 1);
    \coordinate (D1) at (2.8, -1-1);
    \coordinate (D2) at (0.5+.45+.4, -1-.7);

    \coordinate (midMAtwo) at ($(M3)!0.5!(A2)$); 
    \coordinate (upperCenter) at ($(M3)!0.5!(I)$); 
    \coordinate (midA2I) at ($(A2)!0.5!(I)$); 
    \coordinate (midIE) at ($(I)!0.5!(E)$); 
    \coordinate (trapCenter) at ($0.25*(NEW)+0.25*(F)+0.25*(I)+0.25*(E)$); 
    \coordinate (midNE) at ($(NEW)!0.5!(E)$); 
    \coordinate (midC) at ($(M3)!0.5!(C2)$);  
   \path let \p1=(I), \p2=(E), \p3=(B1), \p4=(B2) in
      coordinate (dlTrapCenter) at ({(\x1+\x2+\x3+\x4)/4},{(\y1+\y2+\y3+\y4)/4});

\coordinate (midEBone) at ($(E)!0.5!(B1)$);
 \path let \p1=(E), \p2=(A1), \p3=(A2), \p4=(I) in
      coordinate (ulTrapCenter) at ({(\x1+\x2+\x3+\x4)/4},{(\y1+\y2+\y3+\y4)/4});
    \foreach \p in {(I), (E), (F), (NEW), (A1), (A2), (B1), (B2), (C1), (C2), (D1), (D2), (M3)}{
        \node[point] at \p {};
    }
    \draw[arrow] (0,2.6) -- (M3);

    \path[fill=black!15] (E) to (A1) to (A2) to (M3) to (C2) to (C1) to (F) to (D1) to (D2) to (I) to (B2) to (B1);

    \path[fill=blue!20, draw=blue!60!black, line width=0.6pt, opacity=0.9]
      (NEW) -- (F) -- (I) -- (E) -- cycle;

    
    \draw[arrow] (I) -- (E) node[midway, above] {$b$};
    \draw[arrow] (A2) -- (I) node[midway, above] {};
    \draw[arrow] (I) -- (B2) node[midway, below] {$d$};
\draw[arrow] (M3) -- (C2) node[midway, above] {$c$};
\draw[arrow] (M3) -- (A2) node[midway, above] {$a$};
    \draw[arrow] (I) -- (F) node[midway, above] {$d$};
    \draw[arrow] (C2) -- (I) node[midway, above] {$a$};
    \draw[arrow] (I) -- (D2) node[midway, below] {$b$};

    \draw[arrow] (E) -- (A1) node[midway,  left] {$c$};
    \draw[arrow] (A2) -- (A1) node[midway, above] {$b$};
    \draw[arrow] (E) -- (B1) node[midway, left] {$d$};
    \draw[arrow] (B2) -- (B1) node[midway, below] {$b$};

    \draw[arrow] (C1) -- (F) node[midway,  right] {$a$};
    \draw[arrow] (C2) -- (C1) node[midway, above] {$d$};
    \draw[arrow] (F) -- (D1) node[midway, right] {$b$};
    \draw[arrow] (D2) -- (D1) node[midway, below] {$d$};

\path let \p1=(F), \p2=(C1), \p3=(C2), \p4=(I) in
  coordinate (urTrapCenter) at ({(\x1+\x2+\x3+\x4)/4},{(\y1+\y2+\y3+\y4)/4});

  \draw[->, very thick, orange!80!black]
    (M3) -- (midMAtwo) -- (upperCenter) -- (midA2I) -- (I) -- (midIE) -- (trapCenter) -- (midNE)-- (NEW);
 \begin{scope}[transform canvas={xshift=3pt}]
    \draw[->, very thick, green] (M3) --  ($(M3)!0.5!(A2)$)
 -- (upperCenter) -- ($(C2)!0.5!(I)$)
 -- (urTrapCenter) -- ($(I)!0.5!(F)$)
 -- (trapCenter) -- ([xshift=2pt,yshift=1pt]midNE)
-- ([yshift=2pt]NEW);
  \end{scope}
  \begin{scope}[transform canvas={xshift=-3pt}]
    \draw[-, very thick, blue]
      (M3) -- (midMAtwo) -- (upperCenter) -- ([xshift=-2pt]midA2I);
  \end{scope}
  \begin{scope}[transform canvas={xshift=-5pt}]
    \draw[->, very thick, blue]
       ([xshift=1pt]midA2I) -- (ulTrapCenter) -- (midIE) -- (trapCenter) -- ([yshift=-1pt]midNE)-- ([yshift=-1pt]NEW);
  \end{scope}
  \node[font=\footnotesize] at (.2,1.2) {\textcolor{orange!80!black}{$\beta$}};

     \path let \p1=(I), \p2=(E), \p3=(B1), \p4=(B2) in
      coordinate (dlTrapCenter) at ({(\x1+\x2+\x3+\x4)/4},{(\y1+\y2+\y3+\y4)/4});

      \end{scope}

    \end{tikzpicture}

    \caption{Two dimensional HDAs $\mathcal{X}$ on the left and $\mathcal{Y}$ on the right that are shp-bisimilar but not qhp-bisimilar (start vertex marked by an incoming arrow).}
    \label{fig:projected HDA}
\end{figure}

 \smallskip
\noindent\emph{Why $\mathcal X \bis{shp} \mathcal Y$.}
We now exhibit an shp-bisimulation $\mathfrak S\subseteq\Path_X\times\Path_Y$.
Each path in $\mathcal Y$ that does not pass through the middle state is related to the unique path in $\mathcal X$ that has the same ST-trace. For paths that do go through the middle state,
if from there they enter a wing of $\mathcal Y$, the matching path of $\mathcal X$ enters the same wing of $\mathcal X$; and if the first move after the middle state doesn't enter a wing, neither does the corresponding path of $\mathcal X$. The continuations of these paths after that move are completely determined by their ST-traces. It is trivial to check that  $\mathfrak S$ satisfies all clauses of a shp-bisimulation. In fact, it is even an $\{\subsu,\unsubsu,\hookleftarrow\}$-bisimulation.

Note that $\mathfrak S$ relates $\alpha_1$ with the path $\beta$ considered above. Yet, the counterexample cannot be completed, because an shp-bisimulation lacks the similarity clause.
\end{proof}

\begin{theorem}\rm\label{xST}
  ${\bis{reST}} \subsetneq {\bis{ST}}$ and ${\bis{hhp}} \subsetneq {\bis{hp}}$.
  In fact, even ${\bis{hp}} \not\subseteq {\bis{\{\simeq,\hookleftarrow\}}} = {\bis{reST}}$
  and ${\bis{hshp}}={\bis{\{\subsu,\unsubsu,\hookleftarrow\}}} \not\subseteq {\bis{reST}}$.

\end{theorem}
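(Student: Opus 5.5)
The plan is to get both strict inclusions and both non-inclusions from a single pair of HDAs $\mathcal X,\mathcal Y$ that are hp-bisimilar and also $\{\subsu,\unsubsu,\hookleftarrow\}$-bisimilar, but not $\{\simeq,\hookleftarrow\}$-bisimilar.

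The inclusions ${\bis{reST}}\subseteq{\bis{ST}}$ and ${\bis{hhp}}\subseteq{\bis{hp}}$ hold by definition, since the coarser notions require a subset of the clauses. For ${\bis{hshp}}={\bis{\{\subsu,\unsubsu,\hookleftarrow\}}}$ I would invoke the earlier theorem that $\{\subsu,\hookleftarrow\}$-bisimilarity implies $\{\subsu,\unsubsu,\hookleftarrow\}$-bisimilarity; the converse inclusion is trivial. Given the pair $\mathcal X,\mathcal Y$, strictness then follows:
\begin{itemize}
\item Since $\mathcal X\bis{hp}\mathcal Y$, also $\mathcal X\bis{ST}\mathcal Y$, yet $\mathcal X\not\bis{reST}\mathcal Y$; so ${\bis{reST}}\neq{\bis{ST}}$.
\item Since ${\bis{hhp}}\subseteq{\bis{reST}}$, we get $\mathcal X\not\bis{hhp}\mathcal Y$; so ${\bis{hhp}}\neq{\bis{hp}}$.
\end{itemize}

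\textbf{Choice of example.} The HDAs should exhibit the classic phenomenon separating hp from hhp: a two-dimensional cell whose lower faces are shared in a way that is asymmetric between $\mathcal X$ and $\mathcal Y$. As a first attempt, take $\mathcal X$ and $\mathcal Y$ each as a fan of $a\|b$-squares from the initial state.
\begin{itemize}
\item In $\mathcal Y$, several squares share their initial $a$-edge but have separate initial $b$-edges.
\item In $\mathcal X$, the roles are mixed: some squares share the $b$-edge, some share the $a$-edge.
\item Extra sequential $a.b$ / $b.a$ branches are added so that every ST-trace is available from every relevant path.
\end{itemize}
If this simple fan is not rigid enough, I would move to a three-square variant in the style of the absorption-law example of Bednarczyk and van Glabbeek. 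In it, one square's "which event started first" history can be reconstructed by going back along a similar path and then restricting.

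\textbf{Positive bisimulations.} For hp-bisimilarity I would define $\mathfrak R$ as in the proof of Thm.~\ref{th: shp st inclusion}: relate paths with equal ST-trace that occupy "corresponding" squares. This correspondence may be fixed only after the first time a path enters a two-dimensional cell, and it is allowed to depend on the whole path, not only on its final cell. Subsumption, unsubsumption and similarity are then local rewrites inside one square, so they transport along $\mathfrak R$. For the hshp relation, I would drop similarity and instead make the choice prefix-closed, so that restriction holds; then check subsumption by inspection.

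\textbf{Negative part.} This is the main obstacle, and where the example must be tuned. Assume a $\{\simeq,\hookleftarrow\}$-bisimulation exists. The derivation would go as follows:
\begin{itemize}
\item Extend the initial paths to start $a$ and then start $b$ inside a particular square.
\item Apply $\simeq^1$ to swap the two starts.
\item Restrict to the length-$1$ prefix, which relates a $b$-edge of $\mathcal X$ to a $b$-edge of $\mathcal Y$.
\item From there, extend again and use similarity once more, so that the $a$-edges reached must be related.
\end{itemize}
Chaining this around all squares of the fan should force two squares of $\mathcal Y$ that share their $a$-edge to be matched with squares of $\mathcal X$ whose $a$-edges are distinct. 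Equal ST-traces then cannot be maintained on some extension.

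I expect this chase to require careful counting of shared faces, and I would verify it by an exhaustive case analysis on the few length-$2$ paths of the fan.
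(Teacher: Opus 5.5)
Your reduction matches the paper's: one pair of HDAs that is hp-bisimilar and $\{\subsu,\unsubsu,\hookleftarrow\}$-bisimilar but not $\{\simeq,\hookleftarrow\}$-bisimilar yields all four claims, with ${\bis{hshp}}={\bis{\{\subsu,\unsubsu,\hookleftarrow\}}}$ taken from the earlier theorem. The gap is the example, which is the entire content of the proof, and you do not supply one that works.

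Your first attempt cannot succeed. Suppose only the labels $a,b$ occur, and sequential branches are added precisely so that every ST-trace is available from every relevant path. Then any two $a$-edges lying in squares have the same observable future, and likewise for $b$-edges, so how edges are shared between squares is invisible. For instance, $a\mid(b+b)$ (two squares sharing their $a$-edge) and $(a+a)\mid b$ both fold onto $a\mid b$ and are even hhp-bisimilar. Adding uniform sequential branches does not change this. For the same reason your negative argument cannot produce a contradiction. That argument tries to force squares sharing an $a$-edge to be matched with squares whose $a$-edges are distinct, by counting shared faces. But a reST-bisimulation relates paths, need not be functional, and observes only ST-traces of reachable paths.

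The missing idea is a marker that makes the two initial edges of one square observably different, so that $\simeq^1$ followed by $\hookleftarrow$ can inspect both edges of the same square. The paper's absorption-law example (Fig.~\ref{fig:absorption law}) compares $E_1=(a\mid(b+c))+(b\mid(a+c))$ with $E_2=E_1+a\mid b$, and the marker is the third action $c$. In each $a\mid b$-square of $E_1$, exactly one of the two initial edges can be followed by a start of $c$. In the extra square of $E_2$, neither can.

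This is what the $\lan{reST}$-formula of Example~\ref{ex:absorption-law} expresses. It asks for a path with ST-trace $a^+b^+$ whose restriction to $a^+$ cannot start $c$. After $\simeq^1$, that path must restrict to a $b^+$-path that also cannot start $c$.

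The positive bisimulations exist because $\simeq$ and $\hookleftarrow$ are not both required. The extra square can then be matched with the square of $b\mid(a+c)$ or of $a\mid(b+c)$, according to which event started first. This choice is stable because subsumption and unsubsumption only swap a start with an end, never two starts.

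Your fallback names the absorption law, but it does not identify the role of $c$ or carry out these checks. As it stands, the proposal is not yet a proof.
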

\begin{proof}
The inclusions are immediate from the definitions.
Strictness follows from the absorption-law example in Fig.~\ref{fig:absorption law}.
This example is used in \cite{van2001refinement,BC14} to separate the hereditary and non-hereditary variants, and the same argument applies verbatim here.
To distinguish these HDAs one needs both $\simeq$ and $\hookleftarrow$.
For completeness, we will give an independent separation argument later in Section~\ref{sec: bis logic}, exhibiting a distinguishing $\lan{reST}$-formula for the two systems.
\end{proof}

\begin{figure}[ht]
    \centering
    \begin{tikzpicture}
        \begin{scope}[shift={(-4,0)}] 
         \tikzset{
        point/.style={circle, fill, inner sep=1.5pt},
        dot/.style={circle, fill, inner sep=0.8pt},
        arrow/.style={->, >=stealth, line width=0.8pt},
        label/.style={font=\small},
    }

    \coordinate (I) at (0, 0);
    \coordinate (E) at (-1.5, 0);
    \coordinate (F) at (1.5, 0); 

    \draw[arrow] (0,.6) -- (I);

    \coordinate (A1) at (-2.5, 1);
    \coordinate (A2) at (-0.5-.45, 1);
    \coordinate (B1) at (-2.5, -1);
    \coordinate (B2) at (-0.5-.45, -1);

    \coordinate (C1) at (2.5, 1);
    \coordinate (C2) at (0.5+.45, 1);
    \coordinate (D1) at (2.5, -1);
    \coordinate (D2) at (0.5+.45, -1);

    \foreach \p in {(I), (E), (F), (A1), (A2), (B1), (B2), (C1), (C2), (D1), (D2)}{
        \node[point] at \p {};
    }
  
    \node[dot] at (-1.5, 0.5) {};
    \node[dot] at (-1.5, -0.5) {};
    \node[dot] at (1.5, 0.5) {};
    \node[dot] at (1.5, -0.5) {};
 \path[fill=black!15]  (E) to (A1) to (A2) to (I) to (E);
  \path[fill=black!15]  (E) to (B1) to (B2) to (I) to (E);
  \path[fill=black!15] (F) to (C1) to (C2) to (I) to (F);
  \path[fill=black!15] (F) to (D1) to (D2) to (I) to (F);
    \draw[arrow] (I) -- (E) node[midway, above] {$b$};
    \draw[arrow] (I) -- (A2) node[midway, above] {$c$};
    \draw[arrow] (I) -- (B2) node[midway, below] {$a$};

    \draw[arrow] (I) -- (F) node[midway, above] {$a$};
    \draw[arrow] (I) -- (C2) node[midway, above] {$c$};
    \draw[arrow] (I) -- (D2) node[midway, below] {$b$};

    \draw[arrow] (E) -- (A1) node[midway,  left] {$c$};
    \draw[arrow] (A2) -- (A1) node[midway, above] {$b$};
    \draw[arrow] (E) -- (B1) node[midway, left] {$a$};
    \draw[arrow] (B2) -- (B1) node[midway, below] {$b$};

    \draw[arrow] (F) -- (C1) node[midway,  right] {$c$};
    \draw[arrow] (C2) -- (C1) node[midway, above] {$a$};
    \draw[arrow] (F) -- (D1) node[midway, right] {$b$};
    \draw[arrow] (D2) -- (D1) node[midway, below] {$a$};

    \node[label] at (0.1, 0.3) {$I$};
      \end{scope}

        \begin{scope}[shift={(4,0)}] 
                \tikzset{
        point/.style={circle, fill, inner sep=1.5pt},
        dot/.style={circle, fill, inner sep=0.8pt},
        arrow/.style={->, >=stealth, line width=0.8pt},
        label/.style={font=\small},
    }

    \coordinate (I) at (0, 0);
    \coordinate (E) at (-1.5, 0);
    \coordinate (F) at (1.5, 0); 

    \draw[arrow] (0,.6) -- (I);

    \coordinate (A1) at (-2.5, 1);
    \coordinate (A2) at (-0.5-.45, 1);
    \coordinate (B1) at (-2.5, -1);
    \coordinate (B2) at (-0.5-.45, -1);

    \coordinate (C1) at (2.5, 1);
    \coordinate (C2) at (0.5+.45, 1);
    \coordinate (D1) at (2.5, -1);
    \coordinate (D2) at (0.5+.45, -1);

    \coordinate (M1) at (.5, -1);
    \coordinate (M2) at (-.5, -1);
    \coordinate (M3) at (0, -2);
    
    \foreach \p in {(I), (E), (F), (A1), (A2), (B1), (B2), (C1), (C2), (D1), (D2),(M1),(M2),(M3)}{
        \node[point] at \p {};
    }
    \node[dot] at (-1.5, 0.5) {};
    \node[dot] at (-1.5, -0.5) {};
    \node[dot] at (1.5, 0.5) {};
    \node[dot] at (1.5, -0.5) {};
 \path[fill=black!15]  (E) to (A1) to (A2) to (I) to (E);
  \path[fill=black!15]  (E) to (B1) to (B2) to (I) to (E);
  \path[fill=black!15] (F) to (C1) to (C2) to (I) to (F);
  \path[fill=black!15] (F) to (D1) to (D2) to (I) to (F);
  \path[fill=black!15] (M3) to (M1) to (I) to (M2) to (M3);
  
    \draw[arrow] (I) -- (M1) node[midway,below right] {};
    \draw[arrow] (I) -- (M2) node[midway,below left] {};

    \draw[arrow] (M1) -- (M3) node[midway,right] {$a$};
    \draw[arrow] (M2) -- (M3) node[midway,left] {$b$};

    \draw[arrow] (I) -- (E) node[midway, above] {$b$};
    \draw[arrow] (I) -- (A2) node[midway, above] {$c$};
    \draw[arrow] (I) -- (B2) node[midway, left] {$a$};

    \draw[arrow] (I) -- (F) node[midway, above] {$a$};
    \draw[arrow] (I) -- (C2) node[midway, above] {$c$};
    \draw[arrow] (I) -- (D2) node[midway, right] {$b$};

    \draw[arrow] (E) -- (A1) node[midway,  left] {$c$};
    \draw[arrow] (A2) -- (A1) node[midway, above] {$b$};
    \draw[arrow] (E) -- (B1) node[midway, left] {$a$};
    \draw[arrow] (B2) -- (B1) node[midway, below] {$b$};

    \draw[arrow] (F) -- (C1) node[midway,  right] {$c$};
    \draw[arrow] (C2) -- (C1) node[midway, above] {$a$};
    \draw[arrow] (F) -- (D1) node[midway, right] {$b$};
    \draw[arrow] (D2) -- (D1) node[midway, below] {$a$};
    \node[label] at (0.1, 0.3) {$I$};
    \end{scope}
    \end{tikzpicture}

    \caption{The absorption law from \cite{BC14,van2001refinement} depicting two systems 
$E_1 = \big(a \mid (b + c)\big) + \big(b \mid (a + c)\big) 
 \text{ on the left and } 
E_2 = \big(a \mid (b + c)\big) + a \mid b + \big(b \mid (a + c)\big)$ on the right.
These HDA are hp-bisimilar as well as $\{\subsu,\unsubsu,\hookleftarrow\}$-bisimilar, but not reST-bisimilar.}
    \label{fig:absorption law}
\end{figure}

\begin{figure}[hb]
    \centering
    \begin{tikzpicture}[scale=1.58]
\def\colC{red}
\def\colD{orange}
\def\colA{green!80!black}
\def\colB{blue}
\def\colE{violet}
\def\colF{teal}
\def\colG{olive}
\def\sh{2.3}
\def\sk{4.5}
\def\sl{6.7}
    \node at(0.8,1.7) {$\mathcal{A}$};
        \path[fill=black!8] (0,0) to (1,0) to (1.6,0.4) to (1.6,1.4) to (0.6,1.4) to (0,1) to (0,0);
        \node[state] (000) at(0,0) {};
        \node[state] (100) at(1,0) {};
        \node[below right] at (100) {$x_1$};
        \node[state] (010) at(0,1) {};
        \node[state] (110) at(1,1) {};
        \node[state] (001) at(0.6,0.4) {};
        \node[state] (101) at(1.6,0.4) {};
        \node[below right] at (101) {$y$};
        \node[state] (011) at(0.6,1.4) {};
        \node[state] (111) at(1.6,1.4) {};
        \node[right] at (111) {$z$};
        \draw[->] (000)--(001);
        \draw[->] (010) edge node[above left] {$a$} (011);
        \draw[thick,->,\colB] (100) edge node[below right] {$a'_1$}(101);
        \draw[->] (110)--(111);
        \draw[->] (000) edge node[left] {$b$} (010);
        \draw[->] (001)--(011);
        \draw[->] (100)--(110);
        \draw[thick,->,\colE] (101) edge node[right] {$b''$} (111);
        \draw[thick,->,\colA] (000) edge node[below] {$a_1$} (100);
        \draw[->] (001)--(101);
        \draw[->] (010)--(110);
        \draw[->] (011)--(111);
        \draw[->] (-0.2,0)--(000);
    \node at(\sh+0.8,1.7) {$\mathcal{B}_1$};
        \path[fill=black!15] (\sh,0) to (\sh+1,0) to (\sh+1,1) to (\sh,1) to (\sh,0);
        \node[state] (00a) at(\sh,0) {};
        \node[state] (10a) at(\sh+1,0) {};
        \node[below right] at (10a) {$x_2$};
        \node[state] (01a) at(\sh,1) {};
        \node[above left] at (01a) {$x_1$};
        \node[state] (11a) at(\sh+1,1) {};
        \node[below right] at (11a) {$y$};
        \node[state] (a) at(\sh+1.6,1.4) {};
        \node[right] at (a) {$z$};
        \draw[thick,->,\colA] (00a) edge node[left] {$a_1$} (01a);
        \draw[thick,->,\colD] (10a) edge node[right] {$a'_2$} (11a);
        \draw[thick,->,\colC] (00a) edge node[below] {$a_2$} (10a);
        \draw[thick,->,\colB] (01a) edge node[above] {$a'_1$} (11a);
        \draw[thick,->,\colE] (11a) edge node[above left, xshift=5pt] {$b''$} (a);
        \draw[->] (\sh-0.2,0)--(00a);
        \node at (\sh+0.5,0.5) {$q_1$};
    \node at(\sk+0.8,1.7) {$\mathcal{B}_2$};
        \path[fill=black!15] (\sk,0) to (\sk+1,0) to (\sk+1,1) to (\sk,1) to (\sk,0);
        \node[state] (00b) at(\sk,0) {};
        \node[state] (10b) at(\sk+1,0) {};
        \node[below right] at (10b) {$x_3$};
        \node[state] (01b) at(\sk,1) {};
        \node[above left] at (01b) {$x_2$};
        \node[state] (11b) at(\sk+1,1) {};
        \node[below right] at (11b) {$y$};
        \node[state] (b) at(\sk+1.6,1.4) {};
        \node[right] at (b) {$z$};
        \draw[thick,->,\colC] (00b) edge node[left] {$a_2$} (01b);
        \draw[thick,->,\colG] (10b) edge node[right] {$a'_3$} (11b);
        \draw[thick,->,\colF] (00b) edge node[below] {$a_3$} (10b);
        \draw[thick,->,\colD] (01b) edge node[above] {$a'_2$} (11b);
        \draw[thick,->,\colE] (11b) edge node[above left, xshift=5pt] {$b''$} (b);
        \draw[->] (\sk-0.2,0)--(00b);
        \node at (\sk+0.5,0.5) {$q_2$};
    \node at(\sl+0.8,1.7) {$\mathcal{B}_3$};
        \path[fill=black!15] (\sl,0) to (\sl+1,0) to (\sl+1,1) to (\sl,1) to (\sl,0);
        \node[state] (00b) at(\sl,0) {};
        \node[state] (10b) at(\sl+1,0) {};
        \node[below right] at (10b) {$x_4$};
        \node[state] (01b) at(\sl,1) {};
        \node[above left] at (01b) {$x_3$};
        \node[state] (11b) at(\sl+1,1) {};
        \node[below right] at (11b) {$y$};
        \node[state] (b) at(\sl+1.6,1.4) {};
        \node[right] at (b) {$z$};
        \draw[thick,->,\colF] (00b) edge node[left] {$a_3$} (01b);
        \draw[->] (10b) edge node[right] {$a'_4$} (11b);
        \draw[->] (00b) edge node[below] {$a_4$} (10b);
        \draw[thick,->,\colG] (01b) edge node[above] {$a'_3$} (11b);
        \draw[thick,->,\colE] (11b) edge node[above left, xshift=5pt] {$b''$} (b);
        \draw[->] (\sl-0.2,0)--(00b);
        \node at (\sl+0.5,0.5) {$q_3$};
        \node at (8.3,0.5) {$\dots$};
\end{tikzpicture}
    \caption{The HDA $\mathcal{X}_n$ is obtained from the disjoint union of HDAs $\mathcal{A}$, $\mathcal{B}_1, \mathcal{B}_2,\dotsc, \mathfrak{B}_n$ by identification of edges and vertices having the same names. All edges $a_i$ and $a_i'$ are labelled with $a$, and $b''$ is labelled by $b$.}
    \label{fig:hp-not-qhp2}
\end{figure}

In order to show the difference between $\bis{qhp}$- and $\bis{hp}$-bisimilarity,
let $\mathcal{X}_n$ for $n\geq 1$ be the HDA that is obtained by gluing HDAs $\mathcal{A}$ and $\mathcal{B}_1,\dotsc,\mathcal{B}_n$ displayed in Fig.~\ref{fig:hp-not-qhp2}.
For the following observations, Figure \ref{fig:snake-paths} is helpful.

\begin{figure}
    \centering
    \begin{tikzpicture}[xscale=3, yscale=1.5]
        \node (00) at (0,0) {$a_3$};
        \node (01) at (0,1) {$a_3x_3$};
        \node (02) at (0,2) {$a_3x_3a'_3$};
        \node (03) at (0,3) {$a_3x_3a'_3y$};
        \draw[->] (00)--(01);
        \draw[->] (01)--(02);
        \draw[->] (02)--(03);
        \node (20) at (2,0) {$a_4$};
        \node (21) at (2,1) {$a_4x_4$};
        \node (22) at (2,2) {$a_4x_4a'_4$};
        \node (23) at (2,3) {$a_4x_4a'_4y$};
        \draw[->] (20)--(21);
        \draw[->] (21)--(22);
        \draw[->] (22)--(23);
        \node(1a) at (0.7, 0.5) {$a_3q_3$};
        \node(2a) at (1.3, 0.5) {$a_4q_3$};
        \draw[->] (00)--(1a);
        \draw[->] (20)--(2a);
        \draw[-,dashed] (1a)--(2a);
        \node(1b) at (0.4, 1.2) {$a_3q_3a'_3$};
        \node(2b) at (0.8, 1.8) {$a_4q_3a'_3$};
        \node(3b) at (1.2, 1.8) {$a_3q_3a'_4$};
        \node(4b) at (1.6, 1.2) {$a_4q_3a'_4$};
        \draw[->] (1a)--(1b);
        \draw[->] (1a)--(3b);
        \draw[->] (2a)--(2b);
        \draw[->] (2a)--(4b);
        \draw[-,dashed] (1b)--(2b);
        \draw[-,dashed] (3b)--(4b);
        \draw[->,thick,blue] (02)--(1b);
        \draw[->,thick,blue] (22)--(4b);
        \node(1c) at (0.4, 2.4) {$a_3q_3a'_3y$};
        \node(2c) at (0.6, 3.2) {$a_4q_3a'_3y$};
        \node(3c) at (1.4, 3.2) {$a_3q_3a'_4y$};
        \node(4c) at (1.6, 2.4) {$a_4q_3a'_4y$};
        \draw[->] (1b)--(1c);
        \draw[->] (2b)--(2c);
        \draw[->] (3b)--(3c);
        \draw[->] (4b)--(4c);
        \draw[-,dashed] (1c)--(2c);
        \draw[-,dashed] (3c)--(4c);
        \draw[-,dashed] (1c)--(4c);
        \draw[-,dashed] (2c)--(3c);
        \draw[->,thick,blue] (03)--(1c);
        \draw[->,thick,blue] (23)--(4c);
        \node (-0) at (-2,0) {$a_2$};
        \node (-1) at (-2,1) {$a_2x_2$};
        \node (-2) at (-2,2) {$a_2x_2a'_2$};
        \node (-3) at (-2,3) {$a_2x_2a'_2y$};
        \draw[->] (-0)--(-1);
        \draw[->] (-1)--(-2);
        \draw[->] (-2)--(-3);
        \node(1q) at (-1.3, 0.5) {$a_2q_2$};
        \node(2q) at (-0.7, 0.5) {$a_3q_2$};
        \draw[->] (-0)--(1q);
        \draw[->] (00)--(2q);
        \draw[-,dashed] (1q)--(2q);
        \node(1r) at (-1.6, 1.2) {$a_2q_2a'_2$};
        \node(2r) at (-1.2, 1.8) {$a_3q_2a'_2$};
        \node(3r) at (-0.8, 1.8) {$a_2q_2a'_3$};
        \node(4r) at (-0.4, 1.2) {$a_3q_2a'_3$};
        \draw[->] (1q)--(1r);
        \draw[->] (1q)--(3r);
        \draw[->] (2q)--(2r);
        \draw[->] (2q)--(4r);
        \draw[-,dashed] (1r)--(2r);
        \draw[-,dashed] (3r)--(4r);
        \draw[->,thick,blue] (-2)--(1r);
        \draw[->,thick,blue] (02)--(4r);
        \node(1s) at (-1.6, 2.4) {$a_2q_2a'_2y$};
        \node(2s) at (-1.4, 3.2) {$a_3q_2a'_2y$};
        \node(3s) at (-0.6, 3.2) {$a_2q_2a'_3y$};
        \node(4s) at (-0.4, 2.4) {$a_3q_2a'_3y$};
        \draw[->] (1r)--(1s);
        \draw[->] (2r)--(2s);
        \draw[->] (3r)--(3s);
        \draw[->] (4r)--(4s);
        \draw[-,dashed] (1s)--(2s);
        \draw[-,dashed] (3s)--(4s);
        \draw[-,dashed] (1s)--(4s);
        \draw[-,dashed] (2s)--(3s);
        \draw[->,thick,blue] (-3)--(1s);
        \draw[->,thick,blue] (03)--(4s);
\end{tikzpicture}
    \caption{A part of the diagram of paths in $\mathcal{X}_n$. Regular arrows denote path extensions, dashed lines denote similarity and blue arrows denote subsumption. Various $\mathcal{X}_n$'s are bisimilar if and only if the diagram can be traversed sideways using arrows having the given set of types.}
    \label{fig:snake-paths}
\end{figure}

\begin{observation}\rm
    $\Path_{\mathcal{X}_n}=\Path_{\mathcal{X}_1}\cup \bigcup_{r\geq 2}\Path_{\mathcal{B}_r}$.
\end{observation}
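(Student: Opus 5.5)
The plan is to prove both inclusions, using only the gluing description of $\mathcal X_n$ in Fig.~\ref{fig:hp-not-qhp2} and the fact that a path is a sequence of cells in which consecutive cells are related by a face map (Definition~\ref{def:path}). The inclusion $\supseteq$ is immediate. $\mathcal X_1$ is a sub-HDA of $\mathcal X_n$ with the same initial cell, so every path in $\mathcal X_1$ is a path in $\mathcal X_n$. Each $\mathcal B_r$ embeds into $\mathcal X_n$ with its start vertex sent to the initial cell, so its paths are also paths of $\mathcal X_n$. (Here, as in the statement, $\Path_{\mathcal B_r}$ means paths from the initial vertex.)

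For $\subseteq$, take a path $\alpha=(x_0,\varphi_1,\dots,x_m)$ in $\mathcal X_n$ with $x_0$ the initial vertex. The first step is to record the cells of $\mathcal X_n$ outside $\mathcal X_1=\mathcal A\cup\mathcal B_1$. These are the new cells of $\mathcal B_r$ for $r\ge2$: the square $q_r$, the edges $a_{r+1}$ and $a'_{r+1}$, and the vertex $x_{r+1}$. Every other cell of $\mathcal B_r$ is identified with a cell of $\mathcal B_{r-1}$, or with $y$, $z$ or $b''$. The key observation is that these new cells form a ``private'' region: the only way to leave the part $\{q_r,a_{r+1},a'_{r+1},x_{r+1}\}\cup\{\text{cells of }\mathcal B_r\}$ is through the cells $y$ and $b''$, or by returning to $a_r$, $x_r$, $a'_r$ or the initial vertex.

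Next I would argue by looking at the first index at which $\alpha$ enters a new cell of some $\mathcal B_r$ with $r\ge2$. At that moment $\alpha$ is in the initial vertex, or in $a_r$ or $x_r$ when moving into $q_r$ or $a_{r+1}$. Before that, $\alpha$ stays in $\mathcal X_1$; the part I need is that it then stays inside $\mathcal B_r$. This holds because every cell of $\mathcal X_1$ adjacent to it is shared with $\mathcal B_r$. The cells after that moment are faces or cofaces of cells in $q_r$, $a_{r+1}$ or $a'_{r+1}$, all of which lie in $\mathcal B_r$, or they lie in $\{y,z,b''\}\subseteq\mathcal B_r$.

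The main obstacle is ruling out a path that wanders between different wings, for instance $a_r\to q_r\to a'_r\to\dots$ into $q_{r-1}$. The directedness of paths handles this. Up-steps only go to cofaces, and down-steps to faces of end type $d^1$. Once a path leaves $x_0$ along some $a$-edge and continues, its cells are determined by a monotone progression through the square that edge belongs to, and then to $y$, $b''$, $z$. I will check, cell by cell, that from $a'_r$ the only up-coface is $q_r$ or $q_{r-1}$ with the edge as a $d^1$-face. Hence the path cannot re-enter a different square via an up-step from a $d^1$-face, since up-steps require the $d^0$-face. Combined with the fact that the start face $\delta^0$ of each $a'_r$ is $x_r$ or $x_{r-1}$, this pins $\alpha$ inside one $\mathcal B_r$ or inside $\mathcal X_1$.
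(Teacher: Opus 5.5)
The paper states this observation without proof. Your plan is the right one: $\supseteq$ is trivial, and $\subseteq$ follows from the directedness of steps. However, the central step of your $\subseteq$ argument is false as written.

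\textbf{The gap.} You treat $a_{r+1},x_{r+1},a'_{r+1}$ as a private region of $\mathcal{B}_r$. For $r<n$ they are also cells of $\mathcal{B}_{r+1}$, and $a_{r+1}$ is a $d^0$-face (the left edge) of $q_{r+1}$. So three of your claims fail:
\begin{itemize}
\item that the region can only be left via $y$, $b''$, $a_r$, $x_r$, $a'_r$ or $i$;
\item that all cofaces of $a_{r+1}$ lie in $\mathcal{B}_r$;
\item that the path stays in the $\mathcal{B}_r$ whose new cell it enters first.
\end{itemize}
For example, for $n\ge 3$ the path $i\to a_3\to q_3$ (two up-steps) first leaves $\mathcal{X}_1$ at $a_3$, which is a new cell of $\mathcal{B}_2$. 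Yet the path lies in $\mathcal{B}_3$, not in $\mathcal{B}_2$.

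Your justification for the prefix is also insufficient. ``Every cell of $\mathcal{X}_1$ adjacent to it is shared with $\mathcal{B}_r$'' says nothing about where the path was earlier. What you actually need is this: no $a_j$ is a $d^1$-face of any cell, so $a_j$ can only be entered by an up-step from $i$. And $i$ is a $d^1$-face of nothing, so it can only be the first cell of a path.

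\textbf{The fix.} Let the square the path enters, not the first cell outside $\mathcal{X}_1$, decide the index.
\begin{itemize}
\item By the facts just stated, an edge $a_j$ can only be the second cell of a path.
\item A square $q_s$ with $s\ge 2$ can only be the third cell. Its only $d^0$-faces are $a_s$ and $a_{s+1}$, and it is a face of no $3$-cell.
\item For $j\ge 2$, $a'_j$ is a $d^1$-face of every square containing it, and the only edge leaving $x_j$ is $a'_j$. So after $q_s$ ($s\ge 2$) or $x_j$ ($j\ge 2$), only $a'$-edges of index $\ge 2$, $y$, $b''$ and $z$ can follow.
\end{itemize}
Hence:
\begin{itemize}
\item A path through some $q_s$ with $s\ge 2$ lies in $\mathcal{B}_s$.
\item A path through $a_j$ with $j\ge 3$ but through no such square is a prefix of $i,a_j,x_j,a'_j,y,b'',z$, so it lies in $\mathcal{B}_{j-1}$.
\item Any other path never meets a cell outside $\mathcal{X}_1$. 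The remaining outside cells $x_j,a'_j$ ($j\ge 3$) are reachable only via $a_j$, $q_{j-1}$ or $q_j$.
\end{itemize}

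\textbf{Smaller slips.} Your last paragraph gestures at this mechanism, but several of its facts are off:
\begin{itemize}
\item $\delta^0(a'_r)=x_r$ only, not ``$x_r$ or $x_{r-1}$''.
\item $a'_1$ \emph{does} have an up-step, into the face of $\mathcal{A}$ spanned by $a'_1$ and $b''$. The paper's subsumption of $\omega_1$ needs it. This is harmless here, since such paths lie in $\mathcal{X}_1$.
\item No step leads from the vertex $x_r$ into $q_r$ or $a_{r+1}$.
\item The union in the statement must be read over $2\le r\le n$.
\end{itemize}
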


\begin{observation}\rm
\label{lem:snake-locality}
    Let $\alpha\in \Path_{\mathcal{X}_n}$ be a path that can be obtained from $(i,a_r)$ by applying the closure operations
    \begin{enumerate}
        \item $\ininc, \subsu^\ell, \simeq$,
        \item $\ininc, \sqsupseteq^\ell, \simeq$, or    
        \item $\ininc, \subsu^\ell, \unsubsu^\ell, \hookleftarrow$.
    \end{enumerate}
    Then $\alpha \in \Path_{\mathcal{X}_r}$ for all $r$, and 
    $\alpha \in \Path_{\mathcal{B}_{r-1}}\cup \Path_{\mathcal{B}_{r}}$
    for $r\geq 3$.
\end{observation}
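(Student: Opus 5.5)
The plan is to find an invariant that holds for $(i,a_r)$ and is preserved by every operation in each of the three lists; the statement then follows by induction on the number of operations applied. Throughout, fix $r$ and write $C_r := \Path_{\mathcal B_{r-1}}\cup\Path_{\mathcal B_r}$ for $r\ge 3$; for $r\le 2$, read $C_r$ as $\Path_{\mathcal X_r}$.

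The first step is a geometric description of the cells of $\mathcal X_n$ that contain the edge $a_r$ for $r\ge3$. From Fig.~\ref{fig:hp-not-qhp2}, $a_r$ is an edge of $\mathcal B_{r-1}$ (as the $a'$-free side of $q_{r-1}$) and of $\mathcal B_r$ (as the bottom edge of $q_r$), and of no other $\mathcal B_s$ or $\mathcal A$. Its only cofaces are the squares $q_{r-1}$ and $q_r$. A path starting at $i$ can leave these two squares only through the vertices $x_{r-1},x_r$ and the edges $a'_{r-1},a'_r$, $b''$. Since paths only move forward (an up-step enters a coface, a down-step goes to an end face), a path never returns along an edge into its source. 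Moreover, $y$ and $z$ together with $b''$ lie in every $\mathcal B_s$, so passing through them does not leave $\mathcal B_{r-1}$ or $\mathcal B_r$.

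The invariant I would use, call it $P_r$, has two parts. First, $\alpha\in C_r$ and every cell of $\alpha$ lies in $\mathcal B_{r-1}$ or $\mathcal B_r$. Second, either the first step of $\alpha$ is the up-step into $a_r$, or the first two steps of $\alpha$ start $a_r$ and one of $a_{r-1},a_r$'s partner edges inside the square $q_{r-1}$ or $q_r$ (i.e.\ $\alpha$ reaches $q_{r-1}$ or $q_r$ at position $2$). With this invariant in hand I would check each operation in turn:
\begin{itemize}
\item For $\ininc$, extensions stay within $q_{r-1},q_r$, their faces, $y$, $b''$ and $z$, by the first step above, and extensions do not touch the first two steps.
\item For $\subsu^\ell$ and $\unsubsu^\ell$, the set of edges used is unchanged up to swapping a square for its two boundary edges. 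Also $\ell\ge 2$ is forced whenever the prefix pattern matters: at $\ell=1$ a subsumption would require a down-step first, which is impossible from a $0$-cell.
\item For $\simeq^\ell$ with $\ell\ge2$, the first step is unchanged. For $\ell=1$, the two start phases inside $q_{r-1}$ or $q_r$ are exchanged, so the first step becomes $a_{r-1}$ or the partner of $a_r$ in $q_r$. The second clause of $P_r$ still holds, because the path still passes through that square at position~$2$.
\end{itemize}
Membership in $\Path_{\mathcal X_r}$ then follows immediately, since $\mathcal B_{r-1},\mathcal B_r\subseteq\mathcal X_r$.

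The main obstacle is list~3 and, more subtly, the interaction of $\simeq$ with later operations in lists~1 and~2. Here one must rule out that some path reaches a prefix such as $(i,a_{r-1})$ alone, from which extension would enter $q_{r-2}$ and escape $C_r$. For lists~1 and~2 this cannot happen, because the only operation that shortens paths, namely restriction $\hookleftarrow$, is absent, and $\subsu^\ell,\unsubsu^\ell,\simeq^\ell$ preserve length. Consequently the second clause of $P_r$, that the path passes through $q_{r-1}$ or $q_r$ at position $2$, survives every step; I must verify carefully that no $\unsubsu^\ell$ with $\ell=2$ destroys it. For list~3, restriction is allowed but $\simeq$ is not. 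There I would replace the second clause by the stronger statement that the first step is always $a_r$. This holds because only $\simeq^1$ can alter the first step: $\subsu^1$ and $\unsubsu^1$ are impossible, as noted above, and restriction, extension, $\subsu^\ell$ and $\unsubsu^\ell$ for $\ell\ge2$ leave it fixed. With the first step fixed to $a_r$, the cell after it is $x_r$, $q_{r-1}$ or $q_r$, and the first part of $P_r$ again follows. So the two halves of the argument are separated exactly by which of $\simeq$ and $\hookleftarrow$ is present, matching the role of Fig.~\ref{fig:snake-paths}.
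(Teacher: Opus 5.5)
\textbf{There is a genuine gap for list~2.} Your treatment of list~1 is sound. Once $x_2\in\{q_{r-1},q_r\}$ it can never change: the square is entered by an up-step and left by a down-step, so neither $\simeq^2$ nor $\subsu^2$ applies, and $\simeq^1$ keeps $x_2$. Moreover, $\simeq^1$ can move the first cell off $a_r$ only when $x_2$ is such a square.

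The same invariant is \emph{not} preserved under list~2, and it breaks exactly at the $\unsubsu^2$ case you flagged and left unchecked. Take $(i,a_r)\ininc(i,a_r,q_r)\simeq^1(i,a_{r+1},q_r)\ininc(i,a_{r+1},q_r,a'_{r+1})$. The up-step into $q_r$ and the down-step to $a'_{r+1}$ concern different events, so $\unsubsu^2$ yields $(i,a_{r+1},x_{r+1},a'_{r+1})$. Its first cell is $a_{r+1}$ and its second cell is the vertex $x_{r+1}$, so your second clause fails. Your inference from length preservation to the survival of that clause is a non sequitur.

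This path still lies in $\Path_{\mathcal{B}_r}$, but one further $\subsu^2$ would send it to $(i,a_{r+1},q_{r+1},a'_{r+1})\notin\Path_{\mathcal{B}_{r-1}}\cup\Path_{\mathcal{B}_r}$. The walk $\simeq^1,\unsubsu^2,\subsu^2$ is precisely the snake of Fig.~\ref{fig:snake-paths}. So what prevents escape in lists~1 and~2 is not the absence of $\hookleftarrow$. It is that each of these lists lacks one direction of subsumption, and no single invariant can serve both.

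For list~2 you need to add the disjuncts $(x_1,x_2)=(a_{r-1},x_{r-1})$ and $(x_1,x_2)=(a_{r+1},x_{r+1})$; such paths lie in $\mathcal{B}_{r-1}$ resp.\ $\mathcal{B}_r$. Then check that no list-2 operation changes positions $1,2$ of such a path:
\begin{itemize}
\item $\simeq^1$ and $\simeq^2$ need two consecutive steps in the same direction;
\item $\unsubsu^2$ needs an up-step followed by a down-step;
\item $\subsu^2$ is not available in list~2.
\end{itemize}
This enlarged invariant is not closed under list~1, which is why the two lists must be handled separately.

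\textbf{List~3 needs a proviso on restriction.} Your claim that restriction leaves the first step fixed fails for restriction to the empty path $(i)$, which is a prefix of every path. From $(i)$, extension reaches all of $\Path_{\mathcal{X}_n}$, e.g.\ $(i,a_{r+2})$ or paths inside $\mathcal{A}$. So the observation only holds when $\hookleftarrow$ is restricted to non-empty prefixes. This is presumably how it is meant for its use in the Proposition, and your proof should state the proviso explicitly.

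With that proviso your list-3 argument works, provided you also note that subsumption at $y$ cannot leave $\mathcal{B}_{r-1}\cup\mathcal{B}_r$. Among the edges $a'_k$, only $a'_1$ is a start face of a square (a face of $\mathcal{A}$ with end face $b''$), and $r\ge 3$ excludes it.

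\textbf{The cases $r\le 2$ are missing.} You set $C_r=\Path_{\mathcal{X}_r}$ for $r\le 2$ but never treat these cases. There, paths do enter $\mathcal{A}$, and you still have to show, by the same case split over the three lists, that they never reach $q_{r+1}$.
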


\begin{proposition}\rm
    The HDAs $\mathcal{X}_m$ and $\mathcal{X}_n$ for $3\leq m<n$ in Fig.\@ \ref{fig:hp-not-qhp2} are:
    \begin{enumerate}
    \item not $\{{\subsu}, {\sqsupseteq}, {\simeq}\}$-bisimilar (hp).
    \item not $\{{\hookleftarrow}, {\simeq}\}$-bisimilar (reST).
    \item $\{{\subsu}, {\simeq}\}$-bisimilar (qhp).
    \item $\{{\sqsupseteq}, {\simeq}\}$-bisimilar ($=$ ST).
    \item $\{{\subsu}, {\sqsupseteq}, {\hookleftarrow}\}$-bisimilar ($=$ hshp).
    \end{enumerate}
\end{proposition}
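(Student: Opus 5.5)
The strategy is to read everything off the ``snake'' diagram of Fig.~\ref{fig:snake-paths}, which organises the paths of $\mathcal X_n$ by the branch $\mathcal B_r$ they enter. The key asymmetry is this. $\mathcal X_n$ has a branch $\mathcal B_n$ whose far end (the $a_{n+1}$-side) is missing, while in $\mathcal X_m$ the branch $\mathcal B_m$ is at the end. Write $e_r$ for the sequential path $(i,a_r,x_r,a'_r,y)$ that starts and ends an $a$ via edge $a_r$.

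\textbf{Negative results 1 and 2.} First relate the two initial paths. Suppose a bisimulation $\mathfrak R$ of the given type exists. We must then match $e_n$ in $\mathcal X_n$ with some $e_r$ in $\mathcal X_m$; the $b''$ extension forces $r\ge 1$ and $r\le m$, since all $e_r$ have equal ST-traces.

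For hp, chase a chain through the diagram. From $e_r$, unsubsumption $\unsubsu$ leads into the square $q_r$ (blue arrow into $a_rq_ra'_r$). Similarity then moves to $a_{r+1}q_ra'_{r+1}$, and subsumption $\subsu$ back out gives $e_{r+1}$; moving with $q_{r-1}$ instead gives $e_{r-1}$. Each such move in $\mathcal X_n$ must be mirrored in $\mathcal X_m$ at the same positions $\ell$. By Observation~\ref{lem:snake-locality}, a matched path stays in an adjacent window of branches. Iterating, the pair $(e_r,e_{r'})$ walks sideways in lockstep in both automata. We start the walk from a pair in which the $\mathcal X_n$ side can go strictly further in one direction than the $\mathcal X_m$ side. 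The cleanest choice is to walk downward: both automata end at $\mathcal B_1/\mathcal A$, where $e_1$ additionally enters the cube of $\mathcal A$ (the $b$-extension before $a'_1$ etc.). We then need to show that the index differences are preserved, so that one side hits the $\mathcal A$-end while the other does not. This produces a move with no answer; hence item 1.

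For reST the same walk works, but the exit from a square is replaced by restriction $\hookleftarrow$ followed by extension. From $a_{r+1}q_ra'_{r+1}$ we restrict to $a_{r+1}$ and extend to $e_{r+1}$, so similarity and restriction suffice to traverse the snake; hence item 2.

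\textbf{Positive results 3, 4 and 5.} Items 3--5 use Observation~\ref{lem:snake-locality}. With only $\{\subsu,\simeq\}$, or only $\{\unsubsu,\simeq\}$, or only $\{\subsu,\unsubsu,\hookleftarrow\}$, the closure from $(i,a_r)$ never leaves $\mathcal B_{r-1}\cup\mathcal B_r$. We therefore define $\mathfrak R$ to relate each path of $\mathcal X_m$ lying in $\mathcal X_1\cup\mathcal B_2$ with the identical path of $\mathcal X_n$. A path starting $a_r$ in $\mathcal X_n$ with $r>m$ is related to the corresponding path in $\mathcal X_m$ with index shifted to $m$ (or $m-1,m$ for the pair of branches), and symmetrically. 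By locality, none of the allowed operations connect the shifted region to the low region, so each clause is checked locally by the evident isomorphism of the window $\mathcal B_{r-1}\cup\mathcal B_r$ with $\mathcal B_{m-1}\cup\mathcal B_m$. For item 4 we can instead just invoke Thm.~\ref{STun}: the HDAs are plainly ST-bisimilar via equal ST-traces.

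\textbf{Main obstacle.} The hard part is the negative direction. We need a precise invariant showing that the forced matching shifts the index consistently in both automata. We must also pin down why the window of Observation~\ref{lem:snake-locality} is broken exactly by combining $\subsu$ with $\unsubsu$, or $\simeq$ with $\hookleftarrow$. The first thing to try is to track $r-r'$ along the chain and use the boundary at $\mathcal A$ to derive a contradiction.
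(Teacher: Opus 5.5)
Your overall picture is the right one: walk along the snake of Fig.~\ref{fig:snake-paths}, anchor at $\mathcal A$, and use locality for the positive items. But both halves of the proposal have genuine gaps.

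\textbf{Items 1 and 2.} The decisive step is left open, and the invariant you propose for it is false. From a related pair $(e_r,e_{r'})$, a move $e_r\to e_{r-1}$ may be answered by $e_{r'}\to e_{r'+1}$, so $r-r'$ is not preserved. Observation~\ref{lem:snake-locality} also says nothing about the hp operations; escaping every window is exactly what the hp chain does.

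What works is an induction anchored at $\mathcal A$, with an anchor that is observable using the operations at hand.
\begin{itemize}
\item For hp, take $\omega_r=(i,a_r,x_r,a'_r,y,b'',z)$. Paths of this ST-trace inside $\mathcal A$, such as $\omega_1$, have a $\subsu^4$-successor through a 2-face of the cube containing $b''$. No $\omega_r$ with $r\ge2$ has one.
\item The step is $\omega_r\subsu^2\alpha\simeq^1\alpha'\simeq^3\alpha''\unsubsu^2\omega_{r-1}$, with $\alpha$ passing through $q_{r-1}$. Note that entering a square is subsumption, not unsubsumption as you wrote. For $s\ge 2$, every answer to this chain from $\omega_s$ ends in $\omega_{s-1}$ or $\omega_{s+1}$.
\item Hence $\omega_r\,\mathfrak R\,\omega_s$ with $2\le r<s$ forces $\omega_{r-1}\,\mathfrak R\,\omega_{s\pm1}$ with $s\pm1>r-1$. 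Induction on $\min(r,s)$ then shows that $\omega_r\,\mathfrak R\,\omega_s$ implies $r=s$.
\item Finally, $\omega_{n+1}\in\Path_{\mathcal X_n}$ must have a partner by the extension clause, but it has none in $\mathcal X_m$.
\end{itemize}
For reST neither $\subsu$ nor $\unsubsu$ is available, so your walk cannot enter the square either; entry must also go through restriction and extension. The step becomes $\eta_r\ininc\beta\simeq^1\beta'\hookleftarrow\eta_{r\pm1}$ with $\eta_r=(i,a_r)$. The anchor is that $\eta_1$ can be extended with $b^+$, while $\eta_r$ for $r\ge2$ cannot.

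\textbf{Items 3--5.} Your relation is underspecified, and the locality claim behind it is wrong. Observation~\ref{lem:snake-locality} bounds the closure of $(i,a_r)$, not the set of paths whose first edge is $a_r$. When $\simeq$ is available (items 3 and 4), splitting paths by their first edge is not stable under the moves. For example, $(i,a_m,q_m)\simeq^1(i,a_{m+1},q_m)$ connects a path you keep fixed with one you shift. Moreover, $a_{n+1}$ lies in a single square of $\mathcal X_n$, so at that end there is no window isomorphism onto $\mathcal B_{m-1}\cup\mathcal B_m$.

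The paper's remedy is as follows.
\begin{itemize}
\item Let $\mathcal P_{\ge3}$ be the closure of all $(i,a_r)$, $r\ge3$, under the permitted operations. Its complement is the same set in both HDAs and is related by the identity.
\item Inside $\mathcal P_{\ge3}$, two paths are related whenever their ST-traces agree.
\item A move of $\beta\in\Path_{\mathcal B_s}$ is mirrored from $\alpha\in\Path_{\mathcal B_r}$ along an isomorphism $\mathcal B_s\to\mathcal B_r$ that may exchange the roles of $a_r$ and $a_{r+1}$.
\end{itemize}
This swap, together with matching by trace rather than by index, is what absorbs the end branches.

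For item 4, the relation ``equal ST-traces'' is not an ST-bisimulation here. The path $(i,a_1)$ can be extended with $b^+$ inside $\mathcal A$, while $(i,a_3)$ cannot. Item 4 does hold, but it follows from item 3 together with Thm.~\ref{STun}.
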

\begin{proof}
Note that $\mathcal{X}_m\subseteq \mathcal{X}_n$.

    1. 
    Let $\mathfrak{R}$ be a bisimulation between $\mathcal{X}_m$ and $\mathcal{X}_n$. 
    Let $\omega_r$ be the path $(i,a_r,x_r,a_r',y,b'',z)$. Note that $\omega_1$ cannot be related to $\omega_r$ for $r\geq 2$: $\omega_1$ can be subsumed by a path having trace $a^+a^1a^+b^+b^4a^3$ while $\omega_r$ cannot.

    Consider a sequence of paths:
      $$  \omega_r \subsu^2 \alpha \simeq^{1,3} \alpha' \sqsupseteq^2 \alpha''. $$
    Then $\alpha''=\omega_{r-1}$ or $\alpha_{r+1}$. As a consequence, $\omega_2$ cannot be related to any path $\omega_r$ for $r\geq 3$. Following this argument we obtain that $\mathcal{X}_m$ and $\mathcal{X}_n$ can be bisimilar only for $m=n$.

    2. The argument is similar. Let $\eta_r=(i,a_r)$. Again, $\eta_1$ cannot be bisimilar to $\eta_r$ for $r>1$: $\eta_1$ can be extended to a path with ST-trace $a^+b^+$ while $\eta_r$ cannot. Furthermore, the sequence
    $$ \eta_r \ininc \beta \simeq^1 \beta' \hookleftarrow \beta''$$
    with $\ST(\beta)=\ST(\beta')=a^+a^+$ and $\ST(\beta'')=a^+$ can only produce $\beta''=\eta^s$ for $s\in\{r-1,r+1\}$. Thus, we show by induction that $\eta_r\mathfrak{R}\eta_s \implies r=s$.
    
    3. Let $\mathscr{P}_{\geq 3}^n\subseteq \Path_{\mathcal{X}_n}$ be the set of paths that can be obtained from $(i,a_r)$, $r\geq 3$, by applying the closure operations $\ininc$, $\subsu$ and $\simeq$.
    Namely, $\mathscr{P}_{\geq 3}^n$ consists of paths $(i,a_r)$ for $r\geq 3$ and paths of length $\geq 2$ that cross $x_r$ for $r\geq 3$ or $q_r$ for $r\geq 2$. Observation~\ref{lem:snake-locality} implies that $\mathscr{P}_{\geq 3}^n\subseteq \bigcup_{r\geq 2}\Path_{\mathcal{B}_r}$.
    Define a relation $\mathfrak{R}\subseteq \Path_{\mathcal{X}_m}\times \Path_{\mathcal{X}_n}$ by
       $$ \alpha\mathfrak{R}\beta \iff \alpha=\beta \;\lor \; (\alpha\in \mathscr{P}_{\geq 3}^m\;\land\; \beta\in \mathscr{P}_{\geq 3}^n\;\land\; \ST(\alpha)=\ST(\beta)).$$
    Another consequence of Observation~\ref{lem:snake-locality} is that $\Path_{\mathcal{X}_n}\setminus \mathscr{P}_{\geq 3}^n=\Path_{\mathcal{X}_m}\setminus \mathscr{P}_{\geq 3}^m$; denote it $\mathfrak{A}$. Paths in $\mathfrak{A}$ are never related to paths outside $\mathfrak{A}$.
    
    Assume that $\alpha\mathfrak{R}\beta$ and $\beta \ltimes \beta'$ for $\ltimes\in\{{\ininc}, {\subsu}, {\simeq}\}$. If $\alpha\in\mathfrak{A}$, then $\beta=\alpha$ and we have $\alpha\ltimes \beta'$ (and $\beta'\mathfrak{R}\beta'$).
    
    If $\alpha\in \mathscr{P}_{\geq 3}^m$, then $\beta'\in \mathscr{P}_{\geq 3}^n$. This implies that $\beta,\beta'\in\Path_{\mathcal{B}_s}$ for some $s\geq 2$. Similarly, $\alpha\in \Path_{\mathcal{B}_r}$ for $r\geq 2$. Moreover, there exists a map $f:\mathcal{B}_s\to\mathcal{B}_r$, which is either
\begin{align*}
    (q_s,a_s,a_{s+1},a'_s,a'_{s+1},x_s,x_{s+1})&\mapsto
    (q_r,a_r,a_{r+1},a'_r,a'_{r+1},x_r,x_{r+1})
    \\
    (q_s,a_s,a_{s+1},a'_s,a'_{s+1},x_s,x_{s+1})&\mapsto
    (q_r,a_{r+1},a_{r},a'_{r+1},a'_{r},x_{r+1},x_{r})
\end{align*}
    (an isomorphism with a possible swap), and the identity on $i_{\mathcal X_n}$, $b''$ and $z$, such that $f(\beta)=\alpha$. We put $\alpha'=f(\beta')$. Clearly $\alpha\ltimes \alpha'$, $\ST(\alpha')=\ST(\beta')$, and $\alpha',\beta'\not\in \mathfrak{A}$ so $\alpha'\mathfrak R \beta'$.

    4., 5. A similar argument applies.
\end{proof}

\begin{corollary}\rm
${\bis{hp}} \subsetneq {\bis{qhp}}$.
\end{corollary}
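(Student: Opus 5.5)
The plan splits the corollary into its two halves: the inclusion ${\bis{hp}} \subseteq {\bis{qhp}}$ and its strictness.

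\textbf{Inclusion.} By Definition~\ref{def: hp bisimilarity}, an hp-bisimulation satisfies Clauses 1--6. A qhp-bisimulation is a $\{\subsu,\simeq\}$-bisimulation, so it only needs Clauses 1--4 and 6. Every hp-bisimulation therefore is a qhp-bisimulation: we simply forget the unsubsumption clause. Hence $\mathcal X \bis{hp} \mathcal Y$ implies $\mathcal X \bis{qhp} \mathcal Y$.

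\textbf{Strictness.} For this we invoke the preceding proposition on the glued HDAs $\mathcal X_n$ of Fig.~\ref{fig:hp-not-qhp2}. Fix any $3\le m<n$, for instance $m=3$ and $n=4$.
\begin{itemize}
\item Item~3 of the proposition gives $\mathcal X_m \bis{qhp} \mathcal X_n$.
\item Item~1 states that $\mathcal X_m$ and $\mathcal X_n$ are not $\{\subsu,\unsubsu,\simeq\}$-bisimilar, which is precisely $\mathcal X_m \not\bis{hp} \mathcal X_n$.
\end{itemize}
So the pair $(\mathcal X_m,\mathcal X_n)$ lies in $\bis{qhp}$ but not in $\bis{hp}$, and the inclusion is proper.

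The only point needing care is notational: $\bis{hp}$ is exactly $\{\subsu,\unsubsu,\simeq\}$-bisimilarity and $\bis{qhp}$ is exactly $\{\subsu,\simeq\}$-bisimilarity, as fixed in the definitions of Section~\ref{sec: Bis for HDA}. With that identification, the corollary follows immediately and no further obstacle arises; all the substantive work sits in items~1 and~3 of the proposition, which we take as given.
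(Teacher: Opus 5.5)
Your proposal is correct and follows the paper's own route. The corollary appears without separate proof directly after the proposition. The inclusion is immediate because an hp-bisimulation satisfies every clause required of a qhp-bisimulation. Strictness comes from items~1 and~3 of that proposition, applied to $\mathcal X_m,\mathcal X_n$ with $3\le m<n$.
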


\begin{figure}
    \centering
    \begin{tikzpicture}[scale=0.9]
      \tikzset{
        point/.style={circle, fill, inner sep=1.5pt},
        dot/.style={circle, fill, inner sep=0.8pt},
        arrow/.style={->, >=stealth, line width=0.8pt},
        label/.style={font=\small},
    }
    \begin{scope}[shift={(-4,0)}] 
    \node at (-1,5) {$\mathcal{X}$};
    \coordinate (00) at (0, 0);
    \coordinate (10) at (2, 0);
    \coordinate (20) at (4, 0);
    \coordinate (01) at (0, 2);
    \coordinate (11) at (2, 2);
    \coordinate (21) at (4, 2);
    \coordinate (a) at (-1.4, 3.4);
    \coordinate (b) at (0.6, 3.4);
    \coordinate (c) at (3.4, 3.4);
    \coordinate (d) at (5.4, 3.4);
    \coordinate (e) at (2, 4);
    \coordinate (f) at (0.6, 5.4);
    \path[fill=black!15]  (00) to (20) to (21) to (d) to (c) to (11) to (e) to (f) to (b) to (a) to (01) to (00);
    \path[fill=green!15] (2,0) to (2,1) to (3,1) to (3,2) to (4.4,3.4) to (5.4,3.4) to (4,2) to (4,0) to (2,0);
    \node[point] (p00) at (00) {};
    \node[point] (p10) at (10) {};
    \node[point] (p20) at (20) {};
    \node[point] (p01) at (01) {};
    \node[point] (p11) at (11) {};
    \node[point] (p21) at (21) {};
    \node[point] (pa) at (a) {};
    \node[point] (pb) at (b) {};
    \node[point] (pc) at (c) {};
    \node[point] (pd) at (d) {};
    \node[point] (pe) at (e) {};
    \node[point] (pf) at (f) {};
    \draw[arrow] (p10) -- (p00) node[midway,below] {$a$};
    \draw[arrow] (p10) -- (p20) node[midway,below] {$a$};
    \draw[arrow] (p11) -- (p01) node[midway,below] {$a$};
    \draw[arrow] (p11) -- (p21) node[near start,below] {$a$};
    \draw[arrow] (pb) -- (pa) node[midway,above] {$a$};
    \draw[arrow] (pc) -- (pd) node[midway,above] {$a$};
    \draw[arrow] (p00) -- (p01) node[midway,left] {$d$};
    \draw[arrow] (p10) -- (p11) node[midway,left] {$d$};
    \draw[arrow] (p20) -- (p21) node[midway,right] {$d$};
    \draw[arrow] (p01) -- (pa) node[midway,below left] {$b$};
    \draw[arrow] (p11) -- (pb) node[midway,below left] {$b$};
    \draw[arrow] (p11) -- (pc) node[midway,below right] {$b$};
    \draw[arrow] (p21) -- (pd) node[midway,below right] {$b$};
    \draw[arrow] (pe) -- (pf) node[midway,above right] {$b$};
    \draw[arrow] (p11) -- (pe) node[midway,right] {$c$};
    \draw[arrow] (pb) -- (pf) node[midway,left] {$c$};
    \draw[arrow] (2,-0.4) -- (p10);
    \draw[arrow,green!50!black,very thick] (2.08,0)--(2.08,1)--(3,1)--(3,2)--(3.7,2.7);
    \node[below right,green!50!black] at (3,1) {$\alpha$};
    \draw[arrow,green!30!red!80!black,very thick] (2.00,0)--(2.00,2)--(2.7,2.7);
    \node[below left,green!30!red!80!black] at (2,1.8) {$\gamma$};
    \end{scope}
    \begin{scope}[shift={(4.3,0)}] 
    \node at (-1,5) {$\mathcal{Y}$};
    \coordinate (00) at (0, 0);
    \coordinate (10) at (2, 0);
    \coordinate (01) at (0, 2);
    \coordinate (11) at (2, 2);
    \coordinate (21) at (4, 2);
    \coordinate (a) at (-1.4, 3.4);
    \coordinate (b) at (0.6, 3.4);
    \coordinate (c) at (3.4, 3.4);
    \coordinate (d) at (5.4, 3.4);
    \coordinate (e) at (2, 4);
    \coordinate (f) at (0.6, 5.4);
    \path[fill=black!15]  (00) to (10) to (11) to (21) to (d) to (c) to (11) to (e) to (f) to (b) to (a) to (01) to (00);
    \path[fill=blue!08] (2,0) to (2,1) to (1,1) to (1,2) to (-0.4,3.4) to (-1.4,3.4) to (0,2) to (0,0) to (2,0);
    \node[point] (p00) at (00) {};
    \node[point] (p10) at (10) {};
    \node[point] (p20) at (20) {};
    \node[point] (p01) at (01) {};
    \node[point] (p11) at (11) {};
    \node[point] (p21) at (21) {};
    \node[point] (pa) at (a) {};
    \node[point] (pb) at (b) {};
    \node[point] (pc) at (c) {};
    \node[point] (pd) at (d) {};
    \node[point] (pe) at (e) {};
    \node[point] (pf) at (f) {};
    \draw[arrow] (p10) -- (p00) node[midway,below] {$a$};
    \draw[arrow] (p11) -- (p01) node[near start,below] {$a$};
    \draw[arrow] (p11) -- (p21) node[midway,below] {$a$};
    \draw[arrow] (pb) -- (pa) node[midway,above] {$a$};
    \draw[arrow] (pc) -- (pd) node[midway,above] {$a$};
    \draw[arrow] (p00) -- (p01) node[midway,left] {$d$};
    \draw[arrow] (p10) -- (p11) node[midway,right] {$d$};
    \draw[arrow] (p01) -- (pa) node[midway,below left] {$b$};
    \draw[arrow] (p11) -- (pb) node[midway,below left] {$b$};
    \draw[arrow] (p11) -- (pc) node[midway,below right] {$b$};
    \draw[arrow] (p21) -- (pd) node[midway,below right] {$b$};
    \draw[arrow] (pe) -- (pf) node[midway,above right] {$b$};
    \draw[arrow] (p11) -- (pe) node[midway,right] {$c$};
    \draw[arrow] (pb) -- (pf) node[midway,left] {$c$};
    \draw[arrow] (2,-0.4) -- (p10);
    \draw[arrow,blue!80!black,very thick] (1.92,0)--(1.92,1)--(1,1)--(1,2)--(0.3,2.7);
    \node[below left,blue!80!black] at (1,1) {$\beta$};
    \draw[arrow,blue!50!red,very thick] (2.00,0)--(2.00,2)--(1.3,2.7);
    \node[below right,blue!50!red] at (2,1.8) {$\delta$};
    \draw[arrow,blue!50!red,very thick,dashed] (1.3,2.7)--(1.3,3.7);
    \node[above left,blue!50!red] at (1.4,3.6) {$\delta'$};
    \end{scope}
    \end{tikzpicture}
    \caption{An example of HDAs that are reST-bisimilar but not ureST-bisimilar.}
    \label{fig:ur-not-ure}
\end{figure}

\begin{theorem}
    ${\bis{ureST}} \subsetneq {\bis{reST}}$.
\end{theorem}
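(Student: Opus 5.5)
The inclusion ${\bis{ureST}} \subseteq {\bis{reST}}$ is immediate: a $\{\unsubsu,\simeq,\hookleftarrow\}$-bisimulation is in particular a $\{\simeq,\hookleftarrow\}$-bisimulation. For strictness I will use the HDAs $\mathcal X$ and $\mathcal Y$ of Fig.~\ref{fig:ur-not-ure}. I must show two things: that there is a reST-bisimulation between them, and that no ureST-bisimulation exists.

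\emph{Non-existence of a ureST-bisimulation.} Suppose $\mathfrak R$ is a ureST-bisimulation. By the initial clause and repeated path extension, the highlighted path $\beta$ in $\mathcal Y$ (trace roughly $d^+a^+d^{\cdot}\,b^+\cdots$) must be matched by a path of $\mathcal X$ with the same ST-trace. By inspection the candidates are $\alpha$ and its mirror image on the left half of $\mathcal X$. I then use unsubsumption to push the start of $b$ before the end of $d$, or the like, turning $\beta$ into a path such as $\delta$ through the corner $2$-cell. Similarity and restriction transport this to the prefix structure. The aim is to reach a path of $\mathcal Y$ whose continuation $\delta'$ (the $c$-move into the top region) has no counterpart on the $\mathcal X$ side reached from $\alpha$ via the matching operations, or vice versa via $\gamma$. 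Concretely, the $\mathcal X$-side path $\gamma$ obtained from $\alpha$ by unsubsumption lands at a cell from which the ST-trace forces the right wing (the green region), where the $c$-extension is unavailable. Meanwhile $\delta$ can be extended by $\delta'$. Path extension then fails, giving a contradiction. I will do the case check for each candidate, using symmetry of $\mathfrak R$ where convenient.

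\emph{Existence of a reST-bisimulation.} Here I follow the style of the shp-bisimulation in the proof of the qhp/shp separation. Each path avoiding the distinguishing shaded regions is related to the unique path on the other side with the same ST-trace and the same geometric position. Paths entering the green (resp.\ blue) wing are related to ST-trace-equal paths in the corresponding wing of the other automaton. I then check the clauses. Extension and the ST-label clause hold by construction. Path restriction holds because prefixes of related paths remain related by the same rule. Similarity holds because the $\simeq^\ell$ moves only permute adjacent starts or adjacent ends, so they never convert a path in one wing into a path requiring a $c$-move. The key point is that the asymmetry between $\mathcal X$ and $\mathcal Y$ is detectable only by moving a start past an end, i.e.\ by $\unsubsu$, which reST does not require.

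\emph{Main obstacle.} The hardest step is making the non-existence argument airtight. I need to enumerate all paths with the relevant ST-trace in $\mathcal X$, including those produced by similarity and restriction, and show that each is ruled out. Simultaneously I must verify that the reST relation really is closed under $\simeq$ and $\hookleftarrow$ near the corner cell at the central vertex, where the wings meet. I would first attempt the direct case analysis on the small number of cells adjacent to the centre vertex. If that becomes messy, I would fall back on the $(I,M)$-system view and exhibit a distinguishing formula using an $\unsubsu$-modality.
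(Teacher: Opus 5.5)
Your choice of example, and your reST-bisimulation, are essentially the paper's. The reST-bisimulation is the identity on $\Path_{\mathcal Y}$ plus reflection of the paths through the green region, once you read ``ST-trace-equal paths in the corresponding wing'' as the reflection. The non-existence half, however, does not go through as you set it up. You start from $\beta\in\Path_{\mathcal Y}$ and plan to rule out each of its candidate partners in $\mathcal X$. Since $\mathcal Y\subseteq\mathcal X$, the ``mirror image on the left half of $\mathcal X$'' is $\beta$ itself. Pairing $\beta$ with its own copy satisfies every clause, because each operation is answered by the identical operation, so no case analysis can rule this candidate out.

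The contradiction must come from the other side. $\alpha$ is a path of $\mathcal X$, so it must be related to some path of $\mathcal Y$ with trace $d^+a^+d^1b^+$. Since $\mathcal Y$ lacks the lower-right square, $\beta$ is the only such path. Hence $(\alpha,\beta)\in\mathfrak R$ is forced, and that is the pair to attack. For the same reason you need no enumeration of paths in $\mathcal X$. The responses from $\alpha$ are forced, because each of the following has a unique result: unsubsumption at a fixed position, similarity of two adjacent starts, and restriction to a fixed length (ST-traces fix the length).

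Second, the concrete move you describe goes the wrong way. In $\beta$ the event $b$ starts after $d$ has ended. Moving the start of $b$ before the end of $d$ would be a subsumption at position $3$, which is not available (no cell has $b$ and $d$ concurrent) and is not a ureST clause anyway. The sequence that works is $\beta\unsubsu^2\beta'\simeq^3\beta''\hookleftarrow\delta$:
\begin{itemize}
\item move the end of $d$ before the start of $a$, so the path passes through the central vertex, with trace $d^+d^1a^+b^+$;
\item swap the starts of $a$ and $b$;
\item restrict to $\delta$ with trace $d^+d^1b^+$, which ends on the left $b$-edge and extends by $c^+$ to $\delta'$.
\end{itemize}
The forced answers from $\alpha$ yield $\gamma$ with the same trace, ending on the right $b$-edge. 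That edge is a cell of $\mathcal Y$, not of the green region as you state, and it has no adjacent $c$-square. So $\delta\ininc\delta'$ cannot be matched. With these two corrections your argument coincides with the paper's.
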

\begin{proof}
    Let $\mathcal{X}$, $\mathcal{Y}$ be the HDAs from Figure \ref{fig:ur-not-ure}. Note that $\mathcal{Y}\subseteq \mathcal{X}$. Define a relation $\mathfrak{R}\subseteq \Path_{\mathcal{X}}\times \Path_{\mathcal{Y}}$ that contains all pairs $(\alpha,\alpha)$ for $\alpha\in \Path_{\mathcal{Y}}$ and all pairs $(\alpha,\beta)$, where $\alpha\in \Path_{\mathcal{X}}\setminus\Path_{\mathcal{Y}}$ (green area) and $\beta$ is the reflection of $\alpha$ along the vertical axis (blue area). Restrictions, extensions and similarities of paths in the green area stay in the green area (or in $\Path_{\mathcal{Y}}$) and then can be matched with similar operations performed in the blue area.
    Likewise, restrictions, extensions and similarities of paths of $\mathcal{X}$ that lay in $\mathcal{Y}$ stay within $\mathcal{Y}$ (or in the green area) and can be matched by themselves. Thus $\mathfrak{R}$ is a reST-bisimulation.

    Assume that $\mathcal{X}$ and $\mathcal{Y}$ are ureST-bisimilar. Then the path $\alpha\in \Path_{\mathcal{X}}$ must be matched by $\beta\in \Path_{\mathcal{Y}}$, which is the only path in $\mathcal{Y}$ that has the same ST-trace. There is a sequence of operations
    $
        \beta \unsubsu^2 \beta' \simeq^3 \beta'' \hookleftarrow \delta
    $
    which can be answered only by $\alpha \unsubsu^2 \alpha' \simeq^3 \alpha'' \hookleftarrow \gamma$. But now the extension of $\delta$ to $\delta'$ cannot be matched by $\gamma$.
\end{proof}

\begin{figure}
    \centering
    \begin{tikzpicture}[scale=1.8]
\def\colC{red}
\def\colD{orange}
\def\colA{green!80!black}
\def\colB{blue}
\def\colE{violet}
\def\colF{teal}
\def\colG{olive}
\def\sh{2.3}
\def\sk{4.5}
\def\sl{6.7}
\draw[fill, color=black!15] (0,0) -- (-2,1) -- (0,2) -- (2, 1) -- (0,0);
\node[state] (0) at (0,0) {};
\node[state] (a1) at (-2,1) {};
\node[state] (a2) at (-1,1) {};
\node[state] (a3) at (0,1) {};
\node[state] (a4) at (1,1) {};
\node[state] (a5) at (2,1) {};
\node[state] (1) at (0,2) {};
\node[state] (a0) at (-2,2) {};
\draw[->] (0) edge node[below left] {$b$} (a1);
\draw[->] (0) edge node[above] {$a$} (a2);
\draw[->] (0) edge node[right] {$b$} (a3);
\draw[->] (0) edge node[above] {$a$} (a4);
\draw[->] (0) edge node[below right] {$b$} (a5);
\draw[->] (a1) edge node[left] {$d$} (a0);
\draw[->] (a1) edge node[above left] {$a$} (1);
\draw[->] (a2) edge node[below] {$b$} (1);
\draw[->] (a3) edge node[right] {$a$} (1);
\draw[->] (a4) edge node[below] {$b$} (1);
\draw[->] (a5) edge node[above right] {$a$} (1);
\draw[->, very thick, green!50!black] (-1.92,1)--(-0.04,1.92);
\draw[->, very thick, green!50!black] (-0.04,1)--(-0.04,1.92);
\draw[->, very thick, blue!80!black] (1.92,1)--(0.04,1.92);
\draw[->, very thick, blue!80!black] (0.04,1)--(0.04,1.92);
\draw[->, very thick, red!50!black] (-1.96,1.08)--(0,2.06);
\draw[->, very thick, red!50!black] (1.96,1.08)--(0,2.06);
\node at (-1.2,2.2) {$\mathcal{A}$};
\node[below left, red!80!black] at (a1) {$x$};
\node[below left, red!80!black] at (a2) {$u$};
\node[below right, red!80!black] at (a3) {$y$};
\node[below right, blue!70!gray] at (a4) {$v$};
\node[below right, blue!70!gray] at (a5) {$z$};
\node[above] at (1) {$t$};
\node[red!80!black] at (-1.4,1) {$x_1$};
\node[red!80!black] at (-0.4,1) {$x_2$};
\node[orange!80!black] at (0.4,1) {$x_3$};
\node[blue!70!gray] at (1.4,1) {$x_4$};
\draw[arrow] (0,-0.2) -- (0);
\begin{scope}[shift={(4,0)}] 
\draw[fill, color=black!15] (-1,0) -- (-1,1) -- (0,2) -- (1, 1) -- (1,0) -- (0,1) -- (-1,0);
\node[state] (00) at (-1,0) {};
\node[state] (10) at (0,1) {};
\node[state] (20) at (1,0) {};
\node[state] (01) at (-1,1) {};
\node[state] (11) at (0,2) {};
\node[state] (21) at (1,1) {};
\draw[->,very thick] (00) edge node[below right] {$a$} (10);
\draw[->,very thick] (20) edge node[below left] {$a$} (10);
\draw[->] (01) edge node[above left] {$a$} (11);
\draw[->] (21) edge node[above right] {$a$} (11);
\draw[->] (00) edge node[left] {$c$} (01);
\draw[->] (10) edge node[right] {$c$} (11);
\draw[->] (20) edge node[right] {$c$} (21);
\node at (0,0) {$\mathcal{B}$};
\node[above right] at (10) {$t$};
\end{scope}
\end{tikzpicture}
    \caption{The HDA $\mathcal{X}$ is obtained be gluing $\mathcal{A}$ with two copies of $\mathcal{B}$: $\mathcal{B}_1$ along the edges $x\to t\gets y$ (green) and $\mathcal{B}_2$ along $y\to t\gets z$ (blue). The HDA $\mathcal{Y}$ is obtained by, additionally, gluing an extra copy $\mathcal{B}_3$ along $x\to t\gets z$ (brown).}
    \label{fig:fhp-not-hhp}
\end{figure}
\begin{theorem}    
    ${\bis{hhp}} \subsetneq {\bis{hqhp}}$.
\end{theorem}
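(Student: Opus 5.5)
The inclusion ${\bis{hhp}}\subseteq{\bis{hqhp}}$ is immediate, because an hhp-bisimulation satisfies every clause of an hqhp-bisimulation; the latter simply omits unsubsumption. For strictness I will use the HDAs $\mathcal X$ and $\mathcal Y$ of Fig.~\ref{fig:fhp-not-hhp}. $\mathcal X$ is $\mathcal A$ glued with $\mathcal B_1$ along $x\to t\gets y$ and with $\mathcal B_2$ along $y\to t\gets z$. $\mathcal Y$ additionally glues $\mathcal B_3$ along $x\to t\gets z$. So $\mathcal X\subseteq\mathcal Y$, and the two differ only in whether the pair of $a$-edges $x\to t$, $z\to t$ admits a $c$-square continuation. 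I must show two things: there is an $\{\subsu,\simeq,\hookleftarrow\}$-bisimulation between them, and there is no hhp-bisimulation.

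\textbf{Non-hhp-bisimilarity.} In $\mathcal Y$, take the path $\beta$ that runs $b$ to $z$, then $a$ through the edge of $\mathcal B_3$ into $t$, and then performs the $c$-step available in $\mathcal B_3$ (running $c$ concurrently with that final $a$). Using unsubsumption and similarity, move from $\beta$ to the sequential path through the triangle cell $x_4$, and then to the $a$-first path through $v$. Using restriction and extension, turn this into the interleaving that goes through $x$ and the $\mathcal B_3$ square. Every route from $\beta$ back to a path over $x$ passes through the bottom $2$-cells $x_1,\dots,x_4$, and each such step is an unsubsumption $\unsubsu$ or a similarity. An hhp-bisimulation must mirror this chain in $\mathcal X$. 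The key observation is that in $\mathcal X$ the chain starting from any path with the same ST-trace is forced to land on the ``wrong'' wing: in the relevant path it is $\mathcal B_2$ (over $z$) or $\mathcal B_1$ (over $x$) that carries the $c$-square, never both. Once we are back over $x$ (respectively $z$) without the square, a subsequent extension by $c$ concurrent with $a$ has no match. I would write this as a short chain like the one in the proof for Fig.~\ref{fig:ur-not-ure}: $\beta\unsubsu\beta_1\simeq\beta_2\simeq\beta_3\hookleftarrow\ldots\ininc\beta'$, answered uniquely in $\mathcal X$ until the last step fails. The crucial point is that the cycle $x\to y\to z\to x$ of gluings is parity-inconsistent in $\mathcal X$ but consistent in $\mathcal Y$.

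\textbf{hqhp-bisimilarity.} Define $\mathfrak R$ to relate each path of $\mathcal X$ to itself in $\mathcal Y$. Also relate each path of $\mathcal Y$ using $\mathcal B_3$ to a path of $\mathcal X$ using $\mathcal B_1$ or $\mathcal B_2$ with the same ST-trace, choosing $\mathcal B_1$ when the path passes through the $x$ side or through the cells $x_1,x_2$, and $\mathcal B_2$ when it passes through the $z$ side or $x_3,x_4$. Equivalently, fold $\mathcal B_3$ onto $\mathcal B_1$ or $\mathcal B_2$ according to the last bottom cell visited. The checks run as follows.
\begin{itemize}
\item Initial and label conditions are trivial.
\item Extension is handled because the folding maps are label-preserving morphisms on the relevant parts.
\item Restriction is handled because prefixes of related paths are related, since the choice of fold is determined by a prefix.
\item Similarity works because $\simeq$ only swaps start phases or end phases. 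Swapping the order in which $a$ and $b$ start at the bottom keeps a path in the same triangle cell, hence in the same fold.
\item Subsumption $\subsu$ only moves a path into a higher cell, from a vertex path into the adjacent $x_i$. The neighbouring vertex determines which $x_i$ is reached, and that in turn determines the fold consistently.
\end{itemize}
The only move that could cross from the $x_2$ side to the $x_3$ side is passing from $y$-paths through $x_2$ versus $x_3$. That move is an unsubsumption followed by a subsumption, and unsubsumption is exactly the clause hqhp lacks.

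\textbf{Main obstacle.} The hardest part will be making the hqhp case analysis airtight: verifying that no combination of $\subsu$, $\simeq$, $\ininc$ and $\hookleftarrow$ ever moves a $\mathcal B_3$-path from the region folded to $\mathcal B_1$ into the region folded to $\mathcal B_2$. I would first enumerate the paths through $y$, since $y$ is shared by both folds, and fix the fold for them by the $2$-cell ($x_2$ or $x_3$) adjacent to the step taken. For sequential $y$-paths that touch neither cell, I would map them identically, since these are not in $\mathcal B_3$.
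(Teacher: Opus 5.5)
\textbf{Verdict: your proposal does not go through, and the hqhp half cannot be repaired for this pair of HDAs.} The inclusion argument is fine. The strictness argument has a genuine gap in both halves.

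\textbf{The hqhp half.} Your fold already breaks at $y$. The path $\pi=0\xrightarrow{b}y\xrightarrow{a}t$, followed by the $c$-edge of $\mathcal B_3$ out of $t$, lies in $\Path_{\mathcal Y}\setminus\Path_{\mathcal X}$. So sequential $y$-paths cannot all be mapped identically. A single $\subsu^2$ (no unsubsumption is involved) takes $\pi$ into $x_2$ or into $x_3$. Hence whichever $\mathcal B_i$ you fold $\pi$ onto, one of the two subsumptions leaves your relation.

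More seriously, no relation works. Since $\mathcal B_3$ has no square on $y\to t$, there is no $\pi'$ with $\pi\subsu^4\pi'$. But every $\mathcal X$-path with trace $b^+b^1a^+a^3c^+c^5$ whose $b$ ends at $y$ does admit one, because both $\mathcal B_1$ and $\mathcal B_2$ have a square on $y\to t$. So $\pi$ can only be related to one of two paths:
\begin{itemize}
\item $0\xrightarrow{b}x\xrightarrow{a}t$ followed by the $c$ of $\mathcal B_2$, or
\item $0\xrightarrow{b}z\xrightarrow{a}t$ followed by the $c$ of $\mathcal B_1$.
\end{itemize}
Path restriction then relates $0\xrightarrow{b}y$ to $0\xrightarrow{b}x$ or to $0\xrightarrow{b}z$.
\begin{itemize}
\item The first pairing is refuted by $d^+$.
\item The second is refuted inside $\mathcal A$ using only $\ininc,\subsu,\simeq,\hookleftarrow$. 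From $y$: extend by $a^+$, apply $\subsu^2$ into $x_2$ and then $\simeq^1$, and restrict to the $a$-edge $0\to u$. Then extend by $a^1b^+$, apply $\subsu^2$ into $x_1$ and then $\simeq^1$, restrict to the $b$-edge $0\to x$, and extend by $b^1d^+$.
\item From $z$ the answers are forced through $x_4$ and then $x_3$ or $x_4$. They end on the $b$-edge into $y$ or $z$, where no $d$ can start.
\end{itemize}
So $\mathcal X\not\bis{hqhp}\mathcal Y$, and this example cannot witness ${\bis{hhp}}\subsetneq{\bis{hqhp}}$; a different separating example is needed. The paper's colouring relation has the same defect. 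It relates $\pi$ (red) to its $\mathcal B_1$-version, whose $\subsu^4$ cannot be answered. Its checks (a)--(c) only track subsumptions inside $\mathcal A$.

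\textbf{The non-hhp half.} This half also does not go through as written; three steps fail.
\begin{itemize}
\item Your chain is not answered uniquely in $\mathcal X$. A $\subsu^2$ on a sequential route through $u$, $y$ or $v$ can be answered through either adjacent square. So while the $\mathcal A$-prefix is transported from $z$ to $x$ by rounds $\subsu^2,\simeq^1,\simeq^3,\unsubsu^2$, the $\mathcal X$-partner can bounce back and forth. It can finish at a bottom vertex of its own $\mathcal B_i$, where the final move is matched.
\item Your final test fails. Each of $x\to t$, $y\to t$, $z\to t$ is the bottom of an $a$--$c$ square in $\mathcal X$, so starting $c$ concurrently with $a$ over $x$ or $z$ is always matchable.
\item Restriction followed by extension cannot replace a prefix through $v$ by one through $x$ while keeping the $\mathcal B_3$-suffix.
\end{itemize}
What is missing is an observable marker. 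The paper restricts to $b^+b^1$ and tests the $d$-transition, which is available only at $x$. It uses that $\simeq,\subsu,\unsubsu$ preserve the end cell. Hence any same-trace transform of an extension of the $\mathcal X$-path to $z$ by $c^+c^3a^+a^5$ still ends at the top of $\mathcal B_2$, and its $b^+b^1$-prefix ends at $y$ or $z$.
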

\begin{proof}
    Let $\mathcal{X}$, $\mathcal{Y}$ be HDAs illustrated in Figure \ref{fig:fhp-not-hhp}. First we show that they are not hhp-bisimilar. 
    In $\mathcal{Y}$, every path $\alpha$ with ST-trace $b^+b^1$ (there are three such paths, terminating in $x$ or $y$ or $z$) can be extended to a path $\beta$ with ST-trace $b^+b^1c^+c^3a^+a^5$ (there are two possible extension for each $\alpha$, we choose the one crossing $\mathcal{B}_1$ or $\mathcal{B}_3$), which in turn can be transformed using similarities, subsumptions and unsubsumptions into a path $\gamma$ with ST-trace $b^+b^1c^+c^3a^+a^5$ that finally can be restricted to a path $\delta$ with ST-trace $b^+b^1$ and then extended to a (unique) path $\zeta$ with ST-trace $b^+b^1d^+$.

    This cannot be done for the path $\alpha$ leading to $z$ in $\mathcal{X}$: the only possible extension $\beta$ terminates at the top vertex of $\mathcal{B}_2$ and then also $\gamma$ terminates at the same vertex. But the only possible $\delta$ terminates at $y$ or $z$ and thus cannot be extended to $\zeta$.

    Now we prove that $\mathcal{X}$ and $\mathcal{Y}$ are hqhp-bisimilar. 
    Denote $P_0=\Path_{\mathcal{A}}$ and $P_i=\Path_{\mathcal{A}\cup \mathcal{B}_i}\setminus \Path_{\mathcal{A}}$. Thus we have $\Path_{\mathcal{X}}=P_0\sqcup P_1\sqcup P_2$ and $\Path_{\mathcal{Y}}=P_0\sqcup P_1\sqcup P_2\sqcup P_3$.
    Further, we call a path red if it passes through $x$, $x_1$, $u$, $x_2$ or $y$, orange if it passes through $x_3$, and blue if it passes through $v$, $x_4$ or $z$.
    Let $\mathfrak{R}\subseteq \Path_{\mathcal{X}}\times \Path_{\mathcal{Y}}$ be the set of pairs $(\alpha,\beta)$ such that $\ST(\alpha)=\ST(\beta)$ and
    \begin{enumerate}
        \item If $\beta\in \Path(\mathcal{X})$, then $\alpha=\beta$.
        \item If $\beta\in P_3$, then the initial segments of $\alpha$ and $\beta$ that lie in $\mathcal{A}$ are equal.
        \item If $\beta\in P_3$ is red, then $\alpha\in P_1$.
        \item If $\beta\in P_3$ is blue, then $\alpha\in P_2$.
    \end{enumerate}
These conditions imply that a path $\beta$ of $\mathcal{Y}$ is related to exactly two paths of $\mathcal{X}$ if $\beta\in P_3$ and $\beta$ is orange, and to exactly one path of $\mathcal{X}$ otherwise.

     We have to check that the symmetric closure of $\mathfrak{R}$ is an hqhp-bisimulation. Initial paths are trivially related, and related paths have the same ST-traces. The path restriction condition holds by construction.   
    To see that the extension property holds it suffices to note that every red path in $P_0$ can be extended into a path in $P_1$ and every orange and every blue path in $P_0$ can be extended into a path in $P_2$. 

    The subsumption condition holds because
    (a) no further subsumption of an orange path is possible
    (b) applying subsumption to a blue path yields either a blue or an orange path
    (c) applying subsumption to a red path yields either a red or an orange path.
    The similarity condition holds because applying similarity to a path does not change its colour.

Note that the unsubsumption condition would fail, as this can turn an orange path into a red or blue one.
\end{proof}

\section{Generalised bisimulation logic}\label{sec: bis logic}

\begin{definition}\rm
    Let $I$ and $M$ be sets, and let $m_0\in M$ be a distinguished element.
    An \emph{$(I,M)$-system} is a tuple $(A,\alpha_0, \lambda,\{\ltimes_i\}_{i\in I})$, where
    \begin{itemize}
        \item $A$ is a set (of \emph{paths}) with a distinguished element $\alpha_0\in A$,
        \item $\lambda:A\to M$ is a function (a \emph{labelling}) such that $\lambda(\alpha_0)=m_0$,
        \item $\ltimes_i\subseteq A\times A$ is a relation for every $i\in I$.
    \end{itemize}
\end{definition}

\begin{example}
    Let $\mathcal X$ be an HDA\@. Then $(\Path_{\mathcal X}, i_{\mathcal X}, \ST, \{{\ininc},{\subsu}\})$ is a $(\{1,2\},\ST)$-system.
\end{example}

\begin{definition}\rm
\label{d:bisimulation}
    A \emph{bisimulation} between $(I,M)$-systems $A=(A,\alpha_0,\lambda^A, \{\ltimes^A_i\}_{i\in I})$ and $B=(B,\beta_0,\lambda^B, \{\ltimes^B_i\}_{i\in I})$ is a relation ${\mathfrak{R}}\subseteq A\times B$ such that
    \begin{enumerate}
        \item $\alpha_0\mathfrak{R} \beta_0$,
        \item
        \label{i:bisimilation-labels}
        $\alpha\mathfrak{R} \beta \implies \lambda^A(\alpha)=\lambda^B(\beta)$,
        \item
        \label{i:bisimilation-extension}
        $\alpha\mathfrak{R} \beta \;\land\; \alpha\ltimes^A_i \alpha' \implies \exists\; \beta':\; \alpha'\mathfrak{R} \beta' \;\land\; \beta\ltimes^B_i\beta'$,
        \item $\alpha\mathfrak{R} \beta \;\land\; \beta\ltimes^B_i \beta' \implies \exists\; \alpha':\; \alpha'\mathfrak{R} \beta' \;\land\; \alpha\ltimes^A_i\alpha'$,
    \end{enumerate}
    for all $\alpha,\alpha'\in A$, $\beta,\beta'\in B$, $i\in I$.
\end{definition}

\begin{definition}\rm
    The set $\lan{(I,M)}$ of \emph{$(I,M)$-formulas} is generated by
    $$
        F::=
        \top
        \mid
        \lnot F
        \mid
        \bigwedge_{j\in J} F_j
        \mid
        \triangle{m}
        \mid
        \langle i \rangle F,
    $$
    where $m\in M$ and $i\in I$. We write $[i] F := \lnot \langle i \rangle \lnot F$.
\end{definition}

\begin{definition}\rm
    Let $P$ be an $(I,M)$-system.
    The \emph{satisfaction relation} ${\models}\subseteq P\times \lan{(I,M)}$ is defined by structural induction:
    \begin{itemize}
        \item $\alpha\models\top$;
        \item $\alpha\models \lnot F \iff \lnot(\alpha\models F)$;
        \item $\alpha\models \bigwedge_{j\in J}F_j \iff \forall\;j\in J:\; \alpha\models F_j$;
        \item $\alpha\models \triangle m \iff \lambda(\alpha)=m$;
        \item $\alpha\models \langle i \rangle F \iff \exists \beta\in P:\; \alpha \ltimes_i \beta\;\land\; \beta\models F$.
    \end{itemize}
    A formula holds for an $(I,M)$-system $A=(A,\alpha_0,\lambda^A, \{\ltimes^A_i\}_{i\in I})$ iff it holds for $\alpha_0$.
\end{definition}

\begin{definition}\rm
    $(I,M)$-systems $P$ and $Q$ are \emph{logically equivalent} if $\alpha^P_0\models F \iff \alpha^Q_0\models F$ for every $F\in \lan{(I,M)}$.
\end{definition}

\begin{theorem}\rm\label{modal char}
    $(I,M)$-systems $P$ and $Q$ are bisimilar iff they are logically equivalent.
\end{theorem}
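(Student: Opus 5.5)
The proof is the standard Hennessy--Milner argument for infinitary modal logic; infinitary conjunctions let us avoid any image-finiteness assumption. We prove the two directions separately, working with states rather than whole systems. For a state $\alpha$ of $P$ and a state $\beta$ of $Q$, say $\alpha$ and $\beta$ are \emph{logically equivalent} if they satisfy the same formulas of $\lan{(I,M)}$. They are \emph{bisimilar} if some relation $\mathfrak R$ satisfying Clauses 2--4 of Definition~\ref{d:bisimulation} contains $(\alpha,\beta)$. The theorem is then the instance $\alpha=\alpha^P_0$, $\beta=\alpha^Q_0$, since Clause 1 holds for such an $\mathfrak R$ exactly when it contains the pair of initial states.

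\emph{Bisimilar implies logically equivalent.} Fix a bisimulation $\mathfrak R$ and show by structural induction on $F$ that $\alpha\mathfrak R\beta$ implies ($\alpha\models F\iff\beta\models F$).
\begin{itemize}
\item The case $\top$ is trivial.
\item The cases $\lnot F$ and $\bigwedge_{j\in J}F_j$ follow directly from the induction hypothesis.
\item The case $\triangle m$ is Clause~\ref{i:bisimilation-labels}.
\item For $\langle i\rangle F$: if $\alpha\ltimes_i\alpha'$ with $\alpha'\models F$, Clause~\ref{i:bisimilation-extension} gives $\beta'$ with $\beta\ltimes_i\beta'$ and $\alpha'\mathfrak R\beta'$. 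The induction hypothesis gives $\beta'\models F$. The converse direction uses Clause 4 symmetrically.
\end{itemize}
Applying this to $\alpha^P_0\mathfrak R\alpha^Q_0$ gives logical equivalence of $P$ and $Q$.

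\emph{Logically equivalent implies bisimilar.} Define $\mathfrak R=\{(\alpha,\beta)\in P\times Q \mid \alpha \text{ and } \beta \text{ satisfy the same formulas}\}$. Clause 1 is the hypothesis. Clause 2 holds because $\alpha\models\triangle\lambda^P(\alpha)$, so $\beta\models\triangle\lambda^P(\alpha)$, hence $\lambda^Q(\beta)=\lambda^P(\alpha)$.

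Clause 3 is the main step. Suppose $\alpha\mathfrak R\beta$ and $\alpha\ltimes_i\alpha'$, but no $i$-successor of $\beta$ is $\mathfrak R$-related to $\alpha'$. Then for each $\beta'$ with $\beta\ltimes_i\beta'$ pick a formula $F_{\beta'}$ with $\alpha'\models F_{\beta'}$ and $\beta'\not\models F_{\beta'}$. If some $\beta'$ instead satisfies a formula that $\alpha'$ fails, its negation serves. Put $G=\bigwedge_{\beta'}F_{\beta'}$ over the set of $i$-successors of $\beta$; if that set is empty, $G=\top$. Then $\alpha\models\langle i\rangle G$ but $\beta\not\models\langle i\rangle G$, contradicting $\alpha\mathfrak R\beta$. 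Clause 4 is symmetric.

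The only delicate point is this choice of distinguishing formulas. It needs the index set $J$ of conjunctions to range over arbitrary sets, here the set of successors of $\beta$ in $Q$, which may be infinite; the grammar of $\lan{(I,M)}$ allows this. It also uses the axiom of choice to select one $F_{\beta'}$ for each $\beta'$. We restrict to index sets that are sets so that formulas form a proper class but each conjunction is legitimate. No finiteness hypothesis on $I$, $M$ or the relations $\ltimes_i$ is then needed.
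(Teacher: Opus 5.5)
Your proof is correct and follows essentially the paper's argument: structural induction on formulas using Clauses 2--4 for the forward direction, and for the converse, taking logical equivalence of states as the relation and deriving a contradiction from $\langle i\rangle\bigwedge_{j}F_j$. Your explicit handling of an empty successor set, of negating a formula when needed, and of the choice of distinguishing formulas makes precise points that the paper's proof leaves implicit.
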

\begin{proof}
  ($\Rightarrow$) Assume that $\mathfrak{R}$ is a bisimulation between $P$ and $Q$.
  By induction on $F$, we show that for all $F\in\lan{(I,M)}$ the formula
    $$
        E(F)::=
        \forall\alpha\in P,\;
        \forall\beta\in Q:\;
        \alpha\mathfrak{R} \beta 
        \implies 
        (\alpha\models F \iff \beta\models F).
    $$
    is satisfied. It is clear that $E(\top)$, $E(F)\implies E(\lnot F)$ and
    $
        (\forall_{j\in J}:\; E(F_j))
        \implies 
        E(\bigwedge_{j\in J}F_j)
    $.
    The formula $E(\triangle P)$ follows from condition \ref{d:bisimulation}.\ref{i:bisimilation-labels}.
    Assume that $\alpha\in P$, $\beta\in Q$ and $\alpha\mathfrak{R} \beta$. Then 
    \begin{align*}
        \alpha\models \langle i \rangle F
        \iff
        &\exists_{\alpha'\in A}\;\alpha\ltimes^A_i\alpha'\;\land\; \alpha'\models F
        \\
        \implies
        &
        \exists_{\alpha'\in A}\;
        \exists_{\beta'\in B}\;
        \alpha'\mathfrak{R}\beta'\;\land\;
        \beta\ltimes^B_i\beta'\;\land\;
        \alpha'\models F
        \tag{by \ref{d:bisimulation}.\ref{i:bisimilation-extension}}
        \\
        \implies
        &
        \exists_{\beta'\in B}\;
        \beta\ltimes_i\beta'\;\land\;
        \beta'\models F\tag{by inductive assumption}
        \\
        \iff
        &
        \beta\models \langle i \rangle F.
    \end{align*}
    Similarly we show that $\beta\models \langle i \rangle F$ implies $\alpha\models \langle i \rangle F$.
    Consequently, $E(F)\implies E(\langle i \rangle F)$.

    ($\Leftarrow$)
    Assume that $A$ and $B$ and logically equivalent. Define a relation ${\mathfrak{R}}\subseteq
    A\times B$ by
    $$
        \alpha\mathfrak{R} \beta
        \iff 
        \forall_{F\in\lan{(I,M)}}
        (\alpha\models F \iff \beta\models F).
    $$
    By assumption, $\alpha_0\mathfrak{R}\beta_0$. Moreover, if  $\alpha\mathfrak{R}\beta$ then for all $m\in M$ one has $\alpha\models \triangle m$ iff $\beta\models \triangle m$, and thus $\lambda(\alpha)=m$ iff $\lambda(\beta)=m$. Hence also the second clause of Def.~\ref{d:bisimulation} is satisfied.
    Assume that $\alpha\ltimes^A_i\alpha'$ and $\alpha\mathfrak{R}\beta$ for $\alpha,\alpha'\in A$ and $\beta\in B$. We need to show that there exists $\beta'\in B$ such that $\beta\ltimes_i\beta'$ and $\alpha'\mathfrak{R}\beta'$. Denote
    $$
        C=\{\beta'_j\}_{j\in J} = 
        \{\beta'\in B\mid \beta\ltimes_i \beta'\}.
    $$
    If there exists $\beta'_j\in B$ such that $\alpha'\models F \iff \beta'_j\models F$ for all $F$, then we have $\alpha'\mathfrak{R} \beta'_j$. Otherwise, for every $j\in J$ there exists $F_j\in\lan{(I,M)}$ such that $\alpha'\models F_j$ and $\beta'_j\models \lnot F_j$. Finally,
    $$
        \alpha \models \langle i \rangle\bigwedge_{j\in J}F_j
        \qquad\text{and} \qquad
        \beta \models \lnot\; \langle i \rangle\bigwedge_{j\in J}F_j,
    $$
    a contradiction. 
\end{proof}

\section{Modal characterisations}\label{sec: Modal char}

Depending on the semantic equivalence we want to focus on,
every HDA $\mathcal X$ defines an $(I,M)$-system $(A,\alpha_0, \lambda,\{\ltimes_i\}_{i\in I})$
  with $M :=$ the set of ST traces, $A:=\Path_{\mathcal{X}}$, $\alpha_0 := i_{\mathcal X}$,
  $\lambda := \ST:\Path_{\mathcal{X}}\to M$, and
    \begin{itemize}
    \item (ST) $I:=\{{\ininc}\}$,
    \item (hp) $I:=\{{\ininc}, {\subsu^{\ell}}, {\sqsupseteq^\ell}, {\simeq^\ell} \mid \ell\in\IN\}$,
    \item (hhp) $I:=\{{\ininc}, {\subsu^{\ell}}, {\sqsupseteq^\ell}, {\simeq^\ell}, {\hookleftarrow} \mid \ell\in\IN\}$,
    \item (qhp) $I:=\{{\ininc}, {\subsu^{\ell}}, {\simeq^\ell} \mid \ell\in\IN\}$,
    \item (shp) $I:=\{{\ininc}, {\subsu^{\ell}} \mid \ell\in\IN\}$,
    \end{itemize}
    etc., with $\ltimes_{\ininc} := {\ininc} \subseteq \Path_{\mathcal{X}}\times\Path_{\mathcal{X}}$,
    and likewise for the other elements of $I$.

    By Thm.~\ref{modal char}, this immediately yields a modal characterisation for each of the eight semantic equivalences above. In the case of hhp-bisimilarity, for instance, the set of formulae is generated by
   $$
        F::=
        \top
        \mid
        \lnot F
        \mid
        \bigwedge_{j\in J} F_j
        \mid
        \triangle{m}
        \mid
        \langle \ininc \rangle F
        \mid
        \langle {\subsu^{\ell}} \rangle F
        \mid
        \langle {\sqsupseteq^\ell} \rangle F
        \mid
        \langle {\simeq^\ell} \rangle F
        \mid
        \langle {\hookleftarrow} \rangle F,
    $$
        with $m$ an ST trace .
We address this logic as $\lan{hhp}$, and similarly for its fragments.

\begin{example}
  We can show that the HDAs of Fig \ref{fi: ST bisimilar HDA} are not shp-bisimilar by the distinguishing $\lan{shp}$ formula $F:=  \langle \ininc \rangle \textcolor{red}{\big(}\Lambda(a^+ a^1 b^+ b^3) \wedge \neg \langle \subsu^2 \rangle \top\textcolor{red}{\big)}$. Here $\langle \ininc \rangle\varphi$ is a formula that holds for an HDA iff it has a path that satisfies $\varphi$, namely an extension of the initial path.  The formula $F$ says that there exists a path with ST-trace $a^+ a^1 b^+ b^3$ (thus executing $a$ and $b$ in succession) that has no subsumption at position 2.
It holds for the right-hand HDA, but not for the left one.
\end{example}

\begin{example}
  The HDAs of Fig.~\ref{fig:projected HDA} are not qhp-bisimilar, as witnessed by the $\lan{qhp}$ formula $F:=  \langle \ininc \rangle \textcolor{red}{\big(}\Lambda(a^+c^+a^1c^2b^+d^+b^5d^6)\wedge \langle \subsu^4 \rangle \top \wedge \langle \simeq^3\rangle  \langle \simeq^5\rangle \langle \subsu^4 \rangle \top \textcolor{red}{\big)}$.
   This formula says that there exists a path with the indicated ST-trace,
   that allows a subsumption at position 4, and also allows similarities at positions 3 and 5, following by a subsumption at position 4.
It holds for the right-hand HDA of Fig.~\ref{fig:projected HDA}, but not for the left one.
\end{example}

\begin{example}\label{ex:absorption-law}
The two HDAs of Fig.~\ref{fig:absorption law}
are distinguished by the following $\lan{hhp}$-formula:
$$F:= \langle \ininc \rangle \textcolor{red}{\big(}\Lambda(a^+b^+) \wedge \langle \hookleftarrow \rangle (\Lambda(a^+) \wedge \neg \langle \hookrightarrow \rangle \Lambda(a^+ c^+)) \wedge \langle \simeq^1 \rangle \langle \hookleftarrow \rangle  (\Lambda(b^+) \wedge \neg \langle \hookrightarrow \rangle \Lambda(b^+ c^+))\textcolor{red}{\big)}\;.$$
It says that there exists a path in which first $a$ and then $b$ start, such that
(1) the restriction of this path to the start of $a$ cannot be extended with the start of $c$, and (2) after swapping the starts of $a$ and $b$, the restriction of this path to the start of $b$ cannot be extended with the start of $c$. This formula holds only for the right-hand side of Fig.~\ref{fig:absorption law}. The formulas sits even in $\lan{reST}$.
\end{example}

\begin{example}
  The HDAs $\mathcal X_1$ and $\mathcal X_2$ of Fig.~\ref{fig:hp-not-qhp2} can be distinguished with the formula $[\hookrightarrow]\textcolor{red}{\big(} \Lambda(a^+) \rightarrow  \langle \hookrightarrow \rangle \textcolor{blue}{\big(}\Lambda(a^+ a+) \wedge  \langle \simeq \rangle \langle \hookleftarrow \rangle \textcolor{magenta}(\Lambda(a^+) \wedge  \langle \hookrightarrow \rangle \Lambda(a^+ b+) \textcolor{magenta})\textcolor{blue}{\big)}\textcolor{red}{\big)}$, which holds for $\mathcal X_1$ but not for $\mathcal X_2$. It says that any path with label $a^+$ can be extended to have label $a^+ a^+$, then undergo a similarity swap, following by a restriction to the label $a^+$, so that the result of that can be extended to have label $a^+ b^+$.
  This is an $\lan{reST}$-formula, showing that $\mathcal X_1 \not\bis{reST} \mathcal X_2$.

  They can also be distinguished by
  $$F:=  [\hookrightarrow] \textcolor{red}{\big(} \Lambda(a^+a^1 a^+a^3 b^+) \rightarrow  \langle \subsu^2 \rangle  \langle \simeq^1 \rangle\langle \simeq^3\rangle \langle \unsubsu^2 \rangle\langle \subsu^4 \rangle \Lambda(a^+a^1 a^+ b^+ a^3)\textcolor{red}{\big)}\;.$$
It says that on any path with the indicated ST-trace one can apply the operators $\subsu^2$, $\simeq^1$, $\simeq^3$, $\unsubsu^2$ and $\subsu^4$, in that order, to obtain a path with label $a^+a^1 a^+ b^+ a^3$. Again it holds for $\mathcal X_1$ but not for $\mathcal X_2$.
This is an $\lan{hp}$-formula, showing that $\mathcal X_1 \not\bis{hp} \mathcal X_2$.
\end{example}

\bibliography{hda}
\end{document}